%% file: neurips_2026_arxiv.tex
\documentclass{article}

 \usepackage[preprint]{neurips_2026}

\usepackage[utf8]{inputenc} % allow utf-8 input
\usepackage[T1]{fontenc}    % use 8-bit T1 fonts
\usepackage{hyperref}       % hyperlinks
\usepackage{url}            % simple URL typesetting
\usepackage{booktabs}       % professional-quality tables
\usepackage{amsfonts}       % blackboard math symbols
\usepackage{nicefrac}       % compact symbols for 1/2, etc.
\usepackage{microtype}      % microtypography
\usepackage{xcolor}         % colors

\usepackage{algorithm}
\usepackage{algorithmic}
\usepackage{mathtools}
\usepackage{graphicx}
\usepackage{dsfont}
\usepackage{bm}

\usepackage{amsmath,amsthm,amssymb}
\usepackage{mathrsfs}
\usepackage{arydshln} % 加载虚线支持包
\usepackage{subcaption}
\usepackage{wrapfig}
\usepackage{multirow}
\usepackage{enumitem} % 控制itemize行宽
\usepackage[table]{xcolor} % 用于表格颜色
\usepackage[most]{tcolorbox}
\usepackage{makecell}

\usepackage{xcolor}
\usepackage{fancyvrb}
\usepackage{fvextra}

\tcbset{
  promptbox/.style={
    enhanced,
    breakable,
    colback=white,
    colframe=blue!35!black,
    colbacktitle=blue!15,
    coltitle=black,
    fonttitle=\bfseries,
    boxrule=1.2pt,
    arc=2mm,
    left=2mm,
    right=2mm,
    top=2mm,
    bottom=2mm,
  }
}

\theoremstyle{plain}
\newtheorem{theorem}{Theorem}[section]
\newtheorem{proposition}[theorem]{Proposition}
\newtheorem{lemma}[theorem]{Lemma}

\theoremstyle{definition}

\theoremstyle{remark}

\newcommand{\E}{\mathbf{E}}

\newcommand{\Var}{\operatorname{Var}}

\title{Error-Aware Reverse Auction Mechanism for Large Language Model Routing}

\author{
\textbf{Haolong Chen}$^{1,2,3,*}$, \textbf{Zhengyuan Xin}$^{2,3,*}$, \textbf{Liang Zhang}$^{4}$, \textbf{Lei Xue}$^{4}$, \textbf{Guangxu Zhu}$^{1,2,3,5}$ \\
$^1$Shenzhen International Center for Industrial and Applied Mathematics \\
$^2$Shenzhen Research Institute of Big Data \\
$^3$The Chinese University of Hong Kong, Shenzhen \\
$^4$Shenzhen Campus of Sun Yat-sen University \\
$^5$Shenzhen Loop Area Institute \\
$^*$Equal Contribution\\
\texttt{\{haolongchen1, zhengyuanxin\}@link.cuhk.edu.cn,} \\
\texttt{\{zhangliang27, xuelei3\}@mail.sysu.edu.cn,} \texttt{gxzhu@sribd.cn}
}

\begin{document}

\maketitle

\begin{abstract}
Routing each query to a cost-effective large language model (LLM) is critical for balancing quality and cost, yet most routers rely on a centralized task center to predict model performance, creating an information-risk mismatch and a scalability bottleneck as the model pool grows. We propose a market-based routing paradigm that shifts ex-ante prediction to LLM providers via a reverse auction, where providers bid with self-predicted success probabilities and execution costs. To account for inherently noisy provider predictions and center evaluations, we introduce the \textit{\textbf{E}rror-\textbf{A}ware \textbf{R}everse \textbf{A}uction \textbf{M}echanism} (EA-RAM), which explicitly models this inherent Dual Error. We prove that EA-RAM is Bayesian incentive compatible and individually rational under the Dual Error, establish sufficient conditions for center rationality, and derive an explicit welfare-loss bound.
We further identify robustness effects: opposite-signed errors can cancel, vanishing-tail link functions (e.g., logistic) stabilize clear-cut cases via saturation, and extra noise smooths belief maps, reducing the gains from marginal manipulation.
Experiments on simulations and real-world benchmarks show that EA-RAM is robust to the Dual Error and achieves a better cost--performance Pareto frontier than centralized baselines, with additional gains when providers contribute local information, validating its practical effectiveness.
\end{abstract}

\input{section/1_Intro}

\input{section/2_Method}

\input{section/3_Experiments}

\input{section/4_Related}

\input{section/5_Conclusion}

% ===================== ACKNOWLEDGEMENT ONLY FOR FINAL PAPER =====================
% \begin{ack}
% Use unnumbered first level headings for the acknowledgments. All acknowledgments
% go at the end of the paper before the list of references. Moreover, you are required to declare
% funding (financial activities supporting the submitted work) and competing interests (related financial activities outside the submitted work).
% More information about this disclosure can be found at: \url{https://neurips.cc/Conferences/2026/PaperInformation/FundingDisclosure}.

% Do {\bf not} include this section in the anonymized submission, only in the final paper. You can use the \texttt{ack} environment provided in the style file to automatically hide this section in the anonymized submission.
% \end{ack}
% ===================== ACKOWLEDGEMENT ONLY FOR FINAL PAPER =====================

\bibliographystyle{unsrt}
\bibliography{ref}

% \section*{References}

% References follow the acknowledgments in the camera-ready paper. Use unnumbered first-level heading for
% the references. Any choice of citation style is acceptable as long as you are
% consistent. It is permissible to reduce the font size to \verb+small+ (9 point)
% when listing the references.
% Note that the Reference section does not count towards the page limit.
% \medskip

% {
% \small

% [1] Alexander, J.A.\ \& Mozer, M.C.\ (1995) Template-based algorithms for
% connectionist rule extraction. In G.\ Tesauro, D.S.\ Touretzky and T.K.\ Leen
% (eds.), {\it Advances in Neural Information Processing Systems 7},
% pp.\ 609--616. Cambridge, MA: MIT Press.

% [2] Bower, J.M.\ \& Beeman, D.\ (1995) {\it The Book of GENESIS: Exploring
%   Realistic Neural Models with the GEneral NEural SImulation System.}  New York:
% TELOS/Springer--Verlag.

% [3] Hasselmo, M.E., Schnell, E.\ \& Barkai, E.\ (1995) Dynamics of learning and
% recall at excitatory recurrent synapses and cholinergic modulation in rat
% hippocampal region CA3. {\it Journal of Neuroscience} {\bf 15}(7):5249-5262.
% }

%%%%%%%%%%%%%%%%%%%%%%%%%%%%%%%%%%%%%%%%%%%%%%%%%%%%%%%%%%%%

\newpage
\tableofcontents

\newpage
\input{section/6_Appendix}

%%%%%%%%%%%%%%%%%%%%%%%%%%%%%%%%%%%%%%%%%%%%%%%%%%%%%%%%%%%%

% \newpage
% \input{checklist.tex}

\end{document}

%% file: section/1_Intro.tex
\section{Introduction}
\label{sec:intro}

\begin{wrapfigure}{r}{0.62\textwidth}
    \vspace{-1.2em}
    \centering
    \includegraphics[width=0.62\textwidth]{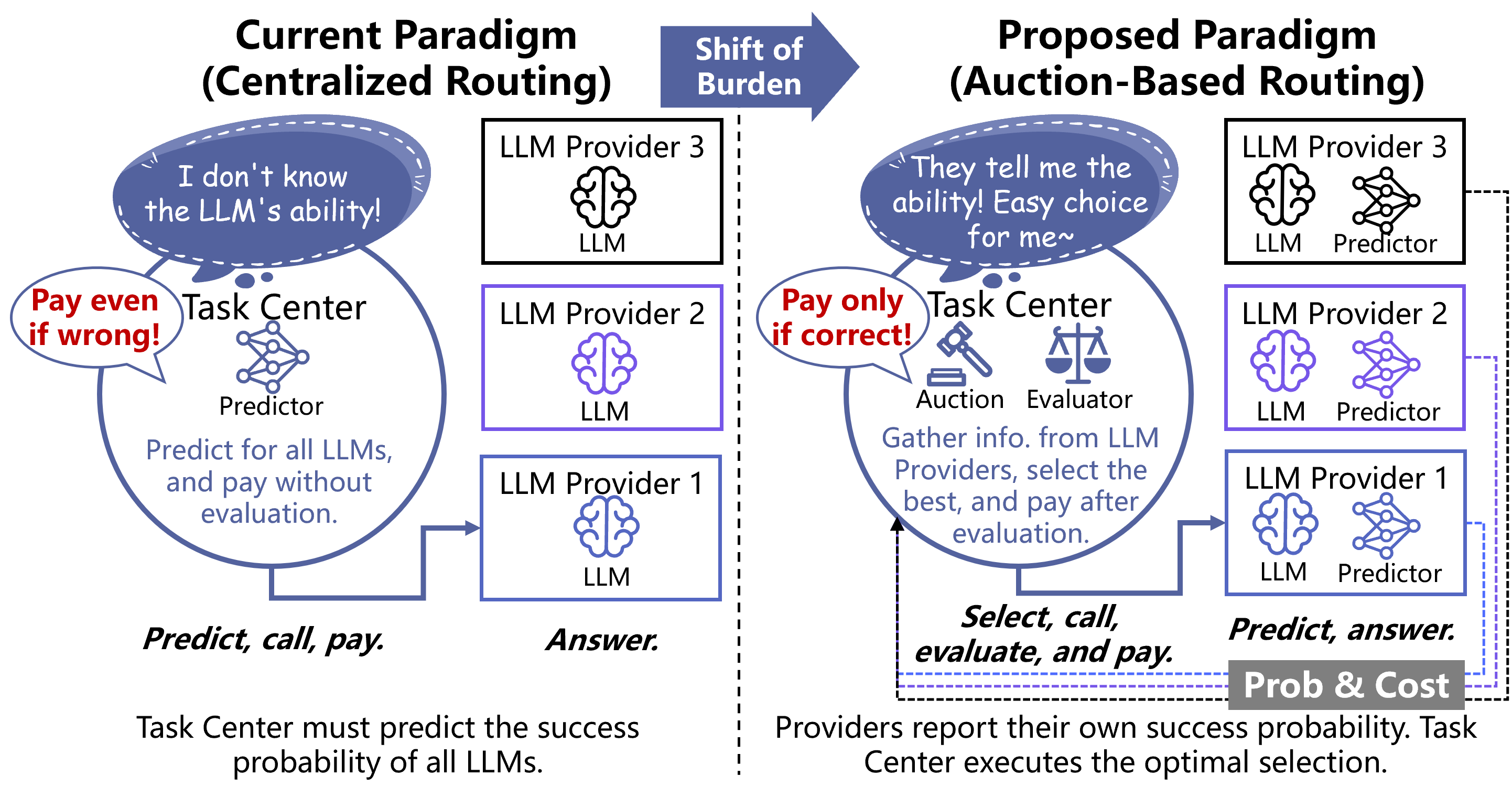}
    \vspace{-1.4em}
    \caption{Paradigm shift from traditional Centralized Routing to our proposed Auction-Based Routing.}
    \label{fig:paradigm_shift}
    \vspace{-1.2em}
\end{wrapfigure}

% Large language models (LLMs) underpin many intelligent applications~\cite{zhao2023survey, guo2024large, chen2026overview}, yet the expanding model ecosystem makes it increasingly necessary to route each query to a cost-effective model. Model capabilities and costs are highly heterogeneous: frontier giant models deliver strong reasoning at a high price, while specialized small models can be preferable for simpler or domain-specific requests. Consequently, \emph{LLM routing} is essential for optimizing the cost--performance trade-off.

Large language models (LLMs) underpin many intelligent applications~\cite{zhao2023survey, guo2024large, chen2026overview}, yet the expanding model ecosystem makes it increasingly necessary to route each query to a cost-effective model. Because model capabilities and costs vary widely, from expensive frontier models to cheaper specialized ones, \emph{LLM routing} is essential for optimizing the cost--performance trade-off.

Most existing routers follow a centralized estimation paradigm, where the task center predicts each model's performance~\cite{zhuang2025embedllm, ong2025routellm, wang2025icl}. It has two structural limitations. \textit{(i) Information--risk mismatch:} the task center bears the risk of failure but has less information about the LLMs, whereas providers hold richer private knowledge. Commercial routers\footnote{e.g., OpenRouter~\cite{openrouter_auto_router} and Requesty~\cite{requesty_smart_routing}} largely retain this paradigm and thus inherit the same inefficiency. \textit{(ii) Scalability bottleneck:} the center must profile or train for every new model, incurring per-model overhead even for training-free routers~\cite{wang2025icl}, which hinders rapid expansion.

To address these limitations, we propose a market-based routing paradigm that transfers ex-ante prediction to LLM providers via a reverse auction (Figure~\ref{fig:paradigm_shift}). The task center acts as the buyer and solicits bids from providers, who report their self-predicted success probabilities and costs; the center maintains only a model-agnostic ex-post evaluator. This distributed design aligns information with risk and removes the need for model-by-model profiling for the buyer, thereby improving scalability.

Similar paradigm shifts have proven successful in other mature domains. For example, moving from static allocation to market-based mechanisms such as real-time bidding in advertising~\cite{yuan2013real, wang2017display} and auction-based spectrum allocation~\cite{cramton1997fcc, milgrom2004putting}.
However, transferring this idea to LLM routing is \emph{not} straightforward. Designing such a mechanism requires a new theory because LLM routing operates under an inherent \emph{Dual Error}: providers' ex-ante success estimates are subjective and noisy, and the center's ex-post evaluation is imperfect.
This setting departs from prior fault-tolerant allocation mechanisms~\cite{porter2008fault, takahashi2018strategic, bhatt2025coalesce}, which typically assume \emph{ideal observability}---i.e., task outcomes are objectively verifiable and/or agents' success probabilities are common knowledge---and thus can distort incentives if such assumptions are violated.

We therefore propose the \textit{\textbf{E}rror-\textbf{A}ware \textbf{R}everse \textbf{A}uction \textbf{M}echanism} (EA-RAM)
% \footnote{Code page: anonymous.4open.science/r/EARAM-8E0D}.
Crucially, EA-RAM explicitly models the inherent Dual Error by characterizing both providers' subjective error in ex-ante prediction and the task center's error in ex-post evaluation.
We characterize equilibrium bidding and prove that EA-RAM is Bayesian incentive compatible (BIC) and individually rational (IR) under Dual Error; we further provide sufficient conditions for center rationality (CR) and derive an explicit upper bound on welfare loss relative to the error-free benchmark. 
Beyond these, we uncover three structural robustness effects: ex-ante and ex-post errors with opposite signs can offset each other; when the link function has vanishing tails (e.g., logistic), large-margin (clear-cut) instances lie in the saturated region and are thus insensitive to noise; and extra independent noise smooths the belief maps, reducing their maximum slope and thereby limiting the gains from marginal manipulation.
Extensive simulations and real-world experiments show that EA-RAM is robust to Dual Error and achieves a superior cost--performance trade-off over state-of-the-art centralized baselines, especially when leveraging providers' local information. Our main contributions are summarized as follows:
\begin{enumerate}[itemsep=0em, topsep=0em, leftmargin=1em] 
    \item We propose EA-RAM, a market-based routing framework that shifts ex-ante prediction to LLM providers, addresses centralized routers' information--risk mismatch and per-model profiling bottleneck, and models LLM routing as an auction under explicit Dual Error.
    \item We establish theoretical guarantees under Dual Error: EA-RAM satisfies BIC and IR, admits sufficient conditions for CR, and enjoys a welfare-loss bound; we further reveal robustness insights, including error compensation, saturation stability, and noise-induced flattening.
    \item Extensive experiments on simulations and real-world benchmarks show that EA-RAM is robust to Dual Error and improves economic efficiency compared to state-of-the-art centralized baselines.
\end{enumerate}

%% file: section/2_Method.tex
\section{Error-Aware Reverse Auction Mechanism}
\label{sec:error-aware-setup}

\begin{figure*}[t]
    \centering
    \includegraphics[width=1\linewidth]{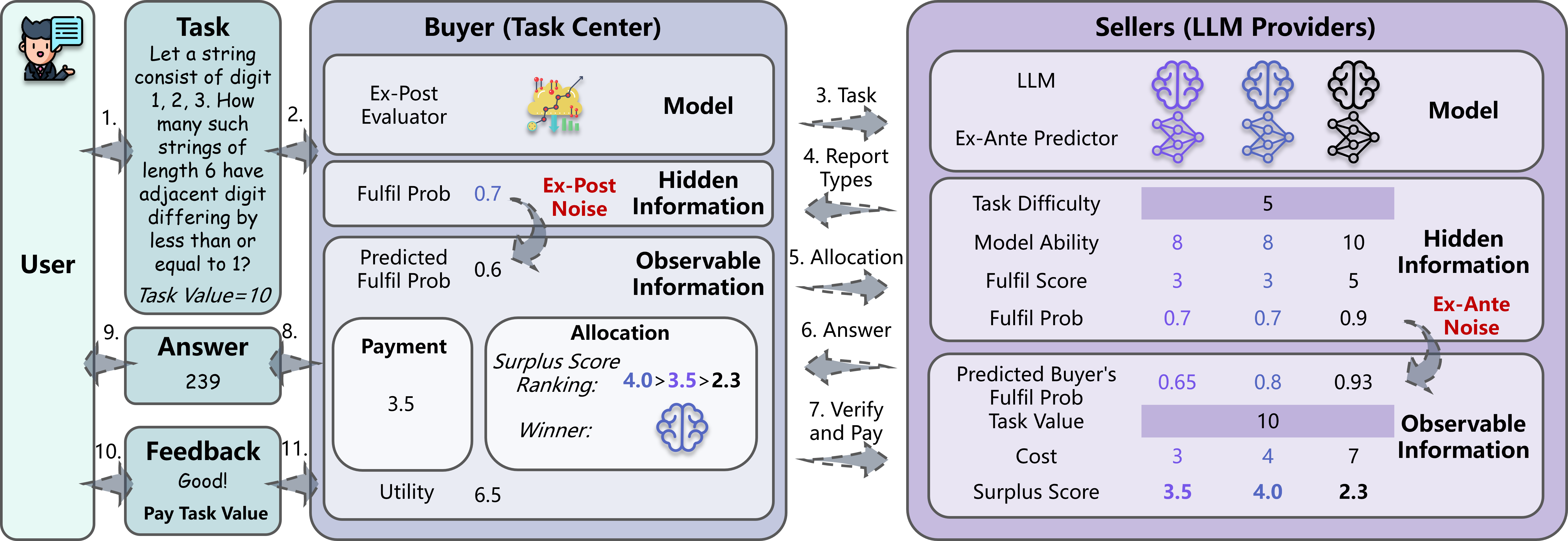}
    % \caption{EA-RAM pipeline for LLM routing.
    % (1) Bidding: each LLM provider uses a private ex-ante predictor to submit a bid.
    % (2) Allocation: the task center selects the winner based on reported surplus and routes the query.
    % (3) Execution: the chosen model generates an answer.
    % (4) Evaluation \& Payment: the center applies an ex-post evaluator and settles payments.
    % (5) Feedback: the user receives the answer and pays the task value to the center if the response satisfies the task demand.}
    \caption{EA-RAM for LLM routing.
    (1) Providers bid using ex-ante predictions.
    (2) The buyer allocates the query by reported surplus.
    (3) The selected model answers.
    (4) The buyer evaluates the output and settles payment.
    (5) The user receives accepted output and pays the task value.}
    \label{fig:main}
    \vspace{-10pt}
\end{figure*}

% We operationalize LLM routing as a strategic reverse auction between the Task Center (buyer) and LLM Providers (sellers). As illustrated in Figure~\ref{fig:main}, the mechanism follows a closed-loop workflow: providers submit bids based on private ex-ante predictions, the center executes optimal allocation, and payments are settled after ex-post evaluation.

We formulate LLM routing as a strategic reverse auction between the task center and LLM providers. As shown in Figure~\ref{fig:main}, providers submit ex-ante bids, the center allocates the query, and payments are settled after ex-post evaluation.

\subsection{Basic Setting}\label{sec:basic_setting}

% We model the routing environment as an interaction between a task center (the \textit{buyer}) and a set of risk-neutral\footnote{The risk-neutrality assumption is a reasonable approximation for repeated, high-volume platform settings, where providers optimize long-run expected profit over many tasks and the risk of any single query is small relative to their overall portfolio.} LLM providers (the \textit{sellers}), denoted by $\mathcal{I}=\{1,\dots,N\}$, over a set of tasks $\mathcal{T}$. Given that \textit{tasks} $t \in \mathcal{T}$ are independent and non-combinatorial, the mechanism naturally decomposes into individual instances, allowing us to omit the task superscript $t$ for brevity. Each task has an economic \textit{value} $V>0$, which is common knowledge observable by both the buyer and all sellers. The buyer realizes this value if and only if the task demand is fulfilled ($\mu=1$). Each seller $i$ possesses a private \textit{type} $\theta_i=(p_i, c_i)$, where $c_i>0$ is the execution cost incurred upon selection, and $p_i$ represents the true \textit{fulfillment probability} characterizing the binary \textit{fulfillment indicator} $\mu_i \sim \mathrm{Bernoulli}(p_i)$.

We consider a task center (the \textit{buyer}) and a set of risk-neutral\footnote{Risk neutrality is reasonable in repeated, high-volume platform settings, where providers optimize long-run expected profit and single-query risk is small relative to their portfolio.} LLM providers (the \textit{sellers}) $\mathcal{I}=\{1,\dots,N\}$ over independent, non-combinatorial \textit{tasks} $t\in\mathcal{T}$. Hence, we omit the superscript $t$. Each task has a common-knowledge \textit{value} $V>0$, realized by the buyer iff the task demand is fulfilled ($\mu=1$). Seller $i$ has private \textit{type} $\theta_i=(p_i,c_i)$, where $c_i>0$ is the \textit{execution cost} and $p_i$ is the true \textit{fulfillment probability} of the binary \textit{fulfillment indicator} $\mu_i\sim\mathrm{Bernoulli}(p_i)$.

\begin{wrapfigure}{r}{0.52\textwidth}
    \vspace{-1.5em}
    \centering
    \includegraphics[width=0.52\textwidth]{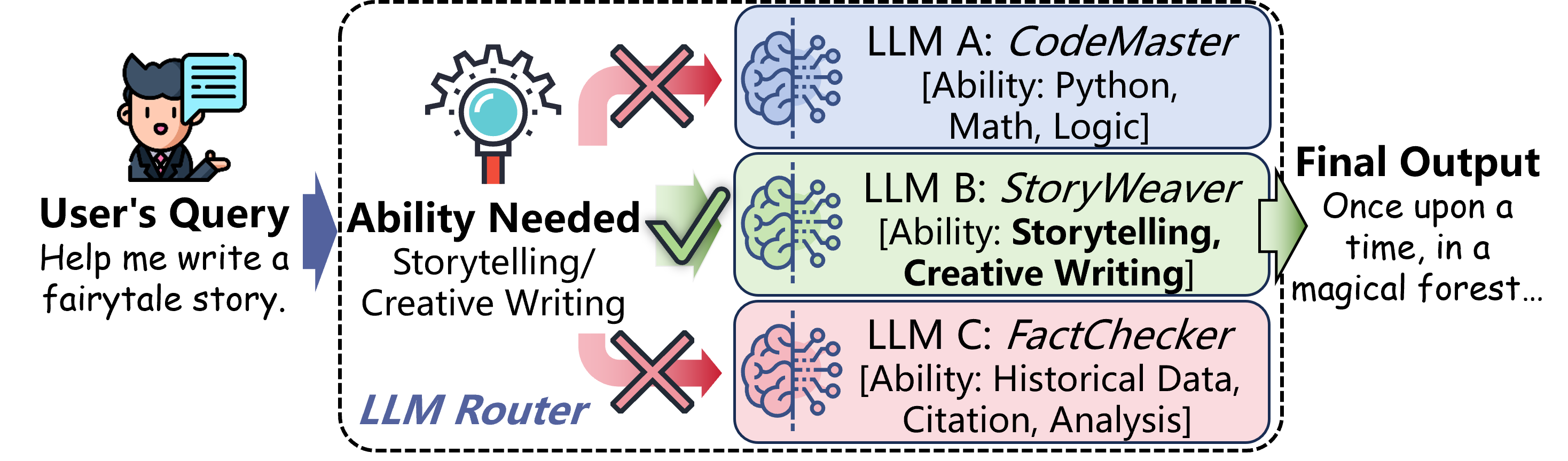}
    \vspace{-1.4em}
    % \caption{Ability-Difficulty Matching. The router assesses the task's specific difficulty and matches it with the model that possesses the requisite ability.}
    \caption{Ability--Difficulty Matching.}
    \label{fig:difficulty_ability}
    \vspace{-1.2em}
\end{wrapfigure}

% Following prior work on capability modeling \citep{wang2025icl, song-etal-2025-irt}, we formulate $p_i$ based on the alignment between the seller's \textit{model ability} $m_i \in \mathbb{R}$ and the \textit{task difficulty} $d \in \mathbb{R}$, illustrated in Figure~\ref{fig:difficulty_ability}. We define a latent \textit{fulfillment score} $\phi_i = \phi(m_i,d)$ satisfying monotonicity conditions $\frac{\partial\phi}{\partial m_i} \ge 0$ and $\frac{\partial\phi}{\partial d} \le 0$, indicating that the score is non-decreasing in model ability and non-increasing in task difficulty. The \textit{fulfillment probability} is given by $p_i = \sigma(\phi_i)$, where $\sigma:\mathbb{R}\to(0,1)$ is a strictly increasing, continuously differentiable \textit{link function} with a global Lipschitz constant $L_\sigma$ (i.e., $0<\sigma'(x)\le L_\sigma<\infty$). Note that while relevant for welfare accounting, neither $p_i$ nor $\mu_i$ is observable during the mechanism's operation.

Following prior work on capability modeling~\citep{wang2025icl, song-etal-2025-irt}, we model the \textit{fulfillment probability} $p_i$ via the alignment between seller $i$'s \textit{model ability} $m_i\in\mathbb{R}$ and the \textit{task difficulty} $d\in\mathbb{R}$ (Figure~\ref{fig:difficulty_ability}). The latent \textit{fulfillment score} $\phi_i=\phi(m_i,d)$ satisfies $\partial\phi/\partial m_i\ge0$ and $\partial\phi/\partial d\le0$, and $p_i=\sigma(\phi_i)$, where the \textit{link function} $\sigma:\mathbb{R}\to(0,1)$ is strictly increasing, continuously differentiable, and globally $L_\sigma$-Lipschitz. Neither $p_i$ nor $\mu_i$ is observed during mechanism execution.

\begin{wrapfigure}{r}{0.50\textwidth}
    \vspace{-1.6em}
    \centering
    \includegraphics[width=0.50\textwidth]{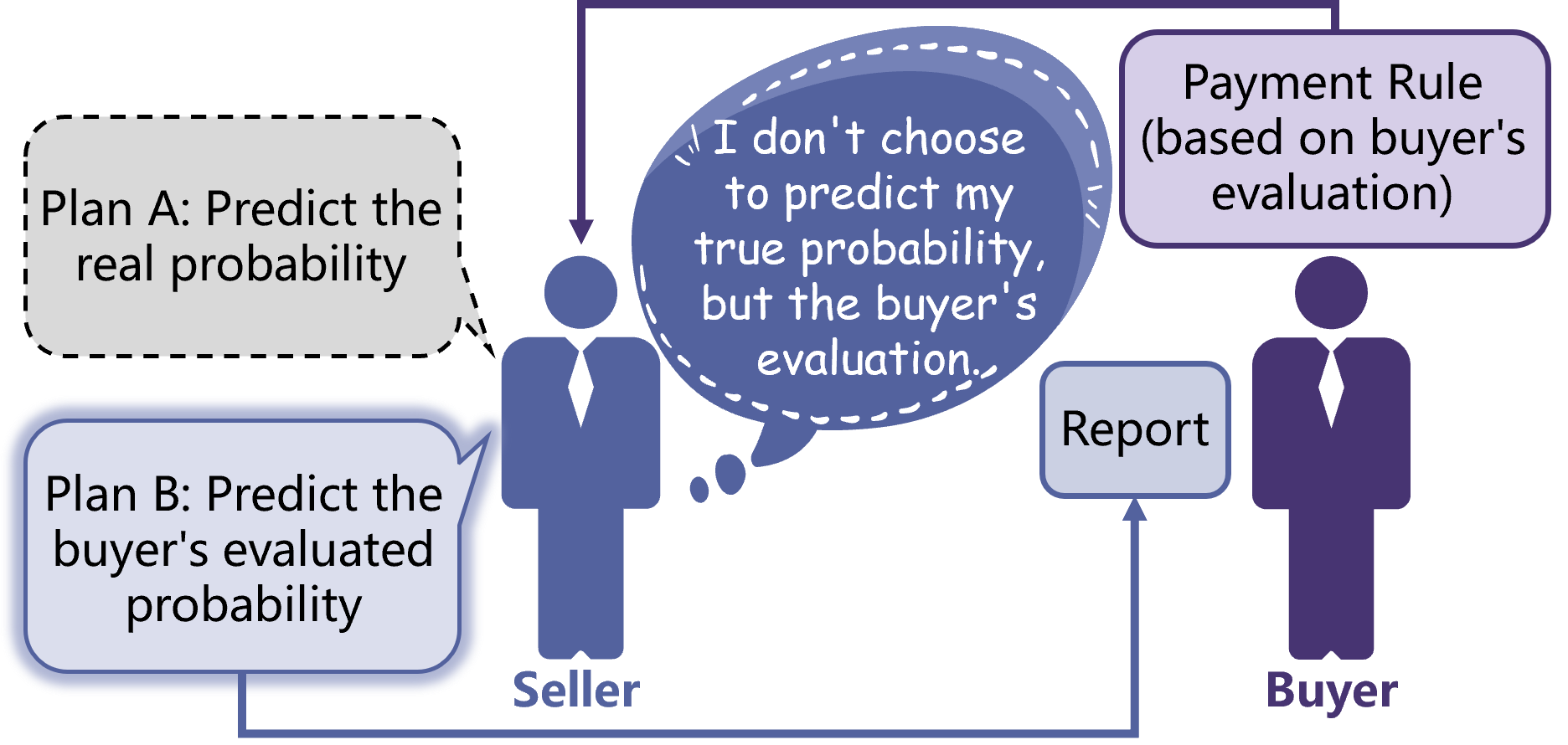}
    \vspace{-1.4em}
    \caption{Seller's strategic decision. The seller bids based on the buyer's error-involved evaluation, rather than the unobserved ground-truth.}
    \label{fig:seller_game}
    \vspace{-2em}
\end{wrapfigure}

\subsection{Error-Aware Prediction and Evaluation}
% The routing problem is complicated by two intertwined sources of error: imperfections in the task executor and the evaluators. This \textit{Dual Error} undermines classical mechanisms like FTMD~\cite{porter2008fault}, necessitating an error-aware formulation.
LLM Routing faces \textit{Dual Error} due to noisy prediction and imperfect evaluation, undermining classical mechanisms such as FTMD~\cite{porter2008fault} and necessitating an error-aware formulation.

\paragraph{Buyer's ex-post evaluation.}
% The buyer assesses the delivered output through an error-involved \textit{ex-post acceptance probability} $h_i=\sigma(\phi_i+\varepsilon_{\text{post}})$, where $\varepsilon_{\text{post}}$ represents a common \textit{evaluation error} characterized by mean $\E[\varepsilon_{\text{post}}]=a_{\text{post}}$ and variance $\Var(\varepsilon_{\text{post}})=b_{\text{post}}$. The buyer's final acceptance decision follows a Bernoulli draw $\tilde\mu_i \sim \mathrm{Bernoulli}(h_i)$, determining whether the answer is transmitted to the user.

The buyer evaluates the output via the error-involved \textit{ex-post acceptance probability}
$h_i=\sigma(\phi_i+\varepsilon_{\text{post}})$, where the \textit{evaluation error} $\varepsilon_{\text{post}}$ has
$\E[\varepsilon_{\text{post}}]=a_{\text{post}}$ and
$\Var(\varepsilon_{\text{post}})=b_{\text{post}}$.
The decision $\tilde\mu_i\sim\mathrm{Bernoulli}(h_i)$ determines whether the answer is sent to the user.

\paragraph{Sellers' ex-ante prediction.}
% Since payments depend on the buyer's evaluation, sellers maximize utility by estimating the buyer's acceptance probability $h_i$ rather than the ground truth fulfillment probability. We model this by introducing a \textit{prediction error} $\varepsilon_{\text{ante},i}$ (independent of $\varepsilon_{\text{post}}$) with mean $\E[\varepsilon_{\text{ante},i}] = a_{\text{ante},i}$ and variance $\Var(\varepsilon_{\text{ante},i}) = b_{\text{ante},i}$.
% The seller's belief incorporates the \textit{aggregated error} $\eta_i = \varepsilon_{\text{post}} + \varepsilon_{\text{ante},i}$, resulting in the \textit{ex-ante subjective probability} $g_i=\sigma(\phi_i+\eta_i)$.
% Under independence, the aggregated error is characterized by mean $\E[\eta_i]=a_{\eta,i}=a_{\text{post}}+a_{\text{ante},i}$ and variance $\Var(\eta_i)=b_{\eta,i}=b_{\text{post}}+b_{\text{ante},i}$.
% As illustrated in Figure~\ref{fig:seller_game}, sellers strategically bid based on $g_i$ rather than $p_i$, a rationale formally justified in Section~\ref{sec:participant-rationalies}.

Since payments depend on the buyer's evaluation, seller $i$ predicts the buyer's acceptance probability rather than the ground-truth fulfillment probability. We introduce an independent \textit{prediction error} $\varepsilon_{\text{ante},i}$ with $\E[\varepsilon_{\text{ante},i}]=a_{\text{ante},i}$ and $\Var(\varepsilon_{\text{ante},i})=b_{\text{ante},i}$, and define the \textit{aggregated error} $\eta_i=\varepsilon_{\text{post}}+\varepsilon_{\text{ante},i}$. Thus the seller's \textit{ex-ante subjective probability} is $g_i=\sigma(\phi_i+\eta_i)$, where $\E[\eta_i]=a_{\eta,i}=a_{\text{post}}+a_{\text{ante},i}$ and $\Var(\eta_i)=b_{\eta,i}=b_{\text{post}}+b_{\text{ante},i}$. As shown in Figure~\ref{fig:seller_game}, sellers bid based on $g_i$ rather than $p_i$, a rationale formalized in Section~\ref{sec:participant-rationalies}.

\subsection{Mechanism Interaction}

% A \textit{mechanism} is a mapping $\Gamma:\hat{\theta}\mapsto (f(\hat{\theta}), r(\hat{\theta},\tilde{\mu}))$, which specifies (i) an allocation rule $f$ and (ii) a payment rule $r$ as functions of sellers' reports and the evaluator signal $\tilde{\mu}$. In this section, we describe the interaction protocol: each seller first forms a prediction and submits a report $\hat{\theta}$; the buyer then applies the allocation rule $f(\hat{\theta})$ to select a winner (or choose not to allocate); after the selected seller executes the task, the buyer observes an evaluator signal $\tilde{\mu}$ and determines transfers via the payment rule $r(\hat{\theta},\tilde{\mu})$. Based on the realized allocation, evaluation, and payment, we finally compute each party's utility as well as the resulting social welfare. We summarize the overall procedure in Algorithm~\ref{alg:mechanism}.

A \textit{mechanism} is a mapping
$\Gamma:\hat{\theta}\mapsto (f(\hat{\theta}), r(\hat{\theta},\tilde{\mu}))$,
where $f$ is the allocation rule and $r$ is the payment rule, based on sellers' reports $\hat{\theta}$ and the evaluator signal $\tilde{\mu}$.
Each seller submits a report $\hat{\theta}$; the buyer applies $f(\hat{\theta})$ to select a winner or the null allocation; after execution, the buyer observes $\tilde{\mu}$ and settles transfers via $r(\hat{\theta},\tilde{\mu})$.
Utilities and social welfare are then computed from the realized allocation, evaluation, and payment, as summarized in Algorithm~\ref{alg:mechanism}.

\begin{wrapfigure}[18]{r}{0.52\textwidth}
% \vspace{-1.2em}
\vspace{-2.2em} % arxiv
\begin{minipage}{\linewidth}
\begin{algorithm}[H]
\fontsize{9pt}{10pt}\selectfont
\caption{EA-RAM}
\label{alg:mechanism}
\begin{algorithmic}[1]
\STATE \textbf{Input:} Sellers $\{m_i,c_i\}_{i=1}^N$; task $(d,V)$.
\FOR{$i=1,\dots,N$}
   \STATE Compute $(\phi_i,p_i)$ and $(h_i,g_i)$ with errors $\varepsilon_{\text{post}}$ and $\{\varepsilon_{\text{ante},i}\}_{i=1}^N$.
   \STATE Set ranking score $\hat s_i \leftarrow V g_i - c_i$.
\ENDFOR
\STATE $j \leftarrow \arg\max_i \hat s_i$; $H \leftarrow \max(0,\max_{k\neq j}\hat s_k)$.
\IF{$\hat s_j \le 0$}
   \STATE \textbf{return} Null.
\ENDIF
\STATE Execute $\mu_j \sim \mathrm{Bern}(p_j)$; evaluate $\tilde{\mu}_j \sim \mathrm{Bern}(h_j)$.
\STATE Pay $r_j \leftarrow V\tilde{\mu}_j - H$; set $r_i \leftarrow 0$ for all $i\neq j$.
\STATE $U^{\text{seller}}_j \leftarrow r_j - c_j$; $U^{\text{buyer}} \leftarrow V\mu_j - r_j$.
\STATE $W \leftarrow V \mu_j - c_j$.
\STATE \textbf{return} $j$, $\{r_i\}_{i=1}^N$, $U^{\text{seller}}_j$, $U^{\text{buyer}}$, $W$.
\end{algorithmic}
\end{algorithm}
\end{minipage}
\vspace{-2.4em}
\end{wrapfigure}

\paragraph{Utilities and welfare.}
% Our mechanism is designed with a welfare objective while recognizing that participants act strategically. Specifically, the buyer aims to maximize expected social welfare, whereas each seller chooses reports to maximize its own expected utility under the induced allocation and payment rules. Formally, any non-winning seller $i\ne j$ obtains zero realized and expected utility. The winning seller $j$ obtains \textit{realized utility} $U^{\text{seller}}_j=r_j-c_j$, where $r_j$ denotes the \textit{payment} for the winner and $c_j$ denotes the winner's cost. Accordingly, its \textit{expected utility} is $\mathbb E[U^{\text{seller}}_j]=\mathbb E[r_j]-c_j$, where the relevant expectation depends on the information available to the agent. The buyer's \textit{realized utility} is $U^{\text{buyer}}=V\mu_j-r_j$, with \textit{expected utility} $\mathbb E[U^{\text{buyer}}]=Vp_j-\mathbb E[r_j]$. Finally, \textit{realized social welfare} is $W=V\mu_j-c_j$, with \textit{expected welfare} $\mathbb E[W]=Vp_j-c_j$. Note that the realized welfare equals the sum of the buyer's and all sellers' realized utilities, formally described as $W = U^{\text{buyer}} + \sum_{i=1}^N U^{\text{seller}}_i$. This identity holds because transfers are internal to the mechanism and thus cancel out in aggregation: the buyer's payment $r_j$ is exactly the winner's received payment, so welfare depends only on the realized value $V\mu_j$ and the incurred cost $c_j$.

Our mechanism is welfare-oriented but strategic-agent-aware: the buyer aims to maximize expected social welfare, whereas each seller aims to maximize its own expected utility under the induced allocation and payment rules. Any non-winning seller $i\ne j$ obtains zero utility. The winner $j$ receives payment $r_j$, incurs cost $c_j$, and obtains
$U^{\text{seller}}_j=r_j-c_j$, with
$\mathbb{E}[U^{\text{seller}}_j]=\mathbb{E}[r_j]-c_j$.
The buyer obtains
$U^{\text{buyer}}=V\mu_j-r_j$, with
$\mathbb{E}[U^{\text{buyer}}]=Vp_j-\mathbb{E}[r_j]$.
Social welfare is
$W=V\mu_j-c_j$ and
$\mathbb{E}[W]=Vp_j-c_j$; equivalently,
$W=U^{\text{buyer}}+\sum_{i=1}^N U^{\text{seller}}_i$
because transfers are internal.

\par\WFclear
\paragraph{Reports.}
% Seller $i$ privately observes cost $c_i$ and forms an ex-ante belief $g_i$ about the buyer's evaluation outcome. The seller reports a type $\hat{\theta}_i=(\hat p_i,\hat c_i)$. 
% Since the buyer's allocation rule ranks sellers by a surplus-like criterion, it is without loss of generality to summarize any report by the induced \emph{reported surplus score} $\hat s_i = V\hat p_i - \hat c_i$. In particular, because the allocation depends on $\hat{\theta}_i$ only through $\hat s_i$, seller $i$'s strategic decision can be equivalently viewed as choosing a single scalar $\hat s_i$ to submit, rather than optimizing over the two-dimensional report.
% To analyze incentives under Dual Error, we also introduce a counterpart index based on the seller's true primitives. Specifically, given belief $g_i$ and true cost $c_i$, define the \emph{error-adjusted effective surplus} $\bar T_i = V g_i - c_i$, which represents seller $i$'s perceived ex-ante expected welfare contribution under the evaluation process. We will later show that in Theorem~\ref{thm:bic} it is optimal for the seller to align the reported score with this effective surplus and report $\hat s_i=\bar T_i$ under our payment rule.

Seller $i$ privately observes cost $c_i$, forms an ex-ante belief $g_i$ about the buyer's evaluation outcome, and reports a type $\hat{\theta}_i=(\hat p_i,\hat c_i)$. Since the allocation rule ranks sellers only through the induced \emph{reported surplus score} $\hat s_i=V\hat p_i-\hat c_i$, any report can be equivalently summarized by $\hat s_i$; thus seller $i$'s strategic choice reduces from the two-dimensional report $\hat{\theta}_i$ to a scalar $\hat s_i$. For incentive analysis under Dual Error, given belief $g_i$ and true cost $c_i$, define the \emph{error-adjusted effective surplus} $\bar T_i=Vg_i-c_i$, representing seller $i$'s perceived ex-ante expected welfare contribution under the evaluation process. Theorem~\ref{thm:bic} will later show that, under our payment rule, it is optimal to align the reported score with this effective surplus, i.e., $\hat s_i=\bar T_i$.

\paragraph{Allocation.}
% Given reports $\hat{\theta}$, the buyer chooses an allocation to maximize reported expected welfare. Under seller $i$'s report $\hat{\theta}_i=(\hat p_i,\hat c_i)$, the buyer's reported expected welfare contribution from selecting $i$ is $V\hat p_i-\hat c_i$. This motivates that ranking sellers by $\hat s_i$ is equivalent to ranking them by their reported expected welfare contributions. Accordingly, the allocation rule selects the seller with the largest reported surplus whenever it is positive, and otherwise chooses the null outcome to avoid negative reported welfare.
% Equivalently, we may augment the candidate set with a dummy seller $0$ whose reported type is fixed at $(\hat p_0,\hat c_0)=(0,0)$, so that its reported surplus score is $\hat s_0=V\hat p_0-\hat c_0=0$; selecting this dummy seller is identified with the null allocation. Formally, let $\bar{\mathcal I}=\mathcal I\cup\{0\}$ and let $j\in\arg\max_{i\in\bar{\mathcal I}}\hat s_i$. If $j\neq 0$, then set $f(\hat{\theta})=j$; otherwise set $f(\hat{\theta})=\emptyset$.

Given reports $\hat{\theta}$, the buyer allocates to maximize reported expected welfare. For seller $i$ with report $\hat{\theta}_i=(\hat p_i,\hat c_i)$, the reported surplus is $V\hat p_i-\hat c_i=\hat s_i$; hence ranking by $\hat s_i$ is equivalent to ranking sellers by reported expected welfare. The rule selects the largest positive $\hat s_i$, or the null outcome if all reported surpluses are non-positive.
Equivalently, introduce a dummy seller $0$ with $(\hat p_0,\hat c_0)=(0,0)$ and $\hat s_0=0$, representing the null allocation. Let $\bar{\mathcal I}=\mathcal I\cup\{0\}$ and $j\in\arg\max_{i\in\bar{\mathcal I}}\hat s_i$. Then $f(\hat{\theta})=j$ if $j\neq0$, and $f(\hat{\theta})=\emptyset$ otherwise.

\paragraph{Payment and incentive alignment.}
% Let $j$ denote the winning seller and let $H=\max(0,\max_{k\neq j}\hat s_k)$ be the runner-up score. A key design issue is that if payments were based solely on the self-reported probability $\hat p_i$, sellers would have an incentive to inflate $\hat p_i$ (e.g., trivially report $\hat p_i=1$). To mitigate this information asymmetry, we condition transfers on the evaluator signal $\tilde\mu_j\in\{0,1\}$, which serves as a noisy proxy for the unobserved ground truth $\mu_j$. The winner's payment is defined as $r_j=r(\hat\theta_j,\tilde\mu_j)=V\tilde\mu_j-H$~\footnote{In practice, negative realized transfers can be handled via a prefunded reserve maintained by each participating seller. This implementation-layer safeguard does not change the mechanism analyzed below.}, and all losers receive zero, i.e., $r_i=0$ for $i\ne j$. Intuitively, the term $V\tilde\mu_j$ ties compensation to realized (evaluated) performance, while subtracting $H$ charges the winner for the externality imposed on others, thereby aligning the winner's best response with truthful competition in terms of surplus (see Theorem~\ref{thm:bic} for the formal statement).

Let $j$ denote the winning seller and let $H=\max(0,\max_{k\neq j}\hat s_k)$ be the runner-up score. If payments depended on the self-reported probability $\hat p_i$, sellers could inflate $\hat p_i$ (e.g., report $\hat p_i=1$). We therefore condition transfers on the evaluator signal $\tilde\mu_j\in\{0,1\}$, a noisy proxy for the unobserved ground truth $\mu_j$. The winner is paid
$r_j=r(\hat\theta_j,\tilde\mu_j)=V\tilde\mu_j-H$~\footnote{In practice, negative realized transfers can be handled via a prefunded reserve maintained by each participating seller. This implementation-layer safeguard does not change the mechanism analyzed below.},
while losers receive $r_i=0$ for $i\ne j$. Here $V\tilde\mu_j$ rewards evaluated performance, and $H$ charges the winner for the externality imposed on others, aligning the best response with surplus-based truthful competition (see Theorem~\ref{thm:bic} for the formal statement).

\section{Mechanism Properties} \label{sec:properties}
% In this section, we rigorously derive the theoretical properties of EA-RAM. We demonstrate that our mechanism effectively handles the Dual-Error constraints while maintaining robust economic guarantees. The detailed proofs for all theorems presented herein are provided in Appendix~\ref{appendix:proof}.

This section establishes EA-RAM's theoretical foundations, showing that it remains incentive-aligned and economically robust under Dual Error. Proofs of all results are deferred to Appendix~\ref{appendix:proof}.

\subsection{Mechanism Goals}
% To ensure the reliability and sustainability of the routing market, a robust auction mechanism is generally characterized by four fundamental properties. Following the framework of FTMD~\citep{porter2008fault}, we define these desiderata as follows:
% \textbf{(1) Dominant-Strategy Incentive Compatibility (DSIC):} Truth-telling is a weakly dominant strategy, meaning that each agent maximizes its expected utility by reporting its true type, regardless of the opponents' actions.
% \textbf{(2) Individual Rationality (IR):} Participation is rational for truthful agents. Specifically, a truthful seller is guaranteed a non-negative \textit{expected} utility, limiting the risk of participation even if ex-post losses occur.
% \textbf{(3) Center Rationality (CR):} The mechanism is safe for the task center, ensuring that the buyer's expected utility remains non-negative across any realization of types.
% \textbf{(4) Economic Efficiency (EE):} The mechanism achieves optimal resource allocation, maximizing the total expected social welfare under truthful reporting.

To ensure a reliable and sustainable routing market, we follow FTMD~\citep{porter2008fault} and consider four standard desiderata:
\textbf{(1) Dominant-Strategy Incentive Compatibility (DSIC):} Truth-telling maximizes each agent's expected utility regardless of others' actions.
\textbf{(2) Individual Rationality (IR):} Truthful participation gives each seller non-negative expected utility.
\textbf{(3) Center Rationality (CR):} The buyer's expected utility remains non-negative.
\textbf{(4) Economic Efficiency (EE):} Under truthful reporting, the mechanism allocates to maximize total expected social welfare.

% We formalize our analysis by defining the \textit{error-aware setting} as a unified framework that encompasses two distinct operational contexts.
% We refer to the general scenario, characterized by the presence of noise, as the \textit{error-involved setting}.
% Conversely, the special case where all error terms vanish ($\varepsilon_{\text{post}}=0$ and $\varepsilon_{\text{ante},i}=0$) is defined as the \textit{error-free setting}, corresponding to the classical FTMD environment~\citep{porter2008fault}.
% In this ideal scenario, the ex-ante and ex-post beliefs collapse to the ground-truth probability ($g_i = h_i = p_i$), and the effective surplus report reduces to the true surplus $\hat s_i = \bar T_i = Vp_i - c_i$.
% Consequently, we establish the following baseline property:

We formalize the \textit{error-aware setting} as a unified framework covering two cases. The general noisy case is the \textit{error-involved setting}; the ideal case with vanishing errors, $\varepsilon_{\text{post}}=0$ and $\varepsilon_{\text{ante},i}=0$, is the \textit{error-free setting}, corresponding to FTMD~\citep{porter2008fault}. In the error-free setting, ex-ante and ex-post beliefs coincide with the ground-truth probability, $g_i=h_i=p_i$, and the effective surplus report becomes the true surplus, $\hat s_i=\bar T_i=Vp_i-c_i$. We then establish the following benchmark property:

\begin{proposition}[Optimality of the Error-Free Setting]
\label{prop:error_free_properties}
In the error-free setting, the mechanism satisfies all four desirable properties: DSIC, IR, CR, and EE.
\end{proposition}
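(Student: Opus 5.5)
The plan is to reduce everything to a Vickrey-style second-price argument on the scalar reported surplus. This reduction works because in the error-free setting the evaluator signal is an unbiased draw of the true outcome. With $\varepsilon_{\text{post}}=0$ and $\varepsilon_{\text{ante},i}=0$ we have $h_i=g_i=p_i$. Hence, conditional on seller $j$ being selected, $\tilde\mu_j\sim\mathrm{Bernoulli}(p_j)$, so $\mathbb E[\tilde\mu_j]=p_j=\mathbb E[\mu_j]$. The expected payment of a winner is therefore $\mathbb E[r_j]=Vp_j-H$. As noted in the paragraph on reports, the allocation depends on $\hat\theta_i$ only through $\hat s_i=V\hat p_i-\hat c_i$. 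The payment depends on the report only through whether $i$ wins, since $H$ is built from the other sellers' scores. So I will treat each seller's strategy as the choice of a single scalar $\hat s_i$.

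\textbf{DSIC.} Fix seller $i$ and arbitrary reports $\hat s_{-i}$ of the others, and set $H_i=\max(0,\max_{k\neq i}\hat s_k)$, which does not depend on $i$'s report. If $i$ wins, its expected utility is $\mathbb E[r_i]-c_i=Vp_i-H_i-c_i=T_i-H_i$, where $T_i=Vp_i-c_i$ is the true surplus (equal to $\bar T_i$ here). If $i$ loses, its utility is $0$. Thus $i$'s report only determines whether it receives $T_i-H_i$ or $0$, and winning happens exactly when $\hat s_i>H_i$ (up to tie-breaking). Reporting $\hat s_i=T_i$ wins precisely when $T_i-H_i>0$ and loses when $T_i-H_i<0$. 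This attains $\max(0,T_i-H_i)$, which is the best any report can achieve; in the tie case $T_i=H_i$ both outcomes give $0$. The bound holds for every $\hat s_{-i}$, so truth-telling is weakly dominant, and in particular any report $(\hat p_i,\hat c_i)=(p_i,c_i)$ is optimal.

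\textbf{IR and CR.} For IR, a truthful seller either loses and gets $0$, or wins with $T_j\ge H\ge0$, giving expected utility $T_j-H\ge0$. For CR, when some seller $j$ is selected the buyer's expected utility is $Vp_j-\mathbb E[r_j]=Vp_j-(Vp_j-H)=H\ge0$. Under the null allocation the buyer's utility is $0$. I will remark that CR holds for arbitrary reports of the others, because $H\ge0$ by construction. This step uses $h_j=p_j$ essentially: the buyer's value is driven by $\mu_j$ while the payment is driven by $\tilde\mu_j$, and the two have the same mean only in the error-free setting.

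\textbf{EE.} Under truthful reports $\hat s_i=Vp_i-c_i$, and the expected welfare of allocating to $i$ is $\mathbb E[W]=Vp_i-c_i$, while the null allocation yields $0$. The rule $j\in\arg\max_{i\in\bar{\mathcal I}}\hat s_i$ (with the dummy seller $\hat s_0=0$) therefore selects a welfare-maximizing outcome.

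I expect the only delicate step to be the DSIC reduction: justifying that the two-dimensional report enters only through $\hat s_i$, and handling ties. I would handle ties by noting that every tie-breaking outcome yields utility $T_i-H_i=0$ for the truthful seller.
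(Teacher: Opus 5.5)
Your proof is correct and follows the paper's argument essentially step for step. Both use the second-price case analysis on $T_i$ versus $H$ for DSIC, $T_j\ge H\ge 0$ for IR, buyer utility equal to $H\ge 0$ for CR, and the argmax including the dummy seller for EE. The one small difference is in CR: you derive it in expectation from $\mathbb{E}[\tilde\mu_j]=\mathbb{E}[\mu_j]=p_j$, whereas the paper writes the realized identity $V\mu-(V\mu-H)=H$, which tacitly treats $\tilde\mu$ and $\mu$ as the same draw; your version is the more careful of the two.
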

% Proposition~\ref{prop:error_free_properties} provides a theoretical benchmark by establishing the optimality of our framework under idealized conditions. Introducing Dual Error weakens these exact guarantees, but we show that EA-RAM remains robust in the error-involved setting. In particular, the mechanism satisfies Bayesian Incentive Compatibility (BIC) and IR, so providers maximize expected utility by reporting their true capabilities despite prediction and evaluation noise. We also give sufficient conditions for CR, ensuring safety when surplus margins absorb evaluation noise or when ex-post evaluation is conservative. Finally, we bound the welfare loss to quantify EE, showing that any efficiency degradation is controlled even under uncertainty.
Proposition~\ref{prop:error_free_properties} establishes the error-free benchmark, showing optimality under idealized conditions. Although Dual Error weakens these exact guarantees, \textbf{EA-RAM retains strong robustness in the error-involved setting: it satisfies BIC and IR, admits sufficient CR conditions when surplus margins absorb evaluation noise or evaluations are conservative, and bounds welfare loss}, ensuring controlled efficiency degradation under uncertainty.

\subsection{Strategic Properties}
\label{sec:participant-rationalies}
Let $H$ denote the runner-up score with cumulative distribution function $F_H$ and probability density function $f_H$. For a unilateral report $\hat s_i$, the seller's interim expected utility is
$U^{\text{seller}}_i(\hat s_i)
= \Pr(H\le \hat s_i)\cdot\E[V\tilde\mu_i - H - c_i\mid H\le \hat s_i]
= F_H(\hat s_i)(\bar T_i-\E[H\mid H\le \hat s_i])$,
since $\E[V\tilde\mu_i\mid \cdot]=Vg_i$ from the seller's ex-ante perspective.

\begin{theorem}[Bayesian Incentive Compatibility (BIC)]
\label{thm:bic}
Seller $i$'s interim expected utility $U^{\text{seller}}_i(\hat s_i)$ is maximized at $\hat s_i=\bar T_i$. Hence, reporting $\hat s_i=\bar T_i$ is a Bayesian best response. Moreover, if $\bar T_i>0$, then $U^{\text{seller}}_i(\hat s_i)$ is uniquely maximized at $\hat s_i=\bar T_i$.
\end{theorem}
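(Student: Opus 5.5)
The plan is the standard Vickrey-style argument applied to the interim utility written just above the theorem. First I would drop the conditional expectation and write the utility as a Lebesgue--Stieltjes integral against the distribution of $H$:
\[
U^{\text{seller}}_i(\hat s_i)=F_H(\hat s_i)\bigl(\bar T_i-\E[H\mid H\le \hat s_i]\bigr)=\int_{(-\infty,\hat s_i]}(\bar T_i-x)\,dF_H(x)=\E\bigl[(\bar T_i-H)\mathbf 1\{H\le \hat s_i\}\bigr].
\]
This form already uses that $\E[V\tilde\mu_i\mid\cdot]=Vg_i$. That step relies on the seller's belief about its own acceptance being independent of the event $\{H\le\hat s_i\}$, since $H$ is built from the other sellers' reports. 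Because $H=\max(0,\max_{k\neq i}\hat s_k)\ge 0$, $F_H$ puts no mass below $0$. The representation is also valid when $F_H(\hat s_i)=0$, where the utility is simply $0$.

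Next comes a pointwise comparison of the integrands. Fix any alternative report $\hat s_i$ and compute
\[
U^{\text{seller}}_i(\bar T_i)-U^{\text{seller}}_i(\hat s_i)=\E\bigl[(\bar T_i-H)\bigl(\mathbf 1\{H\le \bar T_i\}-\mathbf 1\{H\le \hat s_i\}\bigr)\bigr].
\]
If $\hat s_i<\bar T_i$, the difference of indicators is supported on $\{\hat s_i<H\le\bar T_i\}$. There $\bar T_i-H\ge 0$, so the difference is nonnegative. If $\hat s_i>\bar T_i$, the indicator difference equals $-\mathbf 1\{\bar T_i<H\le\hat s_i\}$. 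There $\bar T_i-H<0$, so the product is again nonnegative. Hence $\bar T_i$ maximizes $U^{\text{seller}}_i$ and is a Bayesian best response. This holds whatever the other sellers' strategies are, as long as seller $i$ holds a belief over $H$. Ties $H=\hat s_i$ can be broken either way; a tie only contributes the term $(\bar T_i-H)$ at $H=\bar T_i$, which is zero.

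The uniqueness claim for $\bar T_i>0$ is the main obstacle. The difference above is strictly positive only if $H$ puts positive probability on the relevant interval, and the paper only says $H$ has a density $f_H$. I would impose that $f_H>0$ on $[0,\infty)$ near $\bar T_i$, i.e.\ full support. This holds, for instance, when the opponents' $\bar T_k$ have continuous distributions with unbounded or sufficiently wide support, given the link $\sigma$ and continuous errors. With that assumption I would finish with derivatives. Differentiating under the density gives $\frac{d}{d\hat s}U^{\text{seller}}_i=(\bar T_i-\hat s)f_H(\hat s)$. This is strictly positive on $(0,\bar T_i)$ and strictly negative on $(\bar T_i,\infty)$, and $U^{\text{seller}}_i$ is constant at $0$ for $\hat s<0$. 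So $\bar T_i$ is the unique maximizer.

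The role of $\bar T_i>0$ should be stated explicitly. When $\bar T_i\le 0$, every report $\hat s_i\le 0$ yields zero utility, up to the atom of $H$ at $0$, so the maximizer is not unique. When $\bar T_i>0$, the strict inequalities on both sides of $\bar T_i$ rule out ties with any other report. This includes reports below $0$, which give utility $0<U^{\text{seller}}_i(\bar T_i)$ once $F_H$ puts positive mass on $[0,\bar T_i)$, e.g.\ through the atom at $0$.
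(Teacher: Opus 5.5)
Your proposal is correct. It reaches the optimality claim by a genuinely different route from the paper; for uniqueness it uses the paper's own argument.

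\textbf{How the paper argues.} The paper uses calculus throughout. It differentiates $F_H(\hat s_i)\bigl(\bar T_i-\E[H\mid H\le \hat s_i]\bigr)$ via the identity $\frac{d}{dx}[F_H(x)\E(H\mid H\le x)]=x f_H(x)$, which gives the marginal utility $f_H(\hat s_i)(\bar T_i-\hat s_i)$. From this it reads off single-peakedness when $\bar T_i>0$. The case $\bar T_i\le 0$ is handled separately: utility decreases on $(0,\infty)$, and non-positive reports are equivalent to the null option.

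\textbf{What your route buys.} You rewrite the utility as $\E[(\bar T_i-H)\mathbf 1\{H\le \hat s_i\}]$ and compare integrands pointwise, which is the classical second-price argument. This gives weak optimality with no density or differentiability assumptions. It remains valid despite the atom of $H=\max(0,\cdot)$ at $0$, which in fact conflicts with the paper's standing assumption that $H$ has a density $f_H$. It also handles ties and covers both signs of $\bar T_i$ at once. The paper's derivative route instead gives an explicit marginal-utility formula, which makes the single-peaked shape visible.

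\textbf{Uniqueness.} You correctly note that the strict signs in the derivative argument require $f_H>0$, which the paper uses silently. This hypothesis is genuinely needed: with a single seller, $H\equiv 0$, and every positive report yields utility $\bar T_i$.

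You do not need full support or the derivative step, though. Your own identity settles uniqueness under an essentially sharp local condition. For $\hat s_i<\bar T_i$ the utility gap is $\E[(\bar T_i-H)\mathbf 1\{\hat s_i<H\le \bar T_i\}]$. For $\hat s_i>\bar T_i$ it is $\E[(H-\bar T_i)\mathbf 1\{\bar T_i<H\le \hat s_i\}]$. Hence $\bar T_i$ is the unique maximizer iff every interval $(\bar T_i-\delta,\bar T_i)$ and $(\bar T_i,\bar T_i+\delta)$ carries positive $H$-probability, for example if $f_H>0$ near $\bar T_i$. Because $H\ge 0$, this condition automatically fails when $\bar T_i\le 0$, which explains why the theorem claims uniqueness only for $\bar T_i>0$.

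\textbf{Independence caveat.} Your remark that $\E[V\tilde\mu_i\mid\cdot]=Vg_i$ requires the acceptance signal to be independent of $H$ is fair. This is not automatic here, since $\varepsilon_{\text{post}}$ is common to all sellers. The paper takes it for granted in the interim-utility formula stated before the theorem.
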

This implies that for every seller $i$, truthfully reporting $\hat s_i=\bar T_i$ is a Bayesian best response. Consequently, the strategy profile in which all sellers report $\hat s_i=\bar T_i$ constitutes a Bayesian Nash equilibrium. Under this report, seller $i$'s expected utility is
$\E[U^{\text{seller}}_i] = \int_{-\infty}^{\bar T_i} (\bar T_i-h)\,dF_H(h)$.

\begin{theorem}[Individual Rationality (IR)]
\label{thm:IR}
At the equilibrium $\hat s=\bar T$, every seller satisfies IR:
$\E[U^{\text{seller}}_i\mid \tilde\theta_i]\ge0$.
\end{theorem}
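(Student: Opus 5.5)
The plan is to work directly from the closed-form equilibrium utility recorded right after Theorem~\ref{thm:bic}: under the report $\hat s_i=\bar T_i$,
\[
\E[U^{\text{seller}}_i\mid \tilde\theta_i]=F_H(\bar T_i)\bigl(\bar T_i-\E[H\mid H\le \bar T_i]\bigr)=\int_{-\infty}^{\bar T_i}(\bar T_i-h)\,dF_H(h).
\]
Here the conditioning on $\tilde\theta_i$ means the seller's interim information, namely its cost $c_i$ and belief $g_i$, so that $\bar T_i=Vg_i-c_i$ is a fixed number. The expectation of $V\tilde\mu_i$ is taken with respect to the seller's subjective belief $g_i$, exactly as in the derivation of $U^{\text{seller}}_i(\hat s_i)$ preceding Theorem~\ref{thm:bic}.

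First I will justify the representation carefully. The seller wins iff $H\le \bar T_i$; ties have measure zero when $F_H$ has a density, and otherwise a tie-breaking rule contributes only terms with $\bar T_i-H=0$. On winning, the seller receives $V\tilde\mu_i-H$ and pays $c_i$. Because the evaluation draw $\tilde\mu_i$ is, from seller $i$'s perspective, independent of the opponents' reports that determine $H$, the conditional expectation of $V\tilde\mu_i$ given $H$ equals $Vg_i$. Conditioning on $H$ and using the tower property then gives the integral.

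The sign argument is then immediate. On the region of integration $h\le \bar T_i$, the integrand $\bar T_i-h$ is nonnegative, so the integral against the positive measure $dF_H$ is nonnegative. Two edge cases need a remark:
\begin{itemize}
\item If $\bar T_i<0$, the seller never wins, since $H\ge0$ by construction ($H=\max(0,\cdot)$). Then $F_H(\bar T_i)=0$ and the utility is $0$.
\item Integrability of $H$ is needed for $\E[H\mid H\le\bar T_i]$ to be finite. On the event $\{H\le \bar T_i\}$ we have $0\le H\le \bar T_i$, so $H$ is bounded there and no extra assumption is required.
\end{itemize}
Alternatively, IR follows from Theorem~\ref{thm:bic}: $U^{\text{seller}}_i(\bar T_i)\ge U^{\text{seller}}_i(\hat s)$ for any $\hat s<0$, and such a report never wins, giving utility $0$.

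The main subtlety is conceptual rather than computational. IR here holds in the seller's \emph{subjective} interim sense, with expectation under $g_i$, and not under the true acceptance probability $h_i$. I will state this explicitly, and note that realized utility can be negative, since $V\tilde\mu_i-H-c_i<0$ when $\tilde\mu_i=0$. Consistent with the mechanism goals, the guarantee is only in expectation.
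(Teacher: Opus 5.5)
Your proposal is correct and follows essentially the paper's own argument. The paper also splits into two cases: in the win case, conditioning on $H\le\bar T_i$ gives $\bar T_i-\E[H\mid H\le\bar T_i]\ge 0$, hence $U^{\text{seller}}_i(\bar T_i)=F_H(\bar T_i)\bigl(\bar T_i-\E[H\mid H\le\bar T_i]\bigr)\ge 0$, and in the lose case utility is zero. This is exactly your observation that the integrand $\bar T_i-h$ is nonnegative on $\{h\le\bar T_i\}$.

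Your extra remarks are sound: ties, the case $\bar T_i<0$, boundedness of $H$ on the winning event, and the alternative route through Theorem~\ref{thm:bic}. One caveat: the identity $\E[V\tilde\mu_i\mid H]=Vg_i$ is, as in the paper, a premise of the seller's subjective model rather than something derived from the primitives. The common error $\varepsilon_{\text{post}}$ enters both $h_i$ and the opponents' beliefs $g_k$ that determine $H$, so $\tilde\mu_i$ and $H$ need not be independent in the literal model.
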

This implies that truthful reporting yields non-negative expected utility, thereby ensuring that participating in the auction remains a rational strategy for every seller.

\begin{theorem}[Center Rationality (CR)]
\label{thm:CR-error-unified}
Each of the following sufficient conditions guarantees CR:
(A) $\E[H] \ge V\Delta_{\text{gate}}$, where $\Delta_{\text{gate}} = L_\sigma \sqrt{b_{\text{post}}+a_{\text{post}}^2}$;
(B) $\Delta_{\text{cons}}=\E\!\left[p_{(1)}-h_{(1)}\right]\ge 0$.
\end{theorem}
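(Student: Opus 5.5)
The plan is to write the buyer's expected utility exactly as a \emph{gap term} plus the runner-up score, and then show that each of (A) and (B) makes this sum non-negative. Throughout, expectations are taken over all randomness: the types, the errors $\varepsilon_{\text{post}}$ and $\{\varepsilon_{\text{ante},i}\}$, and the Bernoulli draws $\mu_j,\tilde\mu_j$. On the null allocation the buyer's utility is $0$, and by convention we set $p_{(1)}=h_{(1)}=0$ and $H=0$ there, so the decomposition below also covers that event. Here $p_{(1)},h_{(1)}$ denote the true fulfillment probability and the ex-post acceptance probability of the selected seller $j$ under the equilibrium reports $\hat s=\bar T$ (Theorem~\ref{thm:bic}).

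\textbf{Step 1 (decomposition).} Condition on the reports, the realized errors, and hence on the winner $j$ and on $H$. The realization $H$ is determined by the other sellers' reports. Given this, $\mu_j\sim\mathrm{Bernoulli}(p_j)$ and $\tilde\mu_j\sim\mathrm{Bernoulli}(h_j)$, so
\[
\E\bigl[U^{\text{buyer}}\mid\cdot\bigr]=V p_j-\E[V\tilde\mu_j-H\mid\cdot]=V(p_j-h_j)+H .
\]
Applying the tower property gives
\[
\E[U^{\text{buyer}}]=V\,\E\bigl[p_{(1)}-h_{(1)}\bigr]+\E[H]=V\Delta_{\text{cons}}+\E[H].
\]
The one point needing care is that the winner's identity depends on $\eta_j$, and hence on $\varepsilon_{\text{post}}$. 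This is why I condition on the realized errors first rather than treating $h_{(1)}$ as an independent draw; only then is the Bernoulli step valid.

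\textbf{Step 2 (condition (B)).} By definition $H=\max(0,\max_{k\neq j}\hat s_k)\ge 0$, so $\E[H]\ge0$. If $\Delta_{\text{cons}}\ge0$, then both terms in the decomposition are non-negative and CR follows immediately.

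\textbf{Step 3 (condition (A)).} I would bound the gap term pointwise, which avoids any assumption about how the winner's selection correlates with $\varepsilon_{\text{post}}$. Since $\sigma$ is globally $L_\sigma$-Lipschitz,
\[
h_{(1)}-p_{(1)}=\sigma(\phi_{(1)}+\varepsilon_{\text{post}})-\sigma(\phi_{(1)})\le L_\sigma|\varepsilon_{\text{post}}|,
\]
and this holds trivially on the null event. Taking expectations and applying Jensen (or Cauchy--Schwarz) gives
\[
\E|\varepsilon_{\text{post}}|\le\sqrt{\E[\varepsilon_{\text{post}}^2]}=\sqrt{b_{\text{post}}+a_{\text{post}}^2}.
\]
Hence $V\Delta_{\text{cons}}\ge -V\Delta_{\text{gate}}$, and so $\E[U^{\text{buyer}}]\ge \E[H]-V\Delta_{\text{gate}}\ge0$ under (A).

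The main obstacle is Step 1, specifically the legitimacy of replacing $\E[V\tilde\mu_j]$ by $V\E[h_{(1)}]$ when selection is endogenous to the evaluation noise. I would handle it by the explicit conditioning above, using that $\tilde\mu_j$ is drawn independently given $h_j$ after the allocation is fixed. After that, Steps 2 and 3 are short.
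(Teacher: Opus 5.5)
Your proof is correct. It rests on the same decomposition as the paper, $\E[U^{\text{buyer}}]=\E[H]+V\,\E[p_{(1)}-h_{(1)}]$, and handles condition (B) the same way, via $H\ge 0$.

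The difference is in (A). The paper conditions on $\phi_{(1)}$ and applies Lemma~\ref{lem:link-lip} with $x=\phi_{(1)}$ and $\epsilon=\varepsilon_{\text{post}}$. It asserts $|\E[h_{(1)}\mid\phi_{(1)}]-p_{(1)}|\le L_\sigma\sqrt{b_{\text{post}}+a_{\text{post}}^2}$ and only then averages. This tacitly assumes that, given $\phi_{(1)}$, the evaluation error still has mean $a_{\text{post}}$ and variance $b_{\text{post}}$, i.e.\ that $\varepsilon_{\text{post}}$ is independent of who wins. That independence generally fails, as you yourself noted in Step 1. The scores $\hat s_i=V\sigma(\phi_i+\varepsilon_{\text{post}}+\varepsilon_{\text{ante},i})-c_i$ depend on $\varepsilon_{\text{post}}$ through a nonlinear $\sigma$ with heterogeneous costs, and so does the null decision. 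As a result, the paper's conditional inequality can be violated for particular values of $\phi_{(1)}$, even though its average is still true.

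Your route avoids this. You use the pointwise bound $h_{(1)}-p_{(1)}\le L_\sigma|\varepsilon_{\text{post}}|$, which holds for every realization and every selected winner. You then take an unconditional expectation and apply Jensen. This reaches the same threshold $\Delta_{\text{gate}}$ with no independence assumption. Together with your explicit conditioning in Step 1 and the null-event convention, it rigorously justifies what the paper states in a single line. The paper's version gains only brevity, by reusing the lemma in its expectation form.
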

Intuitively, CR holds either (A) when the runner-up margin effectively buffers against evaluation noise, or (B) when the center's evaluation is sufficiently conservative. Under either condition, the mechanism is safe for the Task Center in the sense that its expected utility remains non-negative.

\begin{proposition}[Comparative Statics: Ability and Difficulty]
\label{prop:comparative_statics_unified}
Holding fixed the runner-up score distribution $F_H$, seller $i$'s interim expected utility $\E[U^{\text{seller}}_i]$ is weakly increasing in their model ability $m_i$ and weakly decreasing in the task difficulty $d$.
\end{proposition}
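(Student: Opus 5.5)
The argument goes through the equilibrium expected utility under truthful reporting. By Theorem~\ref{thm:bic}, seller $i$ reports $\hat s_i=\bar T_i$, and
\[
\E[U^{\text{seller}}_i]=\int_{-\infty}^{\bar T_i}(\bar T_i-h)\,dF_H(h)=:\Psi(\bar T_i).
\]
I will show two things: $\Psi$ is weakly increasing in its argument, and $\bar T_i=V g_i-c_i$ is weakly increasing in $m_i$ and weakly decreasing in $d$. Composing the two monotone maps then gives the claim. Throughout, $F_H$ is held fixed, as the statement allows.

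\textbf{Monotonicity of $\Psi$.} Integrating by parts, or equivalently writing $\bar T_i-h=\int_h^{\bar T_i}du$ and applying Fubini, gives
\[
\Psi(x)=\int_{-\infty}^{x}F_H(u)\,du .
\]
Since $F_H\ge 0$, $\Psi$ is weakly increasing, with derivative $F_H(x)$ wherever the expression is differentiable. The lower limit causes no trouble: $H=\max(0,\cdot)\ge 0$, so $F_H(u)=0$ for $u<0$ and the integral is finite.

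A Fubini-free alternative is an envelope-type comparison. If $x'\ge x$, then
\[
\Psi(x')\ge\int_{-\infty}^{x}(x'-h)\,dF_H(h)\ge\Psi(x).
\]
The first inequality holds because the extra integrand on $(x,x']$ is nonnegative. The second holds because the integrand increases pointwise.

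\textbf{Monotonicity of $\bar T_i$ in the primitives.} We have $g_i=\sigma(\phi(m_i,d)+\eta_i)$, with $\sigma$ strictly increasing, $\partial\phi/\partial m_i\ge 0$ and $\partial\phi/\partial d\le 0$. For each fixed realization of the error $\eta_i$, the map $m_i\mapsto\sigma(\phi(m_i,d)+\eta_i)$ is weakly increasing, and $d\mapsto\sigma(\phi(m_i,d)+\eta_i)$ is weakly decreasing.

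This is the step that needs the most care. If $g_i$ is read as a random belief, the monotonicity holds pathwise. If the interim utility is instead understood after averaging over $\eta_i$, then the belief $\E_{\eta_i}[\sigma(\phi_i+\eta_i)]$ is monotone as well, because an expectation preserves pointwise monotonicity. The key assumption in either reading is that the distribution of $\eta_i$ (through $a_{\eta,i}$ and $b_{\eta,i}$) does not depend on $m_i$ or $d$. I will state this explicitly.

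Since $V>0$ and $c_i$ does not depend on $(m_i,d)$, $\bar T_i$ inherits these monotonicities. Composing with the increasing $\Psi$, and then taking expectations over $\eta_i$ if required, yields that $\E[U^{\text{seller}}_i]$ is weakly increasing in $m_i$ and weakly decreasing in $d$.

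Where differentiability is available, the chain rule also gives the explicit expression
\[
\partial_{m_i}\E[U^{\text{seller}}_i]=\E\!\left[F_H(\bar T_i)\,V\sigma'(\phi_i+\eta_i)\right]\partial_{m_i}\phi\ge 0,
\]
and the analogous expression for $d$ is $\le 0$.
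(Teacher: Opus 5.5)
Your proof is correct and follows essentially the same route as the paper. The paper also writes $\E[U^{\text{seller}}_i]=\int_{-\infty}^{\bar T_i}(\bar T_i-h)\,dF_H(h)$, differentiates it by the Leibniz rule to get the slope $\Pr(H\le\bar T_i)\,V\,\E[\sigma'(\phi+\eta_i)]\,\partial\phi/\partial x$, and reads the sign off $\partial\phi/\partial x$. Your monotone-composition version, via $\Psi(x)=\int_{-\infty}^{x}F_H(u)\,du$ or the envelope comparison, is a derivative-free rendering of the same two facts, with the small bonus that it does not need $F_H$ to be continuous (relevant since $H=\max(0,\cdot)$ can have an atom at $0$).
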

Intuitively, stronger models yield higher expected payoffs because they are more likely to meet the evaluator's standard, while increased task difficulty reduces potential gains.

\subsection{Welfare Analysis}
\label{sec:welfare-loss}

For welfare accounting, let the expected welfare under a specified selected seller $i$ be $\E[W_i]=Vp_i-c_i$. We define the \textit{true optimal winner index} (selected under the error-free setting) as $i^\star \in \arg\max_j \{Vp_i-c_i\}$ and the \textit{error-aware setting's winner index} as $i^\dagger \in \arg\max_j\{V g_j-c_j\}$.

\begin{proposition}[Economic Efficiency (EE)]
\label{prop:EE}
The equilibrium allocation induced by the error-aware setting attains the same expected welfare as the error-free benchmark if and only if the error-aware winner $i^\dagger$ is welfare-optimal.
\end{proposition}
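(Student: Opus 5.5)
The proof is essentially definitional once we pin down what each setting's equilibrium allocation is. I will first invoke Theorem~\ref{thm:bic} for the error-involved setting. In the Bayesian Nash equilibrium every seller reports $\hat s_i=\bar T_i=Vg_i-c_i$, so the allocation rule selects $i^\dagger\in\arg\max_i\{Vg_i-c_i\}$. With the dummy seller $0$, $\bar T_0=0$, the null outcome is chosen if all $\bar T_i\le 0$.

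Next I will use Proposition~\ref{prop:error_free_properties} for the error-free benchmark. There $g_i=p_i$, so truthful reporting selects $i^\star\in\arg\max_i\{Vp_i-c_i\}$ (again over $\bar{\mathcal I}$). Its expected welfare is $W^\star=\max_{i\in\bar{\mathcal I}}\E[W_i]$, with $\E[W_0]=0$ for the null allocation.

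The key observation is that realized welfare $W=V\mu_j-c_j$ involves only the true fulfillment indicator and the true cost. These do not depend on $\varepsilon_{\text{post}}$ or $\varepsilon_{\text{ante},i}$, and payments cancel because transfers are internal. Hence, conditional on which seller is selected, expected welfare is $\E[W_{i^\dagger}]=Vp_{i^\dagger}-c_{i^\dagger}$ in the error-aware setting, exactly as it would be under the same selection in the error-free setting. The errors therefore affect welfare only through the identity of the winner.

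The equivalence then follows in two directions.
\begin{itemize}
\item[($\Leftarrow$)] If $i^\dagger$ is welfare-optimal, i.e.\ $i^\dagger\in\arg\max_{i\in\bar{\mathcal I}}\{Vp_i-c_i\}$, then $\E[W_{i^\dagger}]=W^\star$.
\item[($\Rightarrow$)] If $\E[W_{i^\dagger}]=W^\star=\max_i \E[W_i]$, then $i^\dagger$ attains the maximum and so lies in the argmax, which is the definition of welfare-optimal.
\end{itemize}
Ties need no special handling, because the statement is phrased as membership in the argmax rather than $i^\dagger=i^\star$.

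The only delicate step is the bookkeeping around the expectation. If the statement is read with $g_i$ random, $i^\dagger$ is a random index and the claim should hold for each error realization. Equality in expectation over errors then holds iff $i^\dagger$ is welfare-optimal almost surely, since $\E[W_{i^\dagger}]\le W^\star$ pointwise and a nonnegative gap has zero mean only if it vanishes a.s. I will state the pointwise version and note this almost-sure extension.
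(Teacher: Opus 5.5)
Your proposal is correct and follows essentially the same argument as the paper. Conditional on the winner, expected welfare is $Vp_{i^\dagger}-c_{i^\dagger}$, since the errors affect welfare only through which seller is selected, so equality with the benchmark $\max_i\{Vp_i-c_i\}$ amounts to $i^\dagger$ lying in the argmax. Your treatment of the null allocation via the dummy seller, and your almost-sure version for a random $i^\dagger$, are careful refinements that the paper's proof leaves implicit.
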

Intuitively, Dual Error may distort the induced allocation away from the welfare-optimal seller, thereby causing welfare loss. We quantify this effect using the Dual Error's \emph{second-moment radii}, $M_{\mathrm{post}} = \sqrt{b_{\mathrm{post}}+a_{\mathrm{post}}^2}$ and $M_{\mathrm{ante}} = \max_i\sqrt{b_{\mathrm{ante},i}+a_{\mathrm{ante},i}^2}$.

\begin{theorem}[Welfare-loss bound]
\label{thm:welfare-loss-unified}
The expected welfare loss satisfies
$0 \le \E[W_{i^\star}]  -\E[W_{i^\dagger}] 
= \E[(Vp_{i^\star}{-}c_{i^\star}) - (Vp_{i^\dagger}{-}c_{i^\dagger})] 
\le 2V L_\sigma(M_{\mathrm{post}}+M_{\mathrm{ante}})$.
\end{theorem}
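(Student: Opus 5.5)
We take the standard ``two misranking comparisons'' route. Write $g_i=\sigma(\phi_i+\eta_i)$ with $\eta_i=\varepsilon_{\text{post}}+\varepsilon_{\text{ante},i}$, and set $\bar T_i=Vg_i-c_i$. The loss is nonnegative pointwise, since $i^\star$ maximizes $Vp_i-c_i$; hence its expectation is nonnegative. For the upper bound, we insert the error-aware surpluses:
\begin{align*}
(Vp_{i^\star}-c_{i^\star})-(Vp_{i^\dagger}-c_{i^\dagger})
&= \bigl[(Vp_{i^\star}-c_{i^\star})-\bar T_{i^\star}\bigr]
+\bigl[\bar T_{i^\star}-\bar T_{i^\dagger}\bigr]
+\bigl[\bar T_{i^\dagger}-(Vp_{i^\dagger}-c_{i^\dagger})\bigr].
\end{align*}
The middle bracket is $\le 0$ because $i^\dagger$ maximizes $\bar T$. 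The two outer brackets equal $V(p_{i^\star}-g_{i^\star})$ and $V(g_{i^\dagger}-p_{i^\dagger})$, because the costs cancel. Each is therefore at most $V\max_k|g_k-p_k|$, and the loss is at most $2V\max_k|g_k-p_k|$ pointwise.

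Next we use the $L_\sigma$-Lipschitz property of the link function: $|g_k-p_k|=|\sigma(\phi_k+\eta_k)-\sigma(\phi_k)|\le L_\sigma|\eta_k|\le L_\sigma(|\varepsilon_{\text{post}}|+|\varepsilon_{\text{ante},k}|)$. The post term is common to all sellers. Combining with the previous step gives the pointwise bound
\begin{align*}
\text{loss}\le 2VL_\sigma\bigl(|\varepsilon_{\text{post}}|+|\varepsilon_{\text{ante},i^\star}|+|\varepsilon_{\text{ante},i^\dagger}|\bigr)/\text{(paired appropriately)},
\end{align*}
which is more cleanly written as $V L_\sigma(|\eta_{i^\star}|+|\eta_{i^\dagger}|)$.

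We then take expectations and apply Jensen (or Cauchy--Schwarz): $\E|\varepsilon_{\text{post}}|\le\sqrt{\E[\varepsilon_{\text{post}}^2]}=\sqrt{b_{\text{post}}+a_{\text{post}}^2}=M_{\mathrm{post}}$. Similarly, for any fixed $k$, $\E|\varepsilon_{\text{ante},k}|\le M_{\mathrm{ante}}$. This yields $\E[\text{loss}]\le VL_\sigma(2M_{\mathrm{post}}+\E|\varepsilon_{\text{ante},i^\star}|+\E|\varepsilon_{\text{ante},i^\dagger}|)$.

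Three terms are handled easily:
\begin{itemize}
\item The index $i^\star$ is deterministic, because it depends only on the true types. Its term is therefore bounded directly by $M_{\mathrm{ante}}$.
\item The $\varepsilon_{\text{post}}$ contributions are unaffected by the random choice of index, since this error is common to all sellers.
\item If $\sigma$'s Lipschitz bound is applied before taking expectations, nothing else is needed for these pieces.
\end{itemize}

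The main obstacle is the term $\E|\varepsilon_{\text{ante},i^\dagger}|$. The index $i^\dagger$ is random and correlated with the ante errors, because a large positive error makes a seller more likely to win. Bounding this term naively by $\E\max_k|\varepsilon_{\text{ante},k}|$ would introduce an $N$-dependent factor ($\sqrt N M_{\mathrm{ante}}$ via $\E\max\le\sqrt{\sum_k\E\varepsilon_k^2}$).

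My first attempt will avoid selection entirely. I will use the convention built into the definition of $M_{\mathrm{ante}}$ as a maximum over $i$ and interpret the welfare comparison as in the statement's middle expression, where the expectation is taken with the winner identities fixed by the realized ranking. Concretely, I will bound $|\eta_{i^\dagger}|$ by $|\varepsilon_{\text{post}}|+|\varepsilon_{\text{ante},i^\dagger}|$ and condition on $i^\dagger=k$, summing over $k$ with weights $\Pr(i^\dagger=k)$. This requires controlling $\E[|\varepsilon_{\text{ante},k}|\,\mathbf 1\{i^\dagger=k\}]$ summed over $k$. If a selection-free argument fails, I will instead state the step under the (implicit) assumption that the winner's ante error obeys the same second-moment bound, i.e.\ $\E[\varepsilon_{\text{ante},i^\dagger}^2]\le M_{\mathrm{ante}}^2$. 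I would then note that this is exactly what the definition $M_{\mathrm{ante}}=\max_i(\cdot)$ is intended to supply.

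With this bound in hand, the two outer brackets contribute at most $VL_\sigma(M_{\mathrm{post}}+M_{\mathrm{ante}})$ each, giving the claimed $2VL_\sigma(M_{\mathrm{post}}+M_{\mathrm{ante}})$.
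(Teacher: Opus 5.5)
\textbf{There is a genuine gap.} Your Step 1 sandwich is exactly the paper's. The way you bound the two outer brackets, however, leaves a gap that cannot be closed along your route.

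You apply the Lipschitz bound to the \emph{realized} errors before taking expectations. That leaves you with $\E|\varepsilon_{\text{ante},i^\dagger}|$ at a data-dependent index. Neither of your remedies handles it:
\begin{itemize}
\item Conditioning on $\{i^\dagger=k\}$ and summing only reproduces $\sum_k\E\bigl[|\varepsilon_{\text{ante},k}|\,\mathbf 1\{i^\dagger=k\}\bigr]=\E|\varepsilon_{\text{ante},i^\dagger}|$.
\item Your fallback $\E[\varepsilon_{\text{ante},i^\dagger}^2]\le M_{\mathrm{ante}}^2$ is not supplied by the definition of $M_{\mathrm{ante}}$. That definition is a maximum of unconditional moments at fixed indices, and selecting the largest noisy score (the winner's curse) violates it.
\end{itemize}

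In fact, with realized beliefs $g_k=\sigma(\phi_k+\eta_k)$ and a random winner, the inequality itself is false. Take logistic $\sigma$ (so $L_\sigma=1/4$), $\varepsilon_{\text{post}}\equiv 0$, and all $\phi_k=0$. Let $\varepsilon_{\text{ante},k}\sim\mathcal N(0,s^2)$ be i.i.d.\ with $s$ small. Set $c_1=c$ and $c_k=c+\delta$ for $k\ge 2$, where $\delta=\kappa VL_\sigma s$.
\begin{itemize}
\item Then $i^\star=1$ and the loss equals $\delta\Pr(i^\dagger\neq 1)$.
\item As $N\to\infty$, the best of the $N-1$ noisy rivals beats seller $1$ with probability tending to essentially one, so the loss approaches $\delta$.
\item The claimed bound is $2VL_\sigma s=2\delta/\kappa$, so any $\kappa>2$ gives a violation for large $N$.
\end{itemize}
Your third bullet therefore has it backwards: applying the Lipschitz bound before taking expectations is precisely what creates the problem.

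\textbf{How the paper avoids this.} The paper sidesteps selection by treating $g_i$ as the seller's \emph{expected} belief $\E[\sigma(\phi_i+\eta_i)]$. Its proof uses $g_i$ this way: it telescopes $p_i\to\E[\sigma(\phi_i+\varepsilon_{\text{post}})]\to g_i$ and applies Lemma~\ref{lem:link-lip} to each step. The comparative-statics proof differentiates $g_i$ the same way. Under this reading, $\bar T_i$ and $i^\dagger$ are deterministic, and for every fixed $i$
\[
|g_i-p_i|=\bigl|\E[\sigma(\phi_i+\eta_i)-\sigma(\phi_i)]\bigr|\le L_\sigma\E|\eta_i|\le L_\sigma\bigl(M_{\mathrm{post}}+M_{\mathrm{ante}}\bigr).
\]
Plugging this into your own pointwise bound $2V\max_k|g_k-p_k|$ finishes the proof with no selection issue.

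So keep your Step 1, but make two changes. Take the expectation inside $g_k$ before invoking Lipschitz. Then bound $|g_k-p_k|$ uniformly over fixed $k$, rather than at the realized winner.
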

This bound provides a guarantee of robustness, showing that the efficiency degradation scales linearly with the aggregate magnitude of the Dual Error and vanishes as the estimation quality improves.

\subsection{Structural Insights}\label{sec:comp-statics}

% The preceding analysis establishes that EA-RAM is well-defined and incentive-aligned under the Dual-Error evaluation environment. We now move beyond validity and study its internal structure.

Having established EA-RAM's validity under Dual Error, we now examine its structural insights.

\begin{proposition}[Opposite-Signed Error Compensation]
\label{prop:counteracting-error}
If $(g_i-h_i)(h_i-p_i)\le 0$, for $i\in\{i^\star,i^\dagger\}$, then the welfare-loss bound tightens to $2V L_\sigma\max\{M_{\mathrm{ante}},M_{\mathrm{post}}\}$.
\end{proposition}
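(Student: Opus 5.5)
We first need to reconstruct how the bound of Theorem~\ref{thm:welfare-loss-unified} is obtained, and then tighten the one step where the two errors are combined. The standard argument runs as follows. Since $i^\dagger$ maximizes $Vg_j-c_j$, we have $Vg_{i^\dagger}-c_{i^\dagger}\ge Vg_{i^\star}-c_{i^\star}$. This gives
\[
(Vp_{i^\star}-c_{i^\star})-(Vp_{i^\dagger}-c_{i^\dagger})\le V\bigl(p_{i^\star}-g_{i^\star}\bigr)+V\bigl(g_{i^\dagger}-p_{i^\dagger}\bigr)\le V|g_{i^\star}-p_{i^\star}|+V|g_{i^\dagger}-p_{i^\dagger}|.
\]
In the general case, each deviation is split by the triangle inequality as $|g_i-p_i|\le|g_i-h_i|+|h_i-p_i|$. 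The mean value theorem together with the $L_\sigma$-Lipschitz property of $\sigma$ then gives $|h_i-p_i|\le L_\sigma|\varepsilon_{\text{post}}|$ and $|g_i-h_i|\le L_\sigma|\varepsilon_{\text{ante},i}|$. Taking expectations and using Jensen/Cauchy--Schwarz, $\E|\varepsilon|\le\sqrt{\E[\varepsilon^2]}$, yields the radii $M_{\mathrm{post}}$ and $M_{\mathrm{ante}}$. Two indices, each contributing $VL_\sigma(M_{\mathrm{post}}+M_{\mathrm{ante}})$, give the factor $2$.

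The tightening comes from replacing the triangle inequality by an exact statement under the sign hypothesis. If $(g_i-h_i)(h_i-p_i)\le0$, then $g_i-h_i$ and $h_i-p_i$ have opposite signs (or one is zero). For two reals $x,y$ with $xy\le0$ we have $|x+y|\le\max\{|x|,|y|\}$, so
\[
|g_i-p_i|=|(g_i-h_i)+(h_i-p_i)|\le\max\{|g_i-h_i|,|h_i-p_i|\}.
\]
We apply this for $i\in\{i^\star,i^\dagger\}$, which is exactly where the hypothesis is assumed. Combined with the Lipschitz step, this gives
\[
|g_i-p_i|\le L_\sigma\max\{|\varepsilon_{\text{ante},i}|,|\varepsilon_{\text{post}}|\}.
\]

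The main obstacle is passing the expectation through the maximum, since $\E\max\{X,Y\}\ne\max\{\E X,\E Y\}$ in general. My first attempt is to bound the maximum by the root of the larger second moment. Using $\max\{X,Y\}^2\le X^2+Y^2$ gives only $\sqrt{M_{\mathrm{ante}}^2+M_{\mathrm{post}}^2}$. This is already at most $M_{\mathrm{ante}}+M_{\mathrm{post}}$, so it is a genuine tightening, but it is not quite $\max\{M_{\mathrm{ante}},M_{\mathrm{post}}\}$.

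To reach the stated form, I would instead use the structure of the signed deviation. Write $\Delta_1=\sigma(\phi_i+\eta_i)-\sigma(\phi_i+\varepsilon_{\text{post}})$ and $\Delta_2=\sigma(\phi_i+\varepsilon_{\text{post}})-\sigma(\phi_i)$. Because $\sigma$ is monotone, the condition $\Delta_1\Delta_2\le0$ means that $\phi_i+\eta_i$ lies between $\phi_i$ and $\phi_i+\varepsilon_{\text{post}}$, or on the other side of $\phi_i$ in a way that is controlled by $\varepsilon_{\text{ante},i}$. In the first sub-case we get $|g_i-p_i|\le|h_i-p_i|\le L_\sigma|\varepsilon_{\text{post}}|$. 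In the second, $|g_i-p_i|\le|g_i-h_i|\le L_\sigma|\varepsilon_{\text{ante},i}|$.

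If the pointwise case split still leaves a mixture, I would accept the natural reading that the expectation is bounded termwise by the dominant radius. That is, I would bound $\E[\cdot]$ by $L_\sigma\max\{M_{\mathrm{ante}},M_{\mathrm{post}}\}$ for each of the two indices and sum them, which yields $2VL_\sigma\max\{M_{\mathrm{ante}},M_{\mathrm{post}}\}$. I would state explicitly where the pointwise maximum is dominated, for instance by noting $\E\max\{|X|,|Y|\}\le\sqrt{\E\max\{X^2,Y^2\}}$. Making that last inequality rigorous without extra assumptions is the step I expect to need the most care.
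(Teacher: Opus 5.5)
Your skeleton matches the paper's: the sandwich inequality from Theorem~\ref{thm:welfare-loss-unified}, the split $g_i-p_i=(g_i-h_i)+(h_i-p_i)$, and $|x+y|\le\max\{|x|,|y|\}$ for $xy\le 0$. The gap is the step you flag yourself, and more care will not close it. You apply the Lipschitz bound to realized errors and take the expectation last, so you need $\E\max\{|\varepsilon_{\text{ante},i}|,|\varepsilon_{\text{post}}|\}\le\max\{M_{\mathrm{ante}},M_{\mathrm{post}}\}$. In that reading even the per-index bound $\E|g_i-p_i|\le L_\sigma\max\{M_{\mathrm{ante}},M_{\mathrm{post}}\}$ is false. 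Here is a counterexample:
\begin{itemize}
\item Take $\sigma$ with slope exactly $L_\sigma$ on $[\phi_i-1,\phi_i+4]$; this is a legitimate link for small $L_\sigma$.
\item Let $\varepsilon_{\text{ante},i}\equiv-1$, and let $\varepsilon_{\text{post}}=4$ with probability $\tfrac{1}{16}$ and $0$ otherwise.
\item Then $g_i-h_i=-L_\sigma$ and $h_i-p_i=L_\sigma\varepsilon_{\text{post}}\ge 0$, so the sign condition holds in every realization.
\item Both second-moment radii equal $1$.
\item Yet $\E|g_i-p_i|=L_\sigma\bigl(3\cdot\tfrac{1}{16}+1\cdot\tfrac{15}{16}\bigr)=\tfrac{9}{8}L_\sigma$.
\end{itemize}
Your Jensen step $\E\max\{|X|,|Y|\}\le\sqrt{\E\max\{X^2,Y^2\}}$ cannot rescue this, since $\E\max\{X^2,Y^2\}$ generally exceeds $\max\{\E X^2,\E Y^2\}$. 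So the ``dominant radius'' reading you propose to accept is not justifiable along your route.

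The missing idea is to apply Lemma~\ref{lem:link-lip} to averaged probabilities rather than pointwise, which is what the paper does. Read $h_i=\E[\sigma(\phi_i+\varepsilon_{\text{post}})]$ and $g_i=\E[\sigma(\phi_i+\eta_i)]$ as numbers, exactly as in Step~2 of the proof of Theorem~\ref{thm:welfare-loss-unified}. The lemma then gives $|h_i-p_i|\le L_\sigma M_{\mathrm{post}}$ directly. Since $|\sigma(x+e)-\sigma(x)|\le L_\sigma|e|$ for every $x$, it also gives $|g_i-h_i|\le L_\sigma\E|\varepsilon_{\text{ante},i}|\le L_\sigma M_{\mathrm{ante}}$. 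With the sign hypothesis imposed on these numbers, $|g_i-p_i|\le\max\{|g_i-h_i|,|h_i-p_i|\}\le L_\sigma\max\{M_{\mathrm{ante}},M_{\mathrm{post}}\}$ is a deterministic inequality. There is no expectation to move through the maximum, and summing over $i\in\{i^\star,i^\dagger\}$ finishes the proof.

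This reading also makes $i^\dagger$ deterministic, which removes a second problem in your reconstruction. If the winner index depends on the realized errors, $\E|\varepsilon_{\text{ante},i^\dagger}|$ is not controlled by $M_{\mathrm{ante}}$. The auction tends to select the seller whose ante error happens to be large and positive.
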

Intuitively, when ex-ante and ex-post errors have opposite signs, they partially cancel out, reducing the resulting welfare loss.

\begin{proposition}[Saturation Robustness]
\label{prop:saturation-robustness}
In addition to the baseline link-function assumptions, suppose $\sigma$ has vanishing tails, i.e., $\lim_{|x|\to\infty}\sigma'(x)=0$. For any error $\epsilon$ with $\E[\epsilon^2]<\infty$, the deviation $\Delta(\phi_i)=|\E[\sigma(\phi_i+\epsilon)]-\sigma(\phi_i)|$ satisfies $\lim_{|\phi_i|\to\infty}\Delta(\phi_i)=0$.
\end{proposition}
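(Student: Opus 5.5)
The plan is to apply dominated convergence to the pointwise differences $D_\phi(\epsilon)=\sigma(\phi+\epsilon)-\sigma(\phi)$, after truncating the error at a large radius. Writing $\phi=\phi_i$,
\[
\Delta(\phi)\le \E\bigl[|\sigma(\phi+\epsilon)-\sigma(\phi)|\bigr].
\]
First note the integrability. Since $\sigma$ maps into $(0,1)$, we have $|D_\phi(\epsilon)|\le 1$. By the Lipschitz property, $|D_\phi(\epsilon)|\le L_\sigma|\epsilon|$. The bound $\E[\epsilon^2]<\infty$ gives $\E|\epsilon|<\infty$, so every expectation involved is finite. The constant dominating function $1$ would already suffice, but the Lipschitz bound will also be used.

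Next, fix $\delta>0$ and pick $R$ with $\Pr(|\epsilon|>R)\le\delta$; Chebyshev's inequality, $\Pr(|\epsilon|>R)\le\E[\epsilon^2]/R^2$, gives an explicit choice. Split the expectation as
\[
\E\bigl[|D_\phi(\epsilon)|\mathbf{1}\{|\epsilon|\le R\}\bigr]+\E\bigl[|D_\phi(\epsilon)|\mathbf{1}\{|\epsilon|>R\}\bigr].
\]
The second term is at most $\Pr(|\epsilon|>R)\le\delta$ because $|D_\phi|\le1$.

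For the first term, use the mean value theorem: on $\{|\epsilon|\le R\}$,
\[
|D_\phi(\epsilon)|\le |\epsilon|\sup_{|x-\phi|\le R}\sigma'(x)\le R\,\sup_{|x-\phi|\le R}\sigma'(x).
\]
The vanishing-tails assumption $\lim_{|x|\to\infty}\sigma'(x)=0$ gives $\sup_{|x-\phi|\le R}\sigma'(x)\to0$ as $|\phi|\to\infty$ with $R$ fixed. This holds because every point $x$ of the window satisfies $|x|\ge|\phi|-R\to\infty$, uniformly over the window. Hence $\limsup_{|\phi|\to\infty}\Delta(\phi)\le\delta$. Since $\delta$ was arbitrary, $\Delta(\phi)\to0$.

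The only delicate step is this uniformity of the tail condition over the window $[\phi-R,\phi+R]$. It relies on $R$ being chosen independently of $\phi$, which the truncation ensures.

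An alternative would avoid truncation: $D_\phi(\epsilon)\to0$ pointwise for each fixed $\epsilon$ by the same mean value argument, and the integrable bound $L_\sigma|\epsilon|$ (or simply $1$) dominates, so dominated convergence applies along any sequence $|\phi_n|\to\infty$. I would present the truncation version, since it also yields a quantitative rate in terms of $\E[\epsilon^2]$ and the decay of $\sigma'$.
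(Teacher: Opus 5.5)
Your proof is correct. It follows a genuinely different route from the paper's, in where the split is made and in what controls the bad event.

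\textbf{The paper's route.} It writes $\sigma(\phi_i+\epsilon)-\sigma(\phi_i)=\epsilon\int_0^1\sigma'(\phi_i+t\epsilon)\,dt$ and splits on the $\phi_i$-dependent event $\{|\epsilon|\le|\phi_i|/2\}$. On that event the whole segment $\phi_i+t\epsilon$ stays at distance at least $|\phi_i|/2$ from the origin, so it contributes at most $\E[|\epsilon|]\sup_{|x|\ge|\phi_i|/2}|\sigma'(x)|$. The complement is bounded by $\|\sigma'\|_\infty\,\E\bigl[|\epsilon|\mathbf{1}\{|\epsilon|>|\phi_i|/2\}\bigr]\to0$, which is the uniform integrability of $|\epsilon|$.

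\textbf{Your route.} You truncate at a radius $R$ fixed independently of $\phi_i$. You pay for the tail event with the trivial bound $|\sigma(\phi_i+\epsilon)-\sigma(\phi_i)|\le 1$, which uses that $\sigma$ takes values in $(0,1)$.

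\textbf{What each buys.} Your version makes the moment hypothesis superfluous. Choosing $R$ with $\Pr(|\epsilon|>R)\le\delta$ only requires $\epsilon$ to be a.s.\ finite, so the conclusion holds for any real-valued error. Chebyshev is needed only to make $R$ explicit, and with it the rate $\Delta(\phi_i)\le\E[\epsilon^2]/R^2+R\sup_{|x|\ge|\phi_i|-R}\sigma'(x)$. Your dominated-convergence alternative with the constant dominating function $1$ shows the same thing. The paper's version does need $\E[|\epsilon|]<\infty$, though this is weaker than the stated second-moment condition. In exchange, it never uses boundedness of $\sigma$, only boundedness of $\sigma'$. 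It would therefore carry over to unbounded Lipschitz links whose derivative flattens at infinity, where your $|D_\phi|\le 1$ step would fail.
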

% Intuitively, when the link function has flat tails (i.e., $\sigma'(x)\!\to\!0$ as $|x|\!\to\!\infty$, as in the logistic function) and $|\phi_i|$ is large, $\sigma$ is near saturation, so noise in $\phi_i$ barely changes $\sigma(\phi_i)$ and the mechanism is robust in clear-cut cases.
Intuitively, flat-tailed links such as the logistic function saturate when $|\phi_i|$ is large, so score noise barely affects $\sigma(\phi_i)$ and the mechanism is robust in clear-cut cases.

\begin{proposition}[Noise-Induced Flattening]
\label{prop:variance_flattening}
Let $\eta$ be an independent noise term with $\E[|\eta|]<\infty$. Define the perturbed belief maps by convolution:
$\tilde g_i(\phi)=\E[g_i(\phi+\eta)]$ and $\tilde h_i(\phi)=\E[h_i(\phi+\eta)]$.
If $g_i$ and $h_i$ are continuously differentiable with bounded derivatives, then
$\sup_{\phi}|\tfrac{\partial \tilde g_i}{\partial \phi}(\phi)|\le \sup_{\phi}|\tfrac{\partial g_i}{\partial \phi}(\phi)|$
and
$\sup_{\phi}|\tfrac{\partial \tilde h_i}{\partial \phi}(\phi)|\le \sup_{\phi}|\tfrac{\partial h_i}{\partial \phi}(\phi)|$.
Consequently, injecting additional independent noise weakly reduces the maximal sensitivity of both $\phi\mapsto g_i(\phi)$ and $\phi\mapsto h_i(\phi)$.
\end{proposition}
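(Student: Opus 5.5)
The plan is to differentiate under the expectation and then bound the averaged derivative by its supremum.

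Fix $i$ and write $G=\sup_\phi|g_i'(\phi)|<\infty$. I would argue only for $g_i$; the argument for $h_i$ is identical with $g_i$ replaced by $h_i$.

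\textbf{Step 1 (well-definedness).} Since $g_i$ is $G$-Lipschitz, $|g_i(\phi+\eta)|\le |g_i(\phi)|+G|\eta|$. The assumption $\E[|\eta|]<\infty$ then shows that $\tilde g_i(\phi)=\E[g_i(\phi+\eta)]$ is finite for every $\phi$. In the paper's setting $g_i$ takes values in $(0,1)$, so this step is trivial there, but the Lipschitz bound covers the general case.

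\textbf{Step 2 (interchange of derivative and expectation).} I would show that
\[
\tilde g_i'(\phi)=\E\bigl[g_i'(\phi+\eta)\bigr].
\]
Consider the difference quotient $\bigl(g_i(\phi+\delta+\eta)-g_i(\phi+\eta)\bigr)/\delta$. By the mean value theorem it equals $g_i'(\phi+\xi+\eta)$ for some $\xi$ between $0$ and $\delta$. It is therefore bounded in absolute value by the constant $G$, which is integrable. By continuity of $g_i'$, it converges pointwise to $g_i'(\phi+\eta)$ as $\delta\to0$. Dominated convergence then gives the displayed identity.

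This interchange is the only step that needs care, and it is the main obstacle. The hypotheses of continuous differentiability and a bounded derivative are exactly what make dominated convergence work. If one wanted to avoid differentiability of $\tilde g_i$ altogether, one could instead bound the Lipschitz constant directly:
\[
|\tilde g_i(\phi_1)-\tilde g_i(\phi_2)|\le \E\bigl|g_i(\phi_1+\eta)-g_i(\phi_2+\eta)\bigr|\le G|\phi_1-\phi_2|.
\]
This already controls $\sup|\tilde g_i'|$ wherever the derivative exists.

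\textbf{Step 3 (the bound).} With Step 2 in hand, Jensen's inequality (or simply monotonicity of expectation) gives
\[
|\tilde g_i'(\phi)|\le \E\bigl|g_i'(\phi+\eta)\bigr|\le \sup_x|g_i'(x)|=G \quad\text{for every }\phi.
\]
Taking the supremum over $\phi$ yields the claimed inequality, and repeating the argument with $h_i$ in place of $g_i$ yields the second inequality.

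The concluding statement about reduced sensitivity is then a direct restatement. The maximal slope, which also serves as the Lipschitz constant, of each perturbed belief map is weakly smaller than that of the original map. It follows that any marginal change in $\phi$, such as a marginal manipulation, moves $\tilde g_i$ and $\tilde h_i$ by at most as much as it moves $g_i$ and $h_i$.
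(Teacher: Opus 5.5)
Your proposal is correct and follows essentially the same route as the paper: differentiate under the expectation by dominated convergence, then bound $|\E[g_i'(\phi+\eta)]|\le \E|g_i'(\phi+\eta)|\le \sup_y|g_i'(y)|$ and take the supremum over $\phi$. Your Step 1 and your explicit mean-value-theorem domination argument spell out details the paper leaves implicit, and your direct Lipschitz bound is a harmless alternative.
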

Intuitively, ex-ante error and ex-post error lower the maximal sensitivity of the ranking map and the approval map, respectively, limiting the return to strategic manipulation.

%% file: section/3_Experiments.tex
\section{Experiments}
\label{sec:experiments}

% We evaluate EA-RAM through both controlled simulations and real-world benchmarks. Our experiments are designed to answer the following research questions:
% \begin{itemize}[itemsep=0em, topsep=0em, leftmargin=1em]
%     \item \textbf{RQ1 (Robustness):} How does EA-RAM perform under varying degrees of the \textit{Dual Error}?
%     \item \textbf{RQ2 (Economic Efficiency):} Can EA-RAM achieve a better cost--performance trade-off than traditional centralized strategies in real-world benchmarks?
%     \item \textbf{RQ3 (Information Utilization):} To what extent does the incorporation of seller-side local information gain enhance the overall routing quality of EA-RAM?
%     \item \textbf{RQ4 (Scalability and Efficiency):} How well does EA-RAM scale as the number of candidate LLM providers grows, and what are the resulting computation and communication costs?
% \end{itemize}

\begin{wraptable}{r}{0.26\linewidth}
\vspace{-1.2em}
\centering
\resizebox{\linewidth}{!}{
\setlength{\tabcolsep}{1.2pt}
\fontsize{6pt}{7pt}\selectfont
\begin{tabular}{ccccc}
\toprule
\textbf{Seller} & $m_i$ & $c_i$ & $p_i$ & $s_i$ \\ 
\midrule
0 & 1.0 & 9.0  & 0.38 & -1.45 \\
1 & 1.5 & 10.0 & 0.50 & 0.00 \\
2 & 2.5 & 12.0 & 0.73 & 2.62 \\
3 & 3.5 & 12.8 & 0.88 & 4.81 \\
4 & 4.5 & 13.8 & 0.95 & 5.25 \\ 
\bottomrule
\end{tabular}
}
\caption{Simulation Settings.}
\label{tab:sim_settings}
\vspace{-3em}
\end{wraptable}

\subsection{Simulation Experiments}
In this section, we run simulation experiments that explicitly control the magnitude of the Dual Error to evaluate EA-RAM's robustness.

\subsubsection{Experimental Setup}
\label{sec:simulation_exp_setup}

% We simulate a routing environment with $N=5$ heterogeneous sellers competing for a task ($V{=}20, d{=}1.5$) (Table~\ref{tab:sim_settings}). 
% To test robustness, we construct a highly competitive market where the surplus gap between the top two sellers is small ($\Delta s = 0.44$); thus, even mild noise can flip their ranking.
% We evaluate the performance using the \textit{Welfare Gap} relative to the Error-Free setting (the theoretical upper bound without the Dual Error, represented by the $y=0$ line). The compared settings are as follows:
% (1) \textbf{\textcolor[HTML]{9467BD}{Ante-Error-Free} (Practical Upper Bound):}
% An oracle seller that perfectly models the buyer's evaluation $h_i$ and best responds, submitting $p_i+\epsilon_{post}$.
% (2) \textbf{\textcolor[HTML]{C44E52}{Error-Naive} (Baseline):}
% Naive sellers that ignore the evaluation rule and report $p_i+\epsilon_{ante}$.
% (3) \textbf{\textcolor[HTML]{56A159}{EA-RAM} (Ours):}
% Sellers report according to belief $g_i$.

We simulate $N=5$ heterogeneous sellers competing for one task with $V{=}20$ and $d{=}1.5$ (Table~\ref{tab:sim_settings}). To stress-test robustness, we use a highly competitive market with a small top-two surplus gap ($\Delta s=0.44$), where mild noise can flip the ranking. Performance is measured by the \textit{Welfare Gap} relative to the Error-Free upper bound, shown as the $y=0$ line. We compare:
(1) \textbf{\textcolor[HTML]{9467BD}{Ante-Error-Free} (Practical Upper Bound):} an oracle seller that perfectly models $h_i$ and best responds by reporting $p_i+\epsilon_{post}$;
(2) \textbf{\textcolor[HTML]{C44E52}{Error-Naive} (Baseline):} naive sellers that ignore evaluation and report $p_i+\epsilon_{ante}$;
and
(3) \textbf{\textcolor[HTML]{56A159}{EA-RAM} (Ours):} sellers report according to belief $g_i$.

\subsubsection{Error Robustness Analysis}
\label{sec:simulation_results}

\begin{figure}[t]
    \centering
    \begin{subfigure}{0.48\textwidth}
        \centering
        \includegraphics[width=\linewidth]{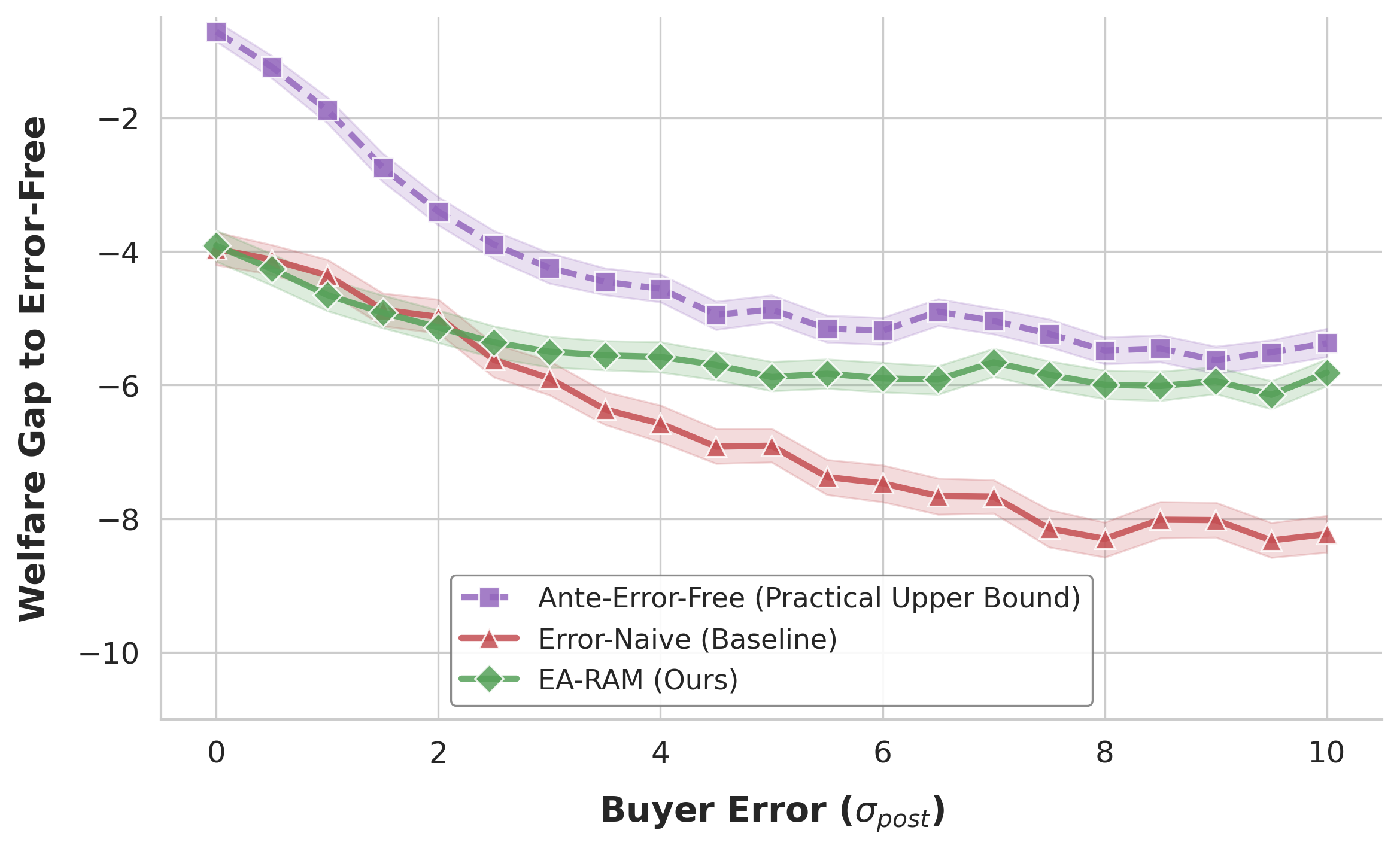}
        \caption{Robustness to evaluation error $\sigma_{\text{post}}$. \textcolor[HTML]{56A159}{EA-RAM} closely tracks the \textcolor[HTML]{9467BD}{Ante-Error-Free} upper bound, while \textcolor[HTML]{C44E52}{Error-Naive} degrades sharply.}
        \label{fig:buyer_noise}
    \end{subfigure}
    \hfill
    \begin{subfigure}{0.48\textwidth}
        \centering
        \includegraphics[width=\linewidth]{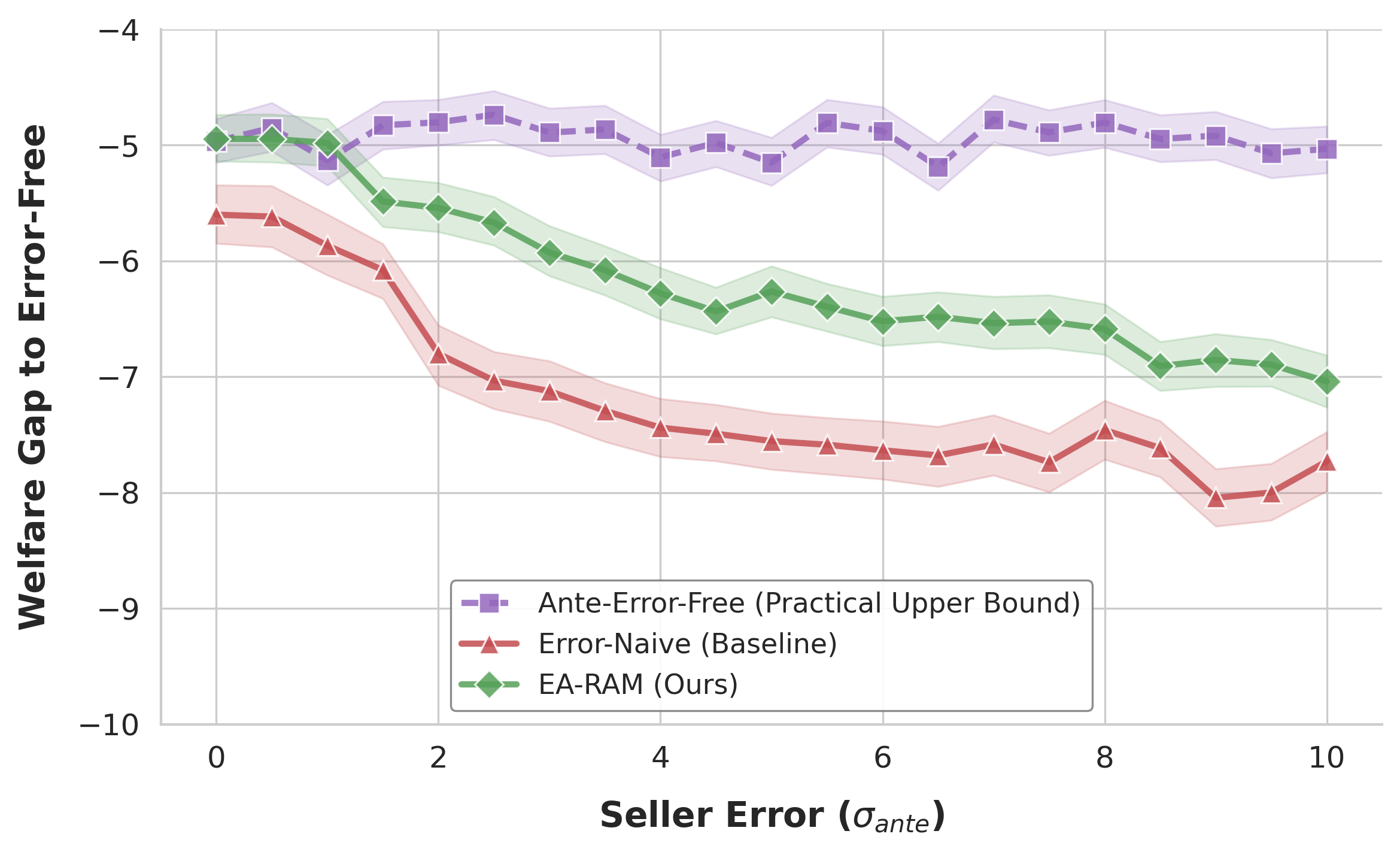}
        \caption{Robustness to prediction error $\sigma_{\text{ante}}$. \textcolor[HTML]{56A159}{EA-RAM} consistently outperforms \textcolor[HTML]{C44E52}{Error-Naive} as seller information quality deteriorates.}
        \label{fig:seller_noise}
    \end{subfigure}
    \caption{Robustness to Dual Error. \textcolor[HTML]{56A159}{EA-RAM} remains stable under evaluator and seller noise, avoiding the misallocation caused by naive reporting. Shaded regions show 95\% confidence intervals.}
    \label{fig:welfare_analysis}
    \vspace{-10pt}
\end{figure}

% Figure~\ref{fig:buyer_noise} evaluates robustness to evaluation error ($\sigma_{post}$).
% The \textcolor[HTML]{9467BD}{Ante-Error-Free} curve serves as an oracle upper bound (a perfectly informed seller best-responding to $h_i$).
% Although \textcolor[HTML]{56A159}{EA-RAM} cannot fully attain this bound due to estimation frictions, it closely tracks its trend, yielding a largely stable welfare gap as $\sigma_{post}$ increases.
% By contrast, the \textcolor[HTML]{C44E52}{Error-Naive} baseline deteriorates sharply and diverges from the bound.
% \textbf{Overall, \textcolor[HTML]{56A159}{EA-RAM} avoids the severe misallocation induced by ignoring the seller's strategic bidding.}
% Figure~\ref{fig:seller_noise} varies the prediction error ($\sigma_{ante}$) level of seller~0, controlling the degree of information contamination.
% \textbf{Although a decline in welfare is inevitable as seller information quality deteriorates, \textcolor[HTML]{56A159}{EA-RAM} consistently outperforms \textcolor[HTML]{C44E52}{Error-Naive} across the full range of $\sigma_{ante}$.}
% Internalizing the evaluation mechanism makes \textcolor[HTML]{56A159}{EA-RAM} degrade more gracefully, preserving the best efficiency feasible given the available imperfect information.

Figure~\ref{fig:buyer_noise} evaluates robustness to evaluation error ($\sigma_{post}$). 
The \textcolor[HTML]{9467BD}{Ante-Error-Free} curve is an upper bound, where sellers perfectly model $h_i$ and best respond. 
\textbf{\textcolor[HTML]{56A159}{EA-RAM} closely tracks this bound and maintains a stable welfare gap as $\sigma_{post}$ increases}, whereas \textcolor[HTML]{C44E52}{Error-Naive} deteriorates sharply. 
Figure~\ref{fig:seller_noise} varies seller~0's prediction error ($\sigma_{ante}$). 
Although welfare inevitably declines as seller information becomes noisier, \textbf{\textcolor[HTML]{56A159}{EA-RAM} consistently outperforms \textcolor[HTML]{C44E52}{Error-Naive}}, showing that internalizing the evaluation mechanism avoids severe misallocation and yields more graceful degradation.

\subsection{Real-World Experiments}

In this section, we benchmark EA-RAM against centralized baselines on real-world benchmarks.

\subsubsection{Experimental Setup}
\label{sec:realworld-setup}

\paragraph{Benchmark and baselines.}
We evaluate routing policies on RouterBench~\cite{hu2024routerbench}, which aggregates per-query accuracy and monetary cost for 11 LLMs (Appendix~\ref{app:dataset_details}), spanning both open-source (e.g., Llama, Mixtral) and proprietary models (e.g., GPT, Claude). 
We use queries from six benchmarks covering reasoning (HellaSwag~\cite{zellers-etal-2019-hellaswag}, Winogrande~\cite{sakaguchi2021winogrande}, ARC-Challenge~\cite{clark2018think}), coding (MBPP~\cite{austin2021program}), math (GSM8k~\cite{cobbe2021training}), and knowledge-intensive understanding (MMLU~\cite{hendrycks2021measuring}), and split them into train/test with a 70/30 ratio.
We compare EA-RAM against a diverse set of strong baselines: EmbedLLM~\cite{zhuang2025embedllm}, IRT-Router~\cite{song-etal-2025-irt}, RouteLLM~\cite{ong2025routellm}, FrugalGPT~\cite{chen2024frugalgpt}, and Cascade Routing~\cite{dekoninck2025a}. They cover a wide range of architectures, enabling a comprehensive comparison.

\paragraph{Implementation details.}

EA-RAM trains two types of models. 
Each LLM has a seller-side ex-ante predictor, trained separately to estimate the probability that it will correctly answer a query based on the query embedding. 
The buyer maintains an ex-post evaluator that predicts answer correctness from concatenated query and answer embeddings. 
Both modules are MLPs with two layers. The embedding model is \texttt{all-MiniLM-L6-v2}~\cite{sentence_transformers_all_minilm_l6_v2}. More details are provided in Appendix~\ref{app:other_implementation_details}.

\begin{figure*}[t]
    \centering
    \includegraphics[width=\linewidth]{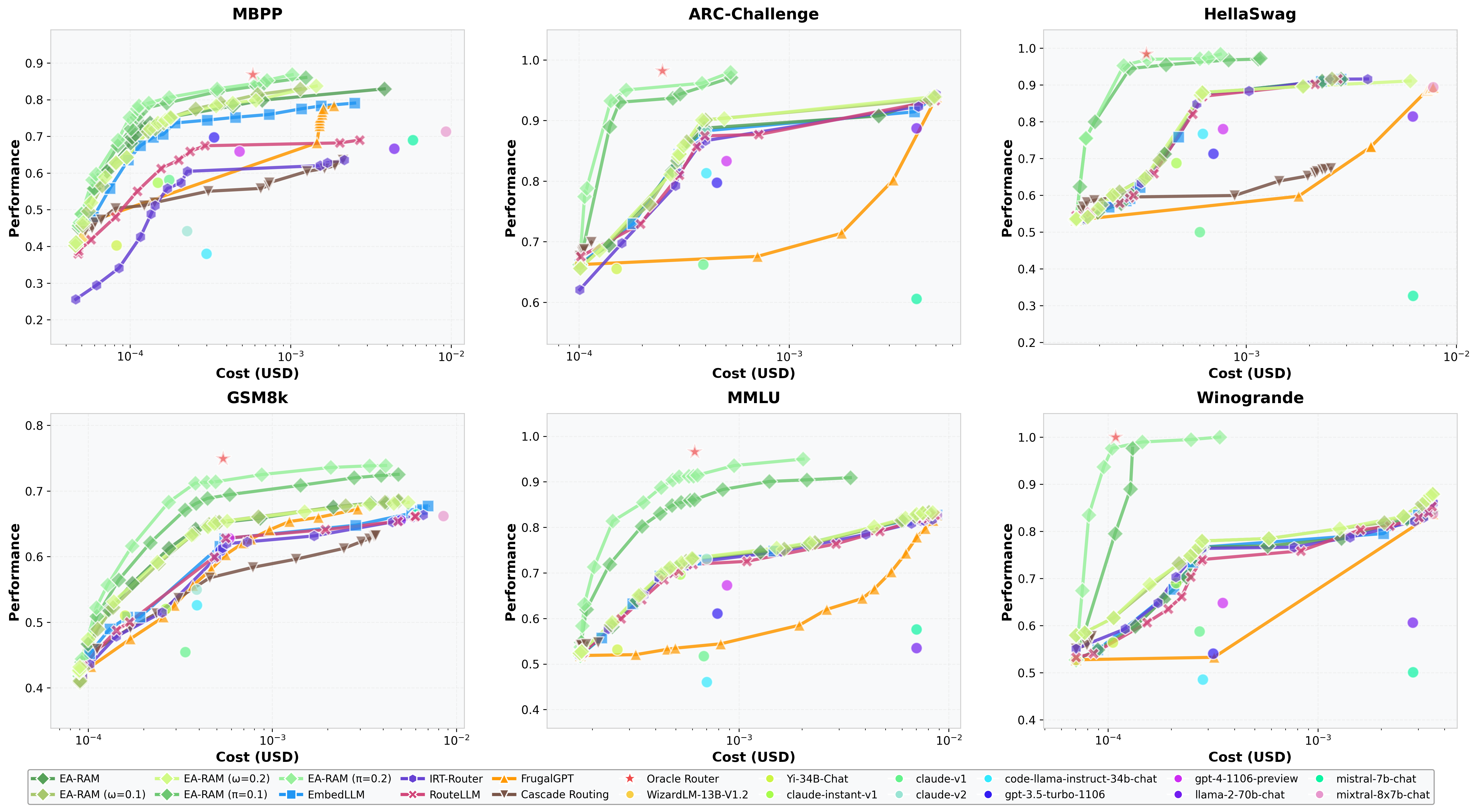}
    \caption{Cost--performance Pareto frontiers on real-world benchmarks. As $\pi$ or $\omega$ increases, EA-RAM shifts the frontier upward and leftward, indicating a more favorable cost--quality trade-off than the centralized baselines.}
    \label{fig:all_datasets}
\end{figure*}

\subsubsection{Routing Pareto Analysis}
\label{sec:realworld-results}

\begin{wraptable}{r}{0.49\linewidth}
\vspace{-2.0em}
\centering
\resizebox{\linewidth}{!}{
\setlength{\tabcolsep}{1.0pt}
\fontsize{7pt}{8pt}\selectfont
\begin{tabular}{l@{\hspace{0.2em}}cccccc}
\toprule
\textbf{Method} & \textbf{MBPP} & \textbf{ARC-C} & \textbf{HellaSwag} & \textbf{GSM8k} & \textbf{MMLU} & \textbf{Wino.} \\
\midrule
EmbedLLM & .77453 & .89855 & .89646 & .64596 & .78289 & .79154 \\
IRT-Router & .60465 & .89391 & .87664 & .63717 & .77604 & .79065 \\
RouteLLM & .67213 & .89498 & .85518 & .63841 & .77615 & .78825 \\
FrugalGPT & .70161 & .77691 & .72110 & .65646 & .67609 & .67387 \\
Cascade Routing & .58511 & .68921 & .62885 & .60751 & .54454 & .55891 \\
\cdashline{1-7}[1pt/1pt]
\textbf{EA-RAM} & \textbf{.80478} & \textbf{.90353} & \textbf{.89676} & \textbf{.67302} & \textbf{.78548} & \textbf{.79396} \\
with $\pi{=}0.1$ & .84932 & .96729 & .96919 & .71379 & .90021 & .97214 \\
with $\pi{=}0.2$ & .85792 & .97704 & .98029 & .73137 & .94303 & .99792 \\
with $\omega{=}0.1$ & .81834 & .91131 & .89938 & .67421 & .78616 & .81110 \\
with $\omega{=}0.2$ & .82138 & .91338 & .89912 & .67515 & .78899 & .81303 \\
\bottomrule
\end{tabular}
}
\caption{Quantitative comparison (AIQ$\uparrow$). EA-RAM yields the best overall performance.}
\label{tab:aiq_comparison}
\vspace{-1.2em}
\end{wraptable}

We evaluate EA-RAM on real-world benchmarks by sweeping each router's operating points to obtain cost--performance pairs $(c,a)$, where $c$ is the average cost per query and $a$ is empirical performance, and then extracting the Pareto frontier.
Beyond the base setting, EA-RAM uses two forms of seller-side local information. For oracle local information, we set
$\hat{p}^{(\pi)}_m(x)=(1-\pi)\hat{p}_m(x)+\pi y_m(x)$,
where $\pi\in[0,1]$ and $y_m(x)$ is the ground-truth label. For realistic local information, we retrieve the top-10 nearest training examples in the embedding space and use the similarity-weighted label average as a noisy provider-side signal $\tilde{y}_m(x)$, defining
$\hat{p}^{(\omega)}_m(x)=(1-\omega)\hat{p}_m(x)+\omega\tilde{y}_m(x)$,
where $\omega\in[0,1]$.

Figure~\ref{fig:all_datasets} and Table~\ref{tab:aiq_comparison} show that EA-RAM achieves the best overall routing performance. Table~\ref{tab:aiq_comparison} reports \emph{Average Improvement in Quality} (AIQ; Appendix~\ref{app:aiq}), which averages Pareto-frontier quality over a shared cost range, with higher values indicating a better cost--performance trade-off. \textbf{EA-RAM already outperforms baselines in the base setting, and seller-side local information further shifts the frontier}: oracle information yields the largest gains, and realistic information also improves over the base setting across all datasets.

\subsubsection{Robustness to Real-world Noise}

\begin{figure}[t]
    \centering
    \begin{subfigure}{0.49\textwidth}
        \centering
        \includegraphics[width=\linewidth]{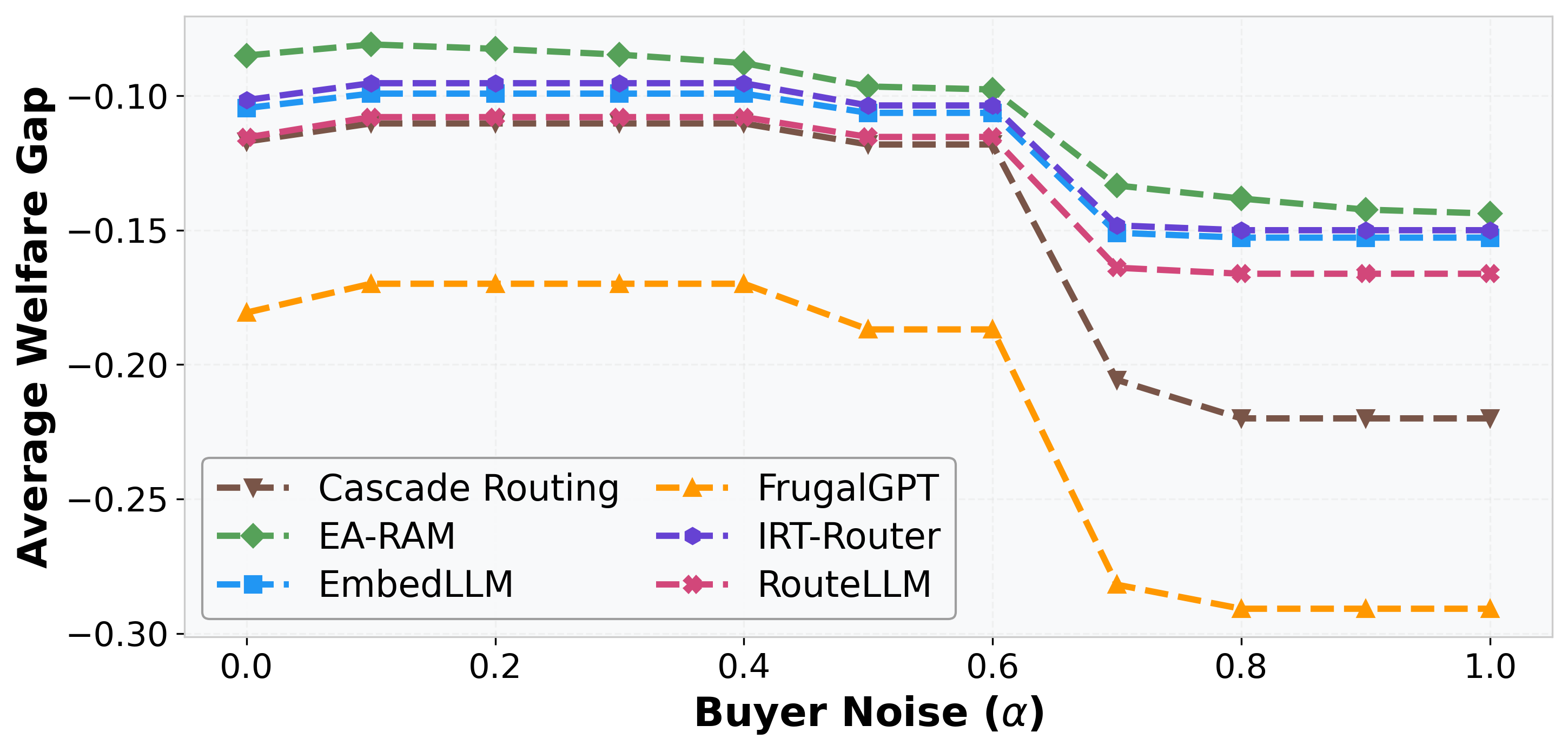}
        \caption{Comparison on GSM8k.}
    \end{subfigure}
    \hfill
    \begin{subfigure}{0.49\textwidth}
        \centering
        \includegraphics[width=\linewidth]{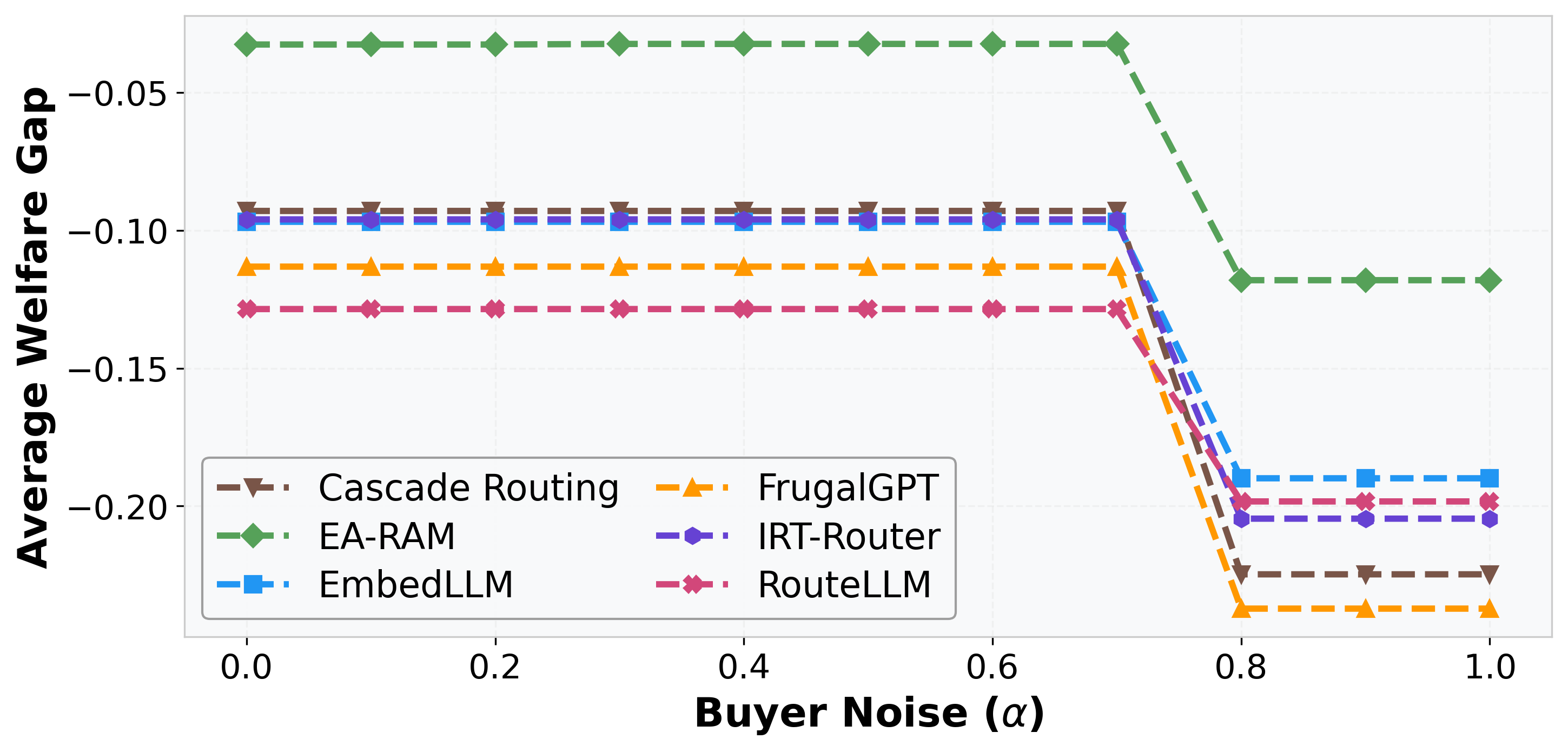}
        \caption{Comparison on MBPP.}
    \end{subfigure}
    % \caption{Comparison on GSM8k and MBPP with a noisy LLM-as-a-Judge evaluator. We vary the judge weight $\alpha$ in the mixed evaluation score and report the average welfare gap relative to the error-free upper bound (higher is better). EA-RAM consistently achieves the smallest welfare loss across all evaluator-noise levels.}
    \caption{Noisy LLM-as-a-Judge evaluation. Varying the judge weight $\alpha$, we report the average welfare gap to the error-free upper bound (higher is better). EA-RAM consistently incurs the smallest welfare loss.}
    \label{fig:noisy_eval}
\end{figure}

% To further evaluate robustness to evaluator-side noise in a real benchmark, we conduct an additional experiment on GSM8k and MBPP using an LLM-as-a-judge evaluator (DeepSeek-V3.2). We define the evaluation score as $\text{score}=\alpha\cdot \text{score}_{\text{judge}}+(1-\alpha)\cdot \text{score}_{\text{groundtruth}}$, where $\alpha\in[0,1]$ controls the evaluator-noise level. We treat $\text{score}\ge 0.5$ as acceptance. Under this setup, all routing methods follow the same pipeline---select an LLM, evaluate the response, and compute welfare---and we report the average welfare gap relative to the error-free upper bound across all queries, where higher values indicate better performance.
% Figure~\ref{fig:noisy_eval} shows that \textbf{EA-RAM consistently achieves the best result across all values of $\alpha$, indicating the smallest welfare loss throughout.} This pattern remains stable from low to high evaluator noise, providing additional real-benchmark evidence that EA-RAM is robust to evaluator-side error.

To further test robustness to evaluator-side noise, we run a GSM8k/MBPP experiment with a LLM-as-a-judge evaluator (DeepSeek-V3.2). The evaluation score is
$\text{score}=\alpha\cdot \text{score}_{\text{judge}}+(1-\alpha)\cdot \text{score}_{\text{groundtruth}}$,
where $\alpha\in[0,1]$ controls evaluator noise, and $\text{score}\ge0.5$ indicates acceptance. All methods follow the same pipeline: select an LLM, evaluate its response, and compute welfare. We report the average welfare gap to the error-free upper bound over all queries, where higher is better. Figure~\ref{fig:noisy_eval} shows that \textbf{EA-RAM consistently performs best across all $\alpha$, incurring the smallest welfare loss}, providing real-benchmark evidence of robustness to evaluator-side error.

% anoymous
% \begin{wrapfigure}{r}{0.36\textwidth}
%     \vspace{-3.8em}
%     \centering
%     \includegraphics[width=0.36\textwidth]{img/real_exp/latency_breakdown.png}
%     \vspace{-1.4em}
%     \caption{Center-side latency.}
%     \label{fig:latency_scaling}
%     \vspace{-2.8em}
% \end{wrapfigure}

\subsubsection{Scalability and Efficiency}
\label{sec:scalability-efficiency}

To evaluate the practical scalability and efficiency of EA-RAM, we conduct an MBPP experiment as the number of candidate LLMs increases from 3 to 7 to 11, and examine both center-side computation latency and the extra communication overhead introduced by the reverse-auction.

% arxiv
\begin{wrapfigure}{r}{0.36\textwidth}
    \vspace{-3.8em}
    \centering
    \includegraphics[width=0.36\textwidth]{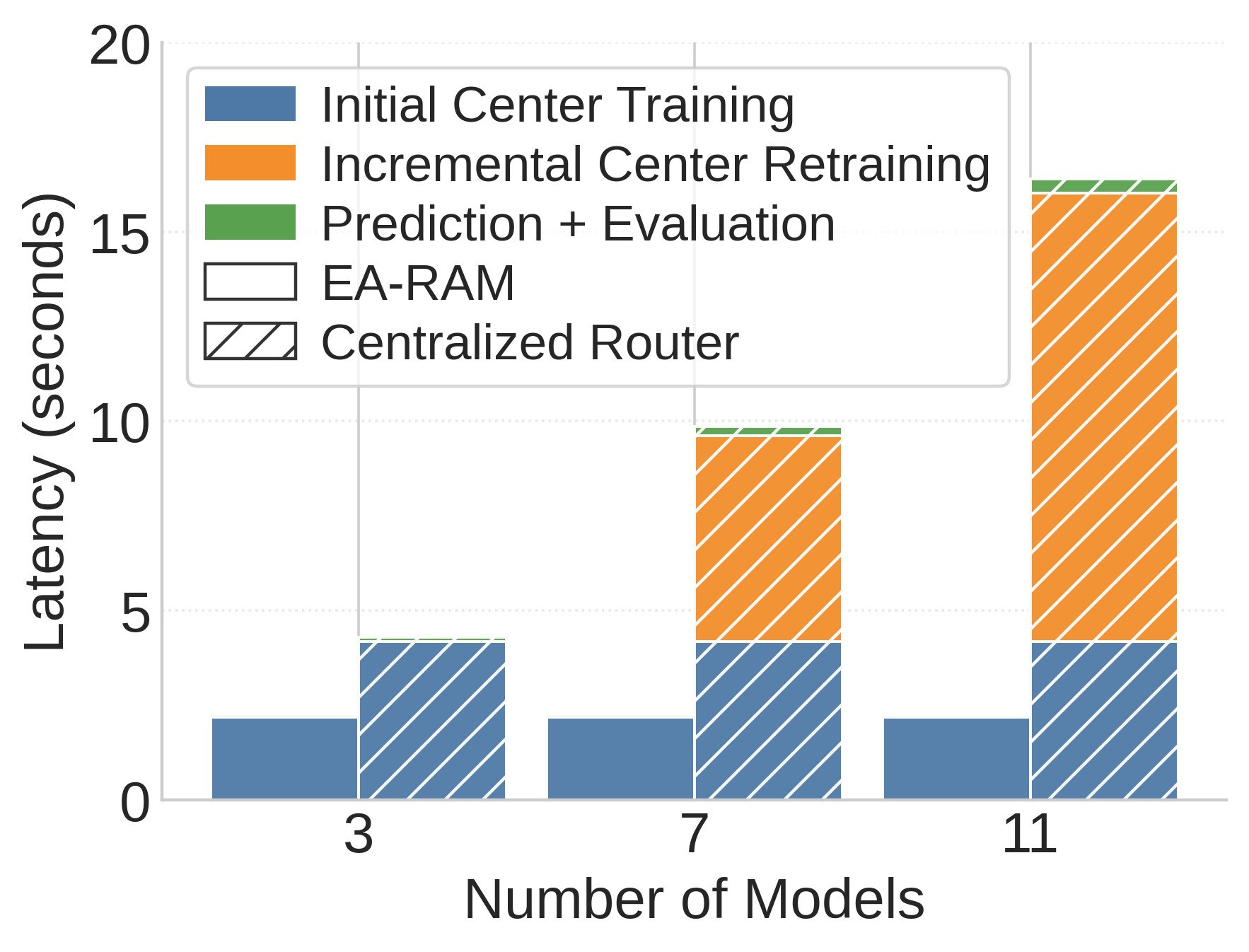}
    \vspace{-1.4em}
    \caption{Center-side latency.}
    \label{fig:latency_scaling}
    \vspace{-2.8em}
\end{wrapfigure}

% \par\WFclear
\paragraph{Center-side latency.}
Figure~\ref{fig:latency_scaling} shows that \textbf{the center-side latency of EA-RAM remains nearly constant as the model pool grows}, since the center only performs evaluator and auction computation and does not require retraining or per-model prediction. In contrast, the centralized router (EmbedLLM) becomes substantially slower as the model pool grows due to predictor retraining and ex-ante prediction. Detailed numerical results are provided in Appendix~\ref{app:detailed_efficiency}.

\par\WFclear

\begin{wraptable}{r}{0.45\linewidth}
\vspace{-1.2em}
\centering
\resizebox{\linewidth}{!}{
\setlength{\tabcolsep}{1.0pt}
\fontsize{7pt}{8pt}\selectfont
\begin{tabular}{cccc}
\toprule
$N$ &
\makecell[c]{Extra communication\\volume (bytes)} &
\makecell[c]{Computation\\latency (s)} &
\makecell[c]{Throughput\\threshold (MB/s)} \\
\midrule
3  & 44,643  & 0.044 & 0.338 \\
7  & 111,439 & 0.043 & 0.370 \\
11 & 177,377 & 0.042 & 0.384 \\
\bottomrule
\end{tabular}
}
\caption{Extra reverse-auction communication overhead on MBPP.}
\label{tab:comm_scaling}
\vspace{-1.4em}
\end{wraptable}

\paragraph{Communication overhead.}
% This efficiency gain comes at the cost of additional reverse-auction communication. For each query, EA-RAM sends the query to all $N$ providers and collects $N$ bids before selecting the winner, yielding an extra communication volume of $(N-1)S_q + NS_b$, where $S_q$ is the query size and $S_b$ is the bid size. Here we count only the overhead specific to the reverse-auction stage, excluding costs common to essentially all routing frameworks. Table~\ref{tab:comm_scaling} shows that this extra communication grows roughly linearly with $N$, whereas computation latency remains almost constant. Under parallel transmission with one dedicated channel per provider, communication becomes the bottleneck only when the effective per-channel throughput falls below about $0.338$--$0.384$ MB/s, suggesting that in our current text-based task setting, communication is less likely to dominate computation in typical high-bandwidth server environments.
The efficiency gain introduces extra reverse-auction communication: for each query, EA-RAM sends the query to all $N$ providers and collects $N$ bids, adding $(N-1)S_q+NS_b$ bytes, where $S_q$ and $S_b$ denote the query and bid sizes, respectively. This counts only auction-specific overhead, excluding costs shared by standard routing frameworks. As shown in Table~\ref{tab:comm_scaling}, communication grows roughly linearly with $N$, while computation latency stays nearly constant. With parallel provider channels, communication becomes the bottleneck only below $0.338$--$0.384$ MB/s per channel, suggesting that \textbf{it is less likely to dominate in typical high-bandwidth text-routing settings.}

%% file: section/4_Related.tex
\section{Related Work}
\label{sec:related_work}

\subsection{LLM Routing}

To balance performance and cost, prior works have explored various routing architectures. Predictor-based approaches train a centralized router to dispatch queries, ranging from similarity-weighted ranking and embedding-based routing \citep{ong2025routellm, zhuang2025embedllm} to specialized fine-tuned agents \citep{zhang2025router}. Recent works also incorporate in-context learning to adapt to new models without retraining \citep{wang2025icl}. Cascading strategies, in contrast, adopt a model chain \citep{chen2024frugalgpt, dekoninck2025a}, invoking stronger models only when cheaper options fail.
Despite these advancements, existing strategies predominantly rely on a \emph{centralized estimation paradigm}, which places the prediction burden on the router, the party with the least internal visibility, resulting in inherent information asymmetry and scalability bottlenecks.

\subsection{Market-Based Mechanisms for LLMs}

Recent literature has applied auction theory to the allocation of LLM resources. Works in this domain focus primarily on determining content streams, such as auctioning slots within the RAG context window for advertisements \citep{hajiaghayi2024ad}, or aggregating generative preferences via token-level and summary-level auctions \citep{duetting2024mechanism, dubey2024auctions}.
Regarding task execution, operational frameworks such as COALESCE \citep{bhatt2025coalesce} employ reverse auctions to outsource subtasks. However, akin to classical FTMD \citep{porter2008fault}, these approaches typically operate under idealized assumptions: they presume providers make perfect ex-ante predictions and the center performs perfect ex-post evaluation. Our work challenges this premise by explicitly modeling the Dual Error inherent in practical routing, and ensures incentive compatibility and optimal allocation under Dual Error.

%% file: section/5_Conclusion.tex
\section{Conclusion}
We presented EA-RAM, an error-aware reverse-auction mechanism for LLM routing that replaces centralized capability estimation with provider-side ex-ante prediction and platform-side ex-post evaluation. This paradigm resolves the information--risk mismatch and removes the per-model profiling bottleneck by requiring only a model-agnostic evaluator.
Theoretically, we model LLM routing as a Dual-Error environment with noisy provider predictions and noisy platform evaluations, and prove that EA-RAM remains incentive-aligned: it is Bayesian incentive compatible and individually rational, admits sufficient conditions for center rationality, and enjoys an explicit welfare-loss bound. Our analysis further reveals three robustness effects: opposite-signed errors can cancel, vanishing-tail links stabilize clear-cut cases via saturation, and additional noise smooths belief maps, weakly reducing the gains from marginal manipulation.
Empirically, simulations and real-world benchmarks show that EA-RAM is robust to the Dual Error and achieves a better cost--performance frontier than strong centralized baselines, with additional gains when providers contribute local information. Overall, our results suggest that market-based routing offers a scalable and robust foundation for orchestrating heterogeneous LLM ecosystems under realistic uncertainty.

%% file: section/6_Appendix.tex
\appendix

\section{More Implementation Details}
\label{app:implememtation_details}

\subsection{RouterBench}
\label{app:dataset_details}

The model pool of the selected part of the RouterBench dataset is:
\begin{itemize}[itemsep=0em, topsep=0em, leftmargin=1em]
    \item \textbf{Open Source Models:} Llama-70B-chat, Mixtral-8x7B-chat, Yi-34B-chat, Code Llama-34B, Mistral-7B-chat, and WizardLM-13B.
    \item \textbf{Proprietary Models:} GPT-4, GPT-3.5-turbo, Claude-instant-v1, Claude-v1, Claude-v2.
\end{itemize}

\subsection{AIQ Computation}
\label{app:aiq}

Given a routing system family, we sample a set of parameterized routers and obtain a collection of cost--quality points
$\{(c_i, q_i)\}_{i=1}^n$, where $c_i$ is the average monetary cost per query and $q_i$ is the corresponding quality (e.g., accuracy / performance).
We then construct the \emph{non-decreasing convex hull} frontier $R_f(c)$ over a shared domain $[c_{\min}, c_{\max}]$.
AIQ is defined as the average quality of this frontier over the shared cost interval:
\begin{equation}
\mathrm{AIQ}(R_f)=\frac{1}{c_{\max}-c_{\min}}\int_{c_{\min}}^{c_{\max}} R_f(c)\, dc.
\label{eq:aiq-routerbench}
\end{equation}
In implementation, we evaluate the integral numerically by (i) aligning all methods to the same $[c_{\min},c_{\max}]$ via endpoint extrapolation when necessary, (ii) representing $R_f(c)$ with piecewise-linear segments, and (iii) computing the area under the curve using the trapezoidal rule, then normalizing by $(c_{\max}-c_{\min})$. 

For endpoint extrapolation, let a router have observed points spanning $[c_{\min}^R,c_{\max}^R]$ with corresponding frontier quality range $[q_{\min}^R,q_{\max}^R]$.
We left-extrapolate by extending the global minimum quality-cost point $(q_{\min}, c_{\min})$, and right-extrapolate by extending the router's maximum quality $q_{\max}^R$ to the global maximum cost $c_{\max}$ (i.e., $R_f(c)=q_{\max}^R$ for $c\in[c_{\max}^R,c_{\max}]$).

\subsection{Other Details}
\label{app:other_implementation_details}

EA-RAM's predictors and evaluator are trained with cross-entropy loss and AdamW for 100 epochs, with a batch size of 256 and a learning rate of $10^{-3}$. All experiments were done on a server with 8 NVIDIA GeForce RTX 3090 GPUs.

\section{More Experiments}

\subsection{Detailed Results for Efficiency Analysis}
\label{app:detailed_efficiency}
We provide detailed numerical results in Table~\ref{tab:latency_scaling}.

\begin{table*}[ht]
\centering
\small
\caption{Center-side latency under different numbers of candidate LLMs on MBPP. Lower is better.}
\label{tab:latency_scaling}
\resizebox{\textwidth}{!}{
\begin{tabular}{lcccccc}
\toprule
Latency(s)/Method & EA-RAM (3) & EA-RAM (7) & EA-RAM (11) & Centralized Router (3) & Centralized Router (7) & Centralized Router (11) \\
\midrule
Initial Center Training & \textbf{2.184} & \textbf{2.184} & \textbf{2.184} & 4.181 & 4.181 & 4.181 \\
Incremental Center Retraining & -- & \textbf{0.000} & \textbf{0.000} & -- & 5.425 & 11.836 \\
Center Computation (Prediction + Evaluation) & \textbf{0.044} & \textbf{0.043} & \textbf{0.042} & 0.118 & 0.249 & 0.383 \\
Total Latency & \textbf{2.228} & \textbf{2.227} & \textbf{2.226} & 4.299 & 9.855 & 16.400 \\
\bottomrule
\end{tabular}}
\end{table*}

\section{Proof}\label{appendix:proof}

\subsection{Proof of Error-Free Setting's Property}
\label{appendix:proof_baseline}

\begin{proposition}[Optimality of the Error-Free Setting. Restatement of Proposition~\ref{prop:error_free_properties}]
\label{appen:prop:error_free_properties}
In the error-free setting, the mechanism satisfies all four desirable properties: DSIC, IR, CR, and EE.
\end{proposition}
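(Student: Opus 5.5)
In the error-free setting we have $\varepsilon_{\text{post}}=\varepsilon_{\text{ante},i}=0$, so $g_i=h_i=p_i$ and $\tilde\mu_i$ has the same law as $\mu_i\sim\mathrm{Bernoulli}(p_i)$. Seller $i$'s true surplus is therefore $s_i:=\bar T_i=Vp_i-c_i$. The plan is to reduce the mechanism to a second-price (Vickrey) reverse auction on the scalar scores $\hat s_i$, with the dummy seller $0$ scoring $0$, and then check the four properties in turn.

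\textbf{DSIC.} Fix arbitrary reports $\hat s_{-i}$ of the other sellers. Let $H_i=\max(0,\max_{k\neq i}\hat s_k)$, which is deterministic given $\hat s_{-i}$. If seller $i$ wins, its expected utility conditional on $\hat s_{-i}$ is $\E[V\tilde\mu_i]-H_i-c_i=Vp_i-c_i-H_i=s_i-H_i$; if it loses, its utility is $0$. Seller $i$ wins exactly when $\hat s_i>H_i$ (ties are broken arbitrarily). Its utility is thus $(s_i-H_i)\mathbf 1\{\hat s_i>H_i\}$. Reporting $\hat s_i=s_i$ secures the positive part $(s_i-H_i)^+$, which is the maximum achievable, and this holds for every $\hat s_{-i}$. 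This is the standard Vickrey argument, and the same calculation as Theorem~\ref{thm:bic} with $F_H$ degenerate. The point to stress is that the argument holds pointwise in $\hat s_{-i}$, not only in expectation over $F_H$, so we obtain dominant strategies rather than BIC. Truthful reporting of the pair $(\hat p_i,\hat c_i)=(p_i,c_i)$ is one way to induce $\hat s_i=s_i$.

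\textbf{IR.} Under truthful reporting, expected utility equals $(s_i-H_i)^+\ge 0$ for every profile of others' reports, so IR holds.

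\textbf{CR.} Let $j$ be the winner under truthful reporting, so $s_j\ge H\ge 0$. The buyer's expected utility is $Vp_j-\E[r_j]=Vp_j-(Vh_j-H)=V(p_j-h_j)+H=H\ge 0$. If the null outcome is selected, the buyer's utility is $0$. This is the step that genuinely uses $h_j=p_j$; it corresponds to condition (B) of Theorem~\ref{thm:CR-error-unified} holding with $\Delta_{\text{cons}}=0$.

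\textbf{EE.} Under truthful reports the allocation rule chooses $j\in\arg\max_{i\in\bar{\mathcal I}}(Vp_i-c_i)$, with the null option worth $0$. This is exactly the maximizer of expected welfare $\E[W]=Vp_j-c_j$, so $i^\dagger=i^\star$.

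The only delicate point is the DSIC step. One must handle ties and the boundary case $\hat s_i=H_i$ (where utility is $0$ either way). One must also note that DSIC is stated in expected utility over the evaluator draw $\tilde\mu_i$, since realized payments may be negative.
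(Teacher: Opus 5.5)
Your proposal is correct and follows essentially the same route as the paper: the same Vickrey-style comparison of $s_i$ with the runner-up score $H$ for DSIC and IR, $U^{\text{buyer}}=H\ge 0$ for CR, and truthful scores with the dummy at $0$ maximizing $Vp_i-c_i$ for EE. The only nuance is in CR. The paper computes realized buyer utility by setting $\tilde\mu=\mu$, whereas you take expectations and need only $h_j=p_j$; your version fits the separate Bernoulli draws in Algorithm~\ref{alg:mechanism} and the expected-utility definition of CR.
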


\begin{proof}
\textbf{DSIC: }
Fix others' reports. Let $H=\max\{0, \max_{k\ne i}(\hat p_kV-\hat c_k)\}$.  
If $i$ wins, then $r_i=V\mu-H$, so $\E[r_i]=p_iV-H$ and
\begin{equation}
\E[U^{\text{seller}}_i\mid\text{win}] = \E[r_i]-c_i = p_iV-H-c_i = s_i-H.
\end{equation}
If $i$ loses, $U^{\text{seller}}_i=0$.  
Thus
\begin{equation}
\E[U^{\text{seller}}_i\mid \hat\theta_i] = (s_i-H)\cdot\Pr[\hat s_i\ge H],
\end{equation}
where the win probability depends on $\hat s_i$ relative to $H$.  

- If $s_i>H$, the best outcome is to win, yielding $s_i-H>0$. Truthful reporting sets $\hat s_i=s_i$, hence ensures winning.  
- If $s_i<H$, any win would yield $s_i-H<0$, strictly worse than losing. Truthful reporting sets $\hat s_i=s_i<H$, hence ensures losing.  
- If $s_i=H$, both win and loss yield expected utility $0$.  

Thus, truth-telling maximizes expected utility in all cases.

\textbf{IR: }
From DSIC, if $i$ wins then $\E[U^{\text{seller}}_i]=s_i-H\ge0$ because allocation requires $\hat s_i=s_i\ge H$. If $i$ loses then $\E[U^{\text{seller}}_i]=0$. Hence expected utility is never negative.

\textbf{CR: }
If seller $j$ is assigned, then
\begin{equation}
U^{\text{buyer}} = V\mu - r_j = V\mu - (V\mu - H) = H.
\end{equation}
Since the dummy seller has a score $s_0=0$, we have $H\ge 0$, so the buyer never loses. If no seller is assigned, then $U^{\text{buyer}}=0$.

\textbf{EE: }
With truth-telling, $\hat s_i=s_i$. The mechanism assigns only if $\max_i s_i>0$, and then to an $i$ maximizing $s_i$. This maximizes expected welfare $\max\{0,\max_i s_i\}$.
\end{proof}

\subsection{Strategic Properties}

\begin{theorem}[Bayesian Incentive Compatibility (BIC). Restatement of Theorem~\ref{thm:bic}]
\label{appen:thm:bic}
Seller $i$'s interim expected utility $U^{\text{seller}}_i(\hat s_i)$ is maximized at $\hat s_i=\bar T_i$. Hence, reporting $\hat s_i=\bar T_i$ is a Bayesian best response. Moreover, if $\bar T_i>0$, then $U^{\text{seller}}_i(\hat s_i)$ is uniquely maximized at $\hat s_i=\bar T_i$.
\end{theorem}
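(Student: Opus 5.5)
We use the integral representation of the interim expected utility introduced just before Theorem~\ref{thm:bic}. Since $F_H(\hat s_i)\,\E[H\mid H\le \hat s_i]=\int_{-\infty}^{\hat s_i} h\,dF_H(h)$, we can write
\begin{equation*}
U^{\text{seller}}_i(\hat s_i)=\int_{-\infty}^{\hat s_i}(\bar T_i-h)\,dF_H(h).
\end{equation*}
This uses the fact, from the seller's ex-ante perspective, that $\E[V\tilde\mu_i\mid H]=Vg_i$. Here $\tilde\mu_i$ depends only on $\varepsilon_{\text{post}}$ and seller $i$'s own execution, and seller $i$'s belief about it is summarized by $g_i$, which is independent of the competitors' reports that determine $H$. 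We will state this conditional-independence step explicitly, since it is what lets the expectation factor.

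\textbf{Optimality.} Compare any report $\hat s_i$ with $\bar T_i$ directly. If $\hat s_i>\bar T_i$, then
\begin{equation*}
U^{\text{seller}}_i(\hat s_i)-U^{\text{seller}}_i(\bar T_i)=\int_{(\bar T_i,\hat s_i]}(\bar T_i-h)\,dF_H(h)\le 0,
\end{equation*}
because the integrand is negative on that range. If $\hat s_i<\bar T_i$, then
\begin{equation*}
U^{\text{seller}}_i(\bar T_i)-U^{\text{seller}}_i(\hat s_i)=\int_{(\hat s_i,\bar T_i]}(\bar T_i-h)\,dF_H(h)\ge 0,
\end{equation*}
because the integrand is nonnegative there. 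Hence $\bar T_i$ maximizes $U^{\text{seller}}_i$, so reporting $\hat s_i=\bar T_i$ is a Bayesian best response. This argument needs no density and handles atoms in $F_H$, in particular the atom at $H=0$ created by the dummy seller. Ties $\hat s_i=H$ contribute $(\bar T_i-H)\Pr(H=\hat s_i)$; at $\hat s_i=\bar T_i$ this contribution is zero, so the tie-breaking rule does not affect the value at the optimum.

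\textbf{Uniqueness.} This is the main obstacle. The differences above are strict only if $H$ places positive mass on the relevant interval. Since $H=\max(0,\cdot)\ge 0$, it has an atom at $0$. If $\bar T_i>0$ and $\hat s_i<0$, the lost mass includes this atom and the loss is at least $\bar T_i\Pr(H=0)>0$. For general deviations we need $F_H$ to be strictly increasing on a neighborhood of $\bar T_i$ within $[0,\infty)$. Following the paper's introduction of a density $f_H$, we will assume $f_H>0$ on $(0,\infty)$, as holds when competitors' effective surpluses have continuous full-support noise. Under this assumption, for $\hat s_i>\bar T_i>0$ the integrand is strictly negative on a set of positive measure, and for $\hat s_i<\bar T_i$ the interval $(\max(\hat s_i,0),\bar T_i]$ has positive mass with integrand strictly positive except at the endpoint, so both inequalities are strict.

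As a check in the smooth case, the derivative $\frac{d}{d\hat s_i}U^{\text{seller}}_i=(\bar T_i-\hat s_i)f_H(\hat s_i)$ changes sign from $+$ to $-$ at $\bar T_i$, which is consistent with the argument above. We will state the full-support assumption explicitly as what is needed for uniqueness.
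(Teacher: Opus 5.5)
Your proof is correct, but it takes a different route from the paper's.

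\textbf{The paper's route.} It differentiates $U^{\text{seller}}_i(\hat s_i)=F_H(\hat s_i)\bigl(\bar T_i-\E[H\mid H\le\hat s_i]\bigr)$ using the identity $\frac{d}{dx}\bigl[F_H(x)\E(H\mid H\le x)\bigr]=x f_H(x)$. This gives ${U^{\text{seller}}_i}'(\hat s_i)=f_H(\hat s_i)(\bar T_i-\hat s_i)$, and the paper reads off the sign change at $\bar T_i$ when $\bar T_i>0$. It treats $\bar T_i\le 0$ separately: utility decreases on $(0,\infty)$, and non-positive reports amount to opting out.

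\textbf{Your route and what it buys.} You compare two reports directly, as the integral of the sign-definite integrand $\bar T_i-h$ against $dF_H$ over the interval between them. This has three advantages. First, it needs no density or differentiability of $F_H$. That matters because the dummy seller produces an atom of $H$ at $0$ whenever all competitors are non-positive with positive probability, a case the paper excludes only by positing $f_H$. Second, it covers tie-breaking and the case $\bar T_i\le 0$ in one stroke. Third, it exposes what uniqueness really needs.

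The third point is substantive. The paper's claim that ${U^{\text{seller}}_i}'>0$ for every $\hat s_i<\bar T_i$ tacitly uses $f_H(\hat s_i)>0$, which fails for $\hat s_i<0$ since $H\ge 0$. Uniqueness can also fail outright even with a density: if $H$ is uniform on $[0,1]\cup[9,10]$ and $\bar T_i=5$, every $\hat s_i\in[1,9]$ is optimal. Your support assumption is exactly what the paper's sign argument silently relies on. Minimally, it suffices that $\Pr(\bar T_i-\delta<H<\bar T_i)>0$ and $\Pr(\bar T_i<H\le\bar T_i+\delta)>0$ for all $\delta>0$. In return, the paper's route gives brevity and an explicit first-order condition.

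\textbf{Two minor corrections.} First, $H$ need not have an atom at $0$: $\Pr(H=0)=\Pr(\max_{k\ne i}\hat s_k\le 0)$ can vanish. So the bound $\bar T_i\Pr(H=0)>0$ should be dropped or made conditional. Nothing is lost, since under your support assumption the interval $(\max(\hat s_i,0),\bar T_i)$ already carries positive mass.

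Second, your independence justification for $\E[V\tilde\mu_i\mid H]=Vg_i$ does not follow from the model's primitives. Every competitor's score is $Vg_k-c_k$ with $g_k=\sigma(\phi_k+\varepsilon_{\text{post}}+\varepsilon_{\text{ante},k})$, so it loads on the same $\varepsilon_{\text{post}}$ that drives $h_i$. The paper simply takes the interim-utility formula as given. You should likewise state it as a modeling assumption rather than derive it.
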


\begin{proof}
Recall the definition of $U^{\text{seller}}_i(\hat s_i)$:
\begin{equation}
\label{eq:BIC-utility}
\begin{split}
U^{\text{seller}}_i(\hat s_i)
&= \Pr(H\le \hat s_i)\cdot\E\big[V\tilde\mu_i - H - c_i\mid H\le \hat s_i\big] \\
&= F_H(\hat s_i)\big(\bar T_i-\E[H\mid H\le \hat s_i]\big).
\end{split}
\end{equation}
Differentiating $U^{\text{seller}}_i(\hat s_i)$ and using
\begin{equation}
\label{eq:BIC-derivative-identity}
\frac{d}{dx}\!\big[F_H(x)\E(H\mid H\le x)\big]=f_H(x)\,x
\end{equation}
gives
\begin{equation}
\label{eq:BIC-derivative}
{U^{\text{seller}}_i}'(\hat s_i)=f_H(\hat s_i)\,(\bar T_i-\hat s_i).
\end{equation}

If $\bar T_i>0$, then ${U^{\text{seller}}_i}'(\hat s_i)>0$ for $\hat s_i<\bar T_i$ and ${U^{\text{seller}}_i}'(\hat s_i)<0$ for $\hat s_i>\bar T_i$, so $U^{\text{seller}}_i(\hat s_i)$ is uniquely maximized at $\hat s_i=\bar T_i$.

If $\bar T_i\le 0$, then for every $\hat s_i>0$ we have ${U^{\text{seller}}_i}'(\hat s_i)<0$, so $U^{\text{seller}}_i(\hat s_i)$ is decreasing on $(0,\infty)$. Hence, any positive over-report is not profitable. With the explicit null option, any non-positive report is equivalent to opting out, so reporting $\hat s_i=\bar T_i$ remains optimal. Therefore, $\hat s_i=\bar T_i$ is always a Bayesian best response, and uniqueness holds whenever $\bar T_i>0$.
\end{proof}

\begin{theorem}[Individual Rationality (IR). Restatement of Theorem~\ref{thm:IR}]
\label{appen:thm:IR}
At the equilibrium $\hat s=\bar T$, every seller satisfies IR:
$\E[U^{\text{seller}}_i\mid \tilde\theta_i]\ge0$.
\end{theorem}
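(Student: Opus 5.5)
The plan is to evaluate the interim utility formula at the equilibrium report and read off its sign directly. At $\hat s_i=\bar T_i$, the utility expression used in Theorem~\ref{thm:bic} becomes
\begin{equation*}
\E[U^{\text{seller}}_i\mid \tilde\theta_i]=U^{\text{seller}}_i(\bar T_i)=F_H(\bar T_i)\big(\bar T_i-\E[H\mid H\le \bar T_i]\big)=\int_{-\infty}^{\bar T_i}(\bar T_i-h)\,dF_H(h).
\end{equation*}
This is exactly the equilibrium expression stated after Theorem~\ref{thm:bic}. Here $F_H$ is the runner-up distribution induced by the other sellers' equilibrium reports $\hat s_k=\bar T_k$ together with the dummy score $0$. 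The expectation over $V\tilde\mu_i$ has already been replaced by $Vg_i$ from seller $i$'s ex-ante perspective, which is valid because $\tilde\mu_i\sim\mathrm{Bernoulli}(h_i)$ and $g_i$ is the seller's belief about it.

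The first key step is the pointwise observation that the integrand satisfies $\bar T_i-h\ge 0$ for every $h\le \bar T_i$. Integrating a nonnegative function against a probability measure then gives a nonnegative value, so $\E[U^{\text{seller}}_i]\ge 0$. The same conclusion follows from the conditional form: conditionally on $H\le\bar T_i$ we have $\E[H\mid H\le \bar T_i]\le \bar T_i$, and $F_H(\bar T_i)\ge0$.

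The second step handles the degenerate cases via the allocation rule with the null option. Because $H\ge 0$ (dummy seller $0$), if $\bar T_i\le 0$ the event $\{H\le \bar T_i\}$ has probability zero, except possibly the tie $H=\bar T_i=0$, where the integrand is zero. In that case the seller does not win, or wins with zero surplus, and the utility is $0$. Losers receive $r_i=0$ and incur no cost, so their utility is also $0$. Ties at $\hat s_i=H$ contribute zero regardless of how they are broken.

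The main subtlety, rather than a real obstacle, is making precise that IR is an interim and expected statement in the seller's own belief: conditioning on $\tilde\theta_i$ means averaging over $\varepsilon_{\text{post}}$ through $g_i$ and over opponents' types through $F_H$. Realized utility $V\tilde\mu_i-H-c_i$ may well be negative when $\tilde\mu_i=0$. I would note this explicitly, and also note that $F_H$ need not have a density here; the Stieltjes integral form avoids relying on $f_H$.
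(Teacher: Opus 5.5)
Your proposal is correct and follows the paper's argument: the paper plugs $\hat s_i=\bar T_i$ into $U^{\text{seller}}_i(\hat s_i)=F_H(\hat s_i)\big(\bar T_i-\E[H\mid H\le \hat s_i]\big)$, notes that $\bar T_i-\E[H\mid H\le\bar T_i]\ge 0$ on the win event, and notes that losing yields zero utility. Your Stieltjes form $\int_{-\infty}^{\bar T_i}(\bar T_i-h)\,dF_H(h)$ and your explicit handling of $\bar T_i\le 0$ only tidy the same computation; they also avoid the conditional expectation being undefined when $F_H(\bar T_i)=0$.
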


\begin{proof}
\textbf{Win case.}
Conditioned on $H\le \bar T_i$,
\begin{equation}
\label{eq:IR-win}
\E[U^{\text{seller}}_i\mid \text{win},\tilde\theta_i]
\;=\;
\bar T_i-\E[H\mid H\le \bar T_i]\ \ge\ 0.
\end{equation}
Plugging $\hat s_i=\bar T_i$ into the interim utility $U^{\text{seller}}_i(\hat s_i)$ yields
\begin{equation}
\label{eq:IR-Ui}
U^{\text{seller}}_i(\bar T_i)=F_H(\bar T_i)\big(\bar T_i-\E[H\mid H\le\bar T_i]\big)\ \ge\ 0.
\end{equation}

\textbf{Lose case.}
If $H>\bar T_i$, then
\begin{equation}
\label{eq:IR-lose}
\E[U^{\text{seller}}_i\mid \text{lose},\tilde\theta_i]=0.
\end{equation}
Combining \eqref{eq:IR-win}–\eqref{eq:IR-lose} gives $\E[U^{\text{seller}}_i\mid \tilde\theta_i]\ge0$.
\end{proof}

\begin{lemma}[Lipschitz Bound.]
\label{lem:link-lip}
The link function $\sigma$ is globally $L_\sigma$-Lipschitz, i.e., $|\sigma'(z)|\le L_\sigma$ for all $z\in\mathbb{R}$.
Therefore, for any real number $x$ and any random variable $\epsilon$ with $\E[\epsilon^2]<\infty$, the deviation between $\E[\sigma(x+\epsilon)]$ and $\sigma(x)$ satisfies
$\big|\E[\sigma(x+\epsilon)]-\sigma(x)\big| \le L_\sigma\,\E[|\epsilon|] \le L_\sigma\,\sqrt{\Var(\epsilon)+(\E[\epsilon])^2}$.
\end{lemma}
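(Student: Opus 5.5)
The plan is to reduce the expectation deviation to a pointwise bound via the mean value theorem, and then pass from the first absolute moment to the second moment by Jensen (or Cauchy--Schwarz). The first sentence of the lemma is just the standing assumption on the link function restated: $\sigma$ is continuously differentiable with $0<\sigma'\le L_\sigma$. So I will take $|\sigma'(z)|\le L_\sigma$ as given and only note that it implies $|\sigma(u)-\sigma(v)|\le L_\sigma|u-v|$ for all real $u,v$.

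\textbf{Well-definedness.} I first check that all the quantities in the claim exist. Since $\sigma$ takes values in $(0,1)$, the variable $\sigma(x+\epsilon)$ is bounded and measurable, so $\E[\sigma(x+\epsilon)]$ is finite. Since $\E[\epsilon^2]<\infty$, we also have $\E[|\epsilon|]<\infty$, so the right-hand sides are finite as well.

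\textbf{First inequality.} Fix $x$. By linearity of expectation and $|\E[Z]|\le\E[|Z|]$,
\[
\big|\E[\sigma(x+\epsilon)]-\sigma(x)\big| = \big|\E[\sigma(x+\epsilon)-\sigma(x)]\big| \le \E\big[|\sigma(x+\epsilon)-\sigma(x)|\big].
\]
For each realization of $\epsilon$, the mean value theorem gives a point $\xi$ between $x$ and $x+\epsilon$ with
\[
\sigma(x+\epsilon)-\sigma(x)=\sigma'(\xi)\,\epsilon .
\]
This bounds the integrand by $L_\sigma|\epsilon|$. Taking expectations yields $\big|\E[\sigma(x+\epsilon)]-\sigma(x)\big|\le L_\sigma\E[|\epsilon|]$.

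\textbf{Second inequality.} Jensen's inequality for the concave square root, or equivalently Cauchy--Schwarz, gives $\E[|\epsilon|]\le\sqrt{\E[\epsilon^2]}$. The variance decomposition gives $\E[\epsilon^2]=\Var(\epsilon)+(\E[\epsilon])^2$. Combining these completes the chain.

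There is no real obstacle here. The only points needing care are using the Lipschitz bound pointwise before taking expectations, and checking the integrability conditions above. This lemma is what later feeds the constants $\Delta_{\text{gate}}$ and $M_{\mathrm{post}}$, $M_{\mathrm{ante}}$ in Theorems~\ref{thm:CR-error-unified} and \ref{thm:welfare-loss-unified}.
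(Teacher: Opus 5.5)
Your proposal is correct and follows the paper's proof step for step. It uses the same ingredients: a pointwise mean-value/Lipschitz bound pushed through the expectation via $|\E[Z]|\le\E[|Z|]$, then $\E[|\epsilon|]\le\sqrt{\E[\epsilon^2]}$ combined with $\E[\epsilon^2]=\Var(\epsilon)+(\E[\epsilon])^2$. Your explicit integrability check is a small addition that the paper leaves implicit.
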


\begin{proof}
The proof consists of two steps: applying the Lipschitz property to the expectation and then bounding the first absolute moment using the second moment.

\textbf{Step 1: Bounding the deviation via the Lipschitz constant.}
Since $\sigma$ is differentiable and satisfies $|\sigma'(z)| \le L_\sigma$ for all $z$, by the Mean Value Theorem, $\sigma$ is $L_\sigma$-Lipschitz continuous. That is, for any $a, b \in \mathbb{R}$, $|\sigma(a) - \sigma(b)| \le L_\sigma |a - b|$.
Let $a = x + \epsilon$ and $b = x$. Then:
\begin{equation}
    |\sigma(x+\epsilon) - \sigma(x)| \le L_\sigma |(x+\epsilon) - x| = L_\sigma |\epsilon|.
\end{equation}
Now, consider the absolute difference of the expectations. By the linearity of expectation and Jensen's inequality (since the absolute value function $|\cdot|$ is convex, $|\E[Y]| \le \E[|Y|]$), we have:
\begin{align}
    \big|\E[\sigma(x+\epsilon)] - \sigma(x)\big| 
    &= \big|\E[\sigma(x+\epsilon) - \sigma(x)]\big| \\
    &\le \E\big[|\sigma(x+\epsilon) - \sigma(x)|\big].
\end{align}
Substituting the Lipschitz bound from Eq.~(1) into the expectation:
\begin{equation}
    \E\big[|\sigma(x+\epsilon) - \sigma(x)|\big] \le \E[L_\sigma |\epsilon|] = L_\sigma \E[|\epsilon|].
\end{equation}
This establishes the first inequality of the lemma.

\textbf{Step 2: Bounding the first moment via Variance.}
We apply Lyapunov's inequality (or simply Jensen's inequality for the convex function $f(y)=y^2$), which states that $(\E[|Y|])^2 \le \E[Y^2]$. Applied to the random variable $\epsilon$:
\begin{equation}
    \E[|\epsilon|] \le \sqrt{\E[\epsilon^2]}.
\end{equation}
Recall the definition of variance: $\Var(\epsilon) = \E[\epsilon^2] - (\E[\epsilon])^2$. Rearranging for the second moment gives $\E[\epsilon^2] = \Var(\epsilon) + (\E[\epsilon])^2$. Substituting this into Eq.~(4):
\begin{equation}
    \E[|\epsilon|] \le \sqrt{\Var(\epsilon) + (\E[\epsilon])^2}.
\end{equation}
Multiplying both sides by $L_\sigma$ yields the final bound:
\begin{equation}
    L_\sigma \E[|\epsilon|] \le L_\sigma \sqrt{\Var(\epsilon) + (\E[\epsilon])^2}.
\end{equation}
\end{proof}

\begin{theorem}[Center Rationality (CR). Restatement of Theorem~\ref{thm:CR-error-unified}]
\label{appen:thm:CR-error-unified}
Each of the following sufficient conditions guarantees CR:
(A) $\E[H] \ge V\Delta_{\text{gate}}$, where $\Delta_{\text{gate}} = L_\sigma \sqrt{b_{\text{post}}+a_{\text{post}}^2}$;
(B) $\Delta_{\text{cons}}=\E\!\left[p_{(1)}-h_{(1)}\right]\ge 0$.
\end{theorem}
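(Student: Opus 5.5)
Our plan is to start from an exact expression for the buyer's expected utility and then bound it in two different ways, one for each condition. When the winner is $j$, the payment is $r_j=V\tilde\mu_j-H$, so the realized buyer utility is
\begin{equation*}
U^{\text{buyer}}=V\mu_j-V\tilde\mu_j+H .
\end{equation*}
We take expectations conditional on the allocation. Conditionally, $\E[\mu_j]=p_j$ and $\E[\tilde\mu_j\mid\varepsilon_{\text{post}}]=h_j=\sigma(\phi_j+\varepsilon_{\text{post}})$. Writing $(1)$ for the equilibrium winner, this gives
\begin{equation*}
\E[U^{\text{buyer}}]=V\,\E\!\left[p_{(1)}-h_{(1)}\right]+\E[H].
\end{equation*}
If no seller is allocated, the buyer's utility is $0$, which is consistent with setting $H=0$ and both probabilities to zero. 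So it suffices to show that this decomposition is nonnegative under (A) or under (B).

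\textbf{Case (B).} This case is immediate. The dummy seller has score $0$, so $H=\max(0,\max_{k\ne j}\hat s_k)\ge 0$ and hence $\E[H]\ge 0$. Then $\E[U^{\text{buyer}}]\ge V\Delta_{\text{cons}}\ge 0$.

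\textbf{Case (A).} Here we bound the evaluation-error term from below using Lemma~\ref{lem:link-lip}. Conditional on $\phi_{(1)}$, and treating $\varepsilon_{\text{post}}$ as independent of the winner's score, we have
\begin{equation*}
\E[h_{(1)}\mid\phi_{(1)}]=\E[\sigma(\phi_{(1)}+\varepsilon_{\text{post}})].
\end{equation*}
The lemma then gives
\begin{equation*}
\left|\E[h_{(1)}\mid\phi_{(1)}]-p_{(1)}\right|\le L_\sigma\sqrt{b_{\text{post}}+a_{\text{post}}^2}=\Delta_{\text{gate}}.
\end{equation*}
Averaging over $\phi_{(1)}$ yields $\E[p_{(1)}-h_{(1)}]\ge-\Delta_{\text{gate}}$. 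Substituting into the decomposition gives $\E[U^{\text{buyer}}]\ge \E[H]-V\Delta_{\text{gate}}\ge 0$.

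\textbf{Main obstacle.} The delicate point is justifying the conditioning in Case (A). The identity of the winner depends on $g_i=\sigma(\phi_i+\varepsilon_{\text{post}}+\varepsilon_{\text{ante},i})$, so the selection is correlated with $\varepsilon_{\text{post}}$. As a result, the conditional law of $\varepsilon_{\text{post}}$ given the winner is not its unconditional law.

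We handle this in two steps:
\begin{itemize}
\item First, apply the Lipschitz argument pointwise rather than in expectation: $|h_{(1)}-p_{(1)}|\le L_\sigma|\varepsilon_{\text{post}}|$ holds for every realization, whoever the winner is.
\item Then take unconditional expectations and use $\E|\varepsilon_{\text{post}}|\le\sqrt{b_{\text{post}}+a_{\text{post}}^2}$, as in Step~2 of Lemma~\ref{lem:link-lip}.
\end{itemize}
This route sidesteps the selection issue entirely, because the pointwise bound does not depend on which seller is selected. We would adopt it as the rigorous version of Case (A).
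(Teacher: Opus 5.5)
Your proposal is correct and follows the paper's proof: it uses the same decomposition $\E[U^{\text{buyer}}]=\E[H]+V\,\E[p_{(1)}-h_{(1)}]$, the same use of the dummy seller to get $H\ge 0$ in (B), and the same Lipschitz-plus-second-moment bound in (A). The one refinement is in (A): the paper conditions on $\phi_{(1)}$ and applies Lemma~\ref{lem:link-lip} with the unconditional moments of $\varepsilon_{\text{post}}$, which tacitly assumes the evaluation error is independent of who wins, whereas your pointwise bound $|h_{(1)}-p_{(1)}|\le L_\sigma|\varepsilon_{\text{post}}|$ followed by $\E|\varepsilon_{\text{post}}|\le\sqrt{b_{\text{post}}+a_{\text{post}}^2}$ reaches the same inequality without that assumption.
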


\begin{proof}
Let the runner-up surplus be $H$, and let the selected winner be indexed by $(1)$.
The winner's true fulfillment probability and evaluated acceptance probability are
\begin{equation}
\label{eq:CR-probs}
p_{(1)}=\sigma(\phi_{(1)}),
\qquad
h_{(1)}=\sigma\!\big(\phi_{(1)}+\varepsilon_{\text{post}}\big).
\end{equation}
Let $\mu_{(1)}\sim\mathrm{Bernoulli}(p_{(1)})$ denote the true fulfillment outcome and
$\tilde\mu_{(1)}\sim\mathrm{Bernoulli}(h_{(1)})$ denote the evaluator's acceptance signal.
Since the winner is paid
\begin{equation}
\label{eq:CR-payment}
r_{(1)}=V\tilde\mu_{(1)}-H,
\end{equation}
the buyer's realized utility is
\begin{equation}
\label{eq:CR-realized}
U^{\text{buyer}}
=
V\mu_{(1)}-r_{(1)}
=
H+V\big(\mu_{(1)}-\tilde\mu_{(1)}\big).
\end{equation}
Taking expectations and using
$\E[\mu_{(1)}]=\E[p_{(1)}]$ and
$\E[\tilde\mu_{(1)}]=\E[h_{(1)}]$, we obtain
\begin{equation}
\label{eq:Ubuyer-master}
\E[U^{\text{buyer}}]
=
\E[H]+V\big(\E[p_{(1)}]-\E[h_{(1)}]\big).
\end{equation}

\noindent\textbf{(A)}
Condition on $\phi_{(1)}$ and apply Lemma~\ref{lem:link-lip} to $x=\phi_{(1)}$ and $\epsilon=\varepsilon_{\text{post}}$:
\begin{equation}
\label{eq:CR-lip-apply}
\big|\E[h_{(1)}\mid \phi_{(1)}]-p_{(1)}\big|
\;\le\;
L_\sigma\sqrt{\,b_{\text{post}}+a_{\text{post}}^2\,}.
\end{equation}
Taking expectations over $\phi_{(1)}$ gives
\begin{equation}
\label{eq:CR-gate-gap}
\big|\E[h_{(1)}]-\E[p_{(1)}]\big|\le \Delta_{\text{gate}}.
\end{equation}
Substituting \eqref{eq:CR-gate-gap} into \eqref{eq:Ubuyer-master}, we obtain
\begin{equation}
\label{eq:CR-gate-lb}
\E[U^{\text{buyer}}]\ge \E[H]-V\Delta_{\text{gate}}.
\end{equation}
Hence, if $\E[H]\ge V\Delta_{\text{gate}}$, then $\E[U^{\text{buyer}}]\ge 0$, establishing CR.

\noindent\textbf{(B)}
By \eqref{eq:Ubuyer-master},
\begin{equation}
\label{eq:CR-cons-eq}
\E[U^{\text{buyer}}]=\E[H]+V\,\E[p_{(1)}-h_{(1)}]
=\E[H]+V\Delta_{\text{cons}}.
\end{equation}
Since the mechanism includes the dummy seller with score $0$, we have $H\ge 0$ and thus $\E[H]\ge 0$. Therefore, whenever $\Delta_{\text{cons}}\ge0$, \eqref{eq:CR-cons-eq} implies $\E[U^{\text{buyer}}]\ge 0$. This establishes CR.
\end{proof}

\begin{proposition}[Comparative Statics: Ability and Difficulty. Restatement of Proposition~\ref{prop:comparative_statics_unified}]
\label{appen:prop:comparative_statics_unified}
Holding fixed the runner-up score distribution $F_H$, seller $i$'s interim expected utility $\E[U^{\text{seller}}_i]$ is weakly increasing in their model ability $m_i$ and weakly decreasing in the task difficulty $d$.
\end{proposition}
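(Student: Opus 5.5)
The plan is to reduce everything to monotonicity of the seller's effective surplus $\bar T_i = Vg_i - c_i$, and then use that the equilibrium interim utility is a nondecreasing function of $\bar T_i$ when $F_H$ is held fixed.

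\textbf{Step 1: write the equilibrium utility as a function of $\bar T_i$.} By Theorem~\ref{thm:bic}, seller $i$ reports $\hat s_i=\bar T_i$ in equilibrium. The discussion after that theorem then gives
$$\E[U^{\text{seller}}_i]=\Psi(\bar T_i),\qquad \Psi(t):=\int_{-\infty}^{t}(t-h)\,dF_H(h)=\E[(t-H)^+].$$

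\textbf{Step 2: show $\Psi$ is nondecreasing.} The map $t\mapsto (t-H)^+$ is nondecreasing for every realization of $H$. Taking expectations with $F_H$ fixed preserves this, so $\Psi$ is nondecreasing in $t$. Equivalently, the envelope derivative from \eqref{eq:BIC-derivative} gives $\Psi'(t)=F_H(t)\ge 0$. Using the pointwise monotone form avoids any differentiability assumption on $F_H$.

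\textbf{Step 3: relate $\bar T_i$ to $m_i$ and $d$.} The seller's belief is $g_i=\sigma(\phi(m_i,d)+\eta_i)$. Here $\sigma$ is strictly increasing, $\partial\phi/\partial m_i\ge 0$, and $\partial\phi/\partial d\le0$.

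The random error $\eta_i$ needs care, since its distribution must not depend on $(m_i,d)$. This is implicit in the model, because $\varepsilon_{\text{post}}$ and $\varepsilon_{\text{ante},i}$ are exogenous errors. Then for each fixed realization of $\eta_i$:
\begin{itemize}
\item $\sigma(\phi(m_i,d)+\eta_i)$ is weakly increasing in $m_i$;
\item $\sigma(\phi(m_i,d)+\eta_i)$ is weakly decreasing in $d$.
\end{itemize}

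The ex-ante belief used in $\bar T_i$ is the expectation of this quantity over $\eta_i$, consistent with $\E[V\tilde\mu_i]=Vg_i$ from the seller's perspective. Expectation preserves these pointwise monotonicities. The cost $c_i$ does not depend on $m_i$ or $d$. Hence $\bar T_i$ is weakly increasing in $m_i$ and weakly decreasing in $d$.

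\textbf{Step 4: compose.} The function $\Psi$ is nondecreasing, and $\bar T_i$ is monotone in $m_i$ and $d$ as above. Their composition $\Psi(\bar T_i)$ is therefore weakly increasing in $m_i$ and weakly decreasing in $d$, which is the claim.

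The only delicate point is the modeling one in Step 3: we need that changing $m_i$ or $d$ shifts only $\phi_i$, leaving both the error distribution and $F_H$ unchanged. The hypothesis ``holding fixed $F_H$'' in the statement handles the second part, and the exogeneity of the errors handles the first.
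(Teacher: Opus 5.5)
Your proof is correct and follows the same decomposition as the paper's. The paper writes $\E[U^{\text{seller}}_i]=\int_{-\infty}^{\bar T_i}(\bar T_i-h)\,dF_H(h)$ and differentiates via the Leibniz and chain rules to get $\Pr(H\le\bar T_i)\,V\,\E[\sigma'(\phi+\eta_i)]\,\partial\phi/\partial x$, which is your chain $\Psi(\bar T_i)$ with $\Psi$ nondecreasing and $g_i$ monotone through $\sigma\circ\phi$; the only difference is that you prove each link by order preservation rather than differentiation, which avoids assuming $F_H$ is differentiable or differentiating under the expectation, and your pointwise-in-$\eta_i$ argument covers both the realized and the $\eta_i$-averaged readings of $g_i$ that the paper uses interchangeably.
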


\begin{proof}
We consider ceteris-paribus comparative statics, holding the runner-up score distribution $F_H$ fixed. Recall that
\begin{equation}
\E[U_i^{\mathrm{seller}}]
=
\int_{-\infty}^{\bar T_i}(\bar T_i-h)\,dF_H(h),
\end{equation}
where $\bar T_i=Vg_i-c_i$.
Differentiating with respect to $x$ using the Leibniz integral rule yields:
\begin{equation}
\label{eq:dUdx_general}
\frac{\partial \E[U^{\text{seller}}_i]}{\partial x}
= \Pr(H\le \bar T_i) \cdot V \cdot \frac{\partial g_i}{\partial x}
= \Pr(H\le \bar T_i) \cdot V \cdot \E\!\left[\sigma'(\phi+\eta_i)\right]\frac{\partial\phi}{\partial x}.
\end{equation}
The second equality follows from the chain rule applied to the ex-ante belief $g_i$.
Observe that $V$ and $\E[\sigma']$ are strictly positive, while the win probability $\Pr(H\le \bar T_i)$ is non-negative. Thus, the sign of the utility gradient follows the sign of $\frac{\partial\phi}{\partial x}$.
Recall from Section~\ref{sec:basic_setting} that $\phi$ is non-decreasing in ability ($\frac{\partial\phi}{\partial m_i} \ge 0$) and non-increasing in difficulty ($\frac{\partial\phi}{\partial d} \le 0$). Therefore:
\begin{enumerate}
    \item For ability ($x=m_i$), the gradient is non-negative ($\frac{\partial \E[U^{\text{seller}}_i]}{\partial m_i} \ge 0$).
    \item For difficulty ($x=d$), the gradient is non-positive ($\frac{\partial \E[U^{\text{seller}}_i]}{\partial d} \le 0$).
\end{enumerate}
This confirms that the seller's utility is weakly increasing in ability and weakly decreasing in difficulty.
\end{proof}

\subsection{Welfare Analysis}

\begin{proposition}[Economic Efficiency (EE). Restatement of Proposition~\ref{prop:EE}]
\label{appen:prop:EE}
The equilibrium allocation induced by the error-aware setting attains the same expected welfare as the error-free benchmark if and only if the error-aware winner $i^\dagger$ is welfare-optimal.
\end{proposition}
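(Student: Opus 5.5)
The argument is a direct comparison of two maxima. First I fix what both sides mean. By Proposition~\ref{prop:error_free_properties}, specifically its EE part, truthful reporting in the error-free setting allocates to a maximizer $i^\star$ of $Vp_i-c_i$ over $\bar{\mathcal I}$, where the dummy seller $0$ has surplus $0$. The benchmark expected welfare is therefore
\begin{equation*}
W^\star=\E[W_{i^\star}]=\max\{0,\max_i (Vp_i-c_i)\}.
\end{equation*}
In the error-involved setting, Theorem~\ref{thm:bic} says every seller reports $\hat s_i=\bar T_i=Vg_i-c_i$ in the Bayesian Nash equilibrium. The allocation rule then selects $i^\dagger\in\arg\max_{i\in\bar{\mathcal I}}\bar T_i$. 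The key observation is that welfare is computed with the true primitives, not the beliefs: by the welfare identity in the section on utilities and welfare, transfers are internal, so the realized welfare is $V\mu_{i^\dagger}-c_{i^\dagger}$ and its expectation is $\E[W_{i^\dagger}]=Vp_{i^\dagger}-c_{i^\dagger}$. This value is $0$ if $i^\dagger=0$, the null allocation. Neither the evaluator signal $\tilde\mu$ nor the payment $H$ enters this quantity.

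With both quantities pinned down, the equivalence follows from the definition of a maximizer. For the ``if'' direction, say $i^\dagger$ is welfare-optimal, meaning $i^\dagger\in\arg\max_{i\in\bar{\mathcal I}}(Vp_i-c_i)$. Then $\E[W_{i^\dagger}]$ equals the maximal value, which is $W^\star$. For the ``only if'' direction, suppose $\E[W_{i^\dagger}]=W^\star$. Then $Vp_{i^\dagger}-c_{i^\dagger}$ attains the maximum over $\bar{\mathcal I}$, so $i^\dagger$ belongs to the argmax and is welfare-optimal by definition. Along the way I will note that $\E[W_{i^\dagger}]\le W^\star$ always holds, which gives the lower bound $0$ used later in Theorem~\ref{thm:welfare-loss-unified}.

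The only delicate point is bookkeeping rather than mathematics, and it has two parts. First, ``welfare-optimal'' must be read as membership in the argmax set, not equality with one particular $i^\star$, so that ties do not break the ``only if'' direction. Second, the null option must be treated consistently: the dummy seller must be included in both argmax sets, so that $i^\dagger=0$ with $\max_i(Vp_i-c_i)\le0$ counts as optimal. If the randomness in $\phi$ or the errors is averaged over, I will state the claim pointwise for each realization and then note that the expected version follows under the same argmax interpretation.
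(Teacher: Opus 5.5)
Your proposal is correct and follows the paper's argument. Both compute $\E[W_i]=Vp_i-c_i$ for whichever seller is selected, then observe that $Vp_{i^\dagger}-c_{i^\dagger}$ equals the benchmark maximum exactly when $i^\dagger$ lies in the argmax; your extra bookkeeping (reading welfare-optimality as argmax membership and including the dummy seller $0$ in both maximizations) is a harmless refinement that the paper's proof, which maximizes only over real sellers, leaves implicit.
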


\begin{proof}
Let $i^\star$ denote an error-free welfare-maximizing seller. If seller $i$ is selected, then the realized welfare is
\begin{equation}
\label{eq:EE-realized-welfare}
W_i = V\mu_i-c_i.
\end{equation}
Taking the expectation gives
\begin{equation}
\label{eq:EE-expected-welfare}
\E[W_i] = V\E[\mu_i]-c_i = Vp_i-c_i.
\end{equation}
Hence, the expected welfare induced by the error-aware allocation is
\begin{equation}
\label{eq:EE-error-aware-welfare}
\E[W_{i^\dagger}] = Vp_{i^\dagger}-c_{i^\dagger},
\end{equation}
whereas the error-free benchmark welfare is
\begin{equation}
\label{eq:EE-benchmark-welfare}
\E[W_{i^\star}] = \max_i\{Vp_i-c_i\}.
\end{equation}
Therefore,
\begin{equation}
\label{eq:EE-iff}
\E[W_{i^\dagger}] = \E[W_{i^\star}]
\iff
Vp_{i^\dagger}-c_{i^\dagger}=\max_i\{Vp_i-c_i\}.
\end{equation}
This proves the claim.
\end{proof}

\begin{theorem}[Welfare-loss bound. Restatement of Theorem~\ref{thm:welfare-loss-unified}]
\label{appen:thm:welfare-loss-unified}
The expected welfare loss satisfies
$0 \le \E[W_{i^\star}]  -\E[W_{i^\dagger}] 
= \E[(Vp_{i^\star}{-}c_{i^\star}) - (Vp_{i^\dagger}{-}c_{i^\dagger})] 
\le 2V L_\sigma(M_{\mathrm{post}}+M_{\mathrm{ante}})$.
\end{theorem}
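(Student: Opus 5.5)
The plan is a standard ``regret decomposition'' for an argmax rule run on a perturbed score. The lower bound is immediate from optimality of $i^\star$. Since $i^\star\in\arg\max_i\{Vp_i-c_i\}$, we have $Vp_{i^\star}-c_{i^\star}\ge Vp_{i^\dagger}-c_{i^\dagger}$ for every realization. Taking expectations gives $0\le \E[W_{i^\star}]-\E[W_{i^\dagger}]$. The displayed equality is just Proposition~\ref{prop:EE}'s identity $\E[W_i]=Vp_i-c_i$ applied to both indices.

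For the upper bound, write $T_i=Vp_i-c_i$ for the true surplus and $\bar T_i=Vg_i-c_i$ for the effective surplus that ranks sellers in equilibrium (Theorem~\ref{thm:bic}). Since $i^\dagger$ maximizes $\bar T$, we have $\bar T_{i^\star}-\bar T_{i^\dagger}\le 0$. This gives the three-term decomposition
\begin{equation*}
T_{i^\star}-T_{i^\dagger}
=(T_{i^\star}-\bar T_{i^\star})+(\bar T_{i^\star}-\bar T_{i^\dagger})+(\bar T_{i^\dagger}-T_{i^\dagger})
\le V\big(|p_{i^\star}-g_{i^\star}|+|g_{i^\dagger}-p_{i^\dagger}|\big).
\end{equation*}
The costs cancel because sellers report their true cost $c_i$ within $\bar T_i$. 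Each gap $|g_i-p_i|$ is then split through the ex-post acceptance probability, $|g_i-p_i|\le |g_i-h_i|+|h_i-p_i|$, which separates the two components of the Dual Error.

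Each piece is then bounded by Lemma~\ref{lem:link-lip}, applied in expectation. For $h_i-p_i$, condition on $\phi_i$ and use $h_i=\sigma(\phi_i+\varepsilon_{\text{post}})$ with $p_i=\sigma(\phi_i)$. This gives $\E|h_i-p_i|\le L_\sigma\E|\varepsilon_{\text{post}}|\le L_\sigma M_{\mathrm{post}}$. For $g_i-h_i$, we have $g_i=\sigma(\phi_i+\varepsilon_{\text{post}}+\varepsilon_{\text{ante},i})$, so by the Lipschitz property $|g_i-h_i|\le L_\sigma|\varepsilon_{\text{ante},i}|$. By Lyapunov, $\E|\varepsilon_{\text{ante},i}|\le\sqrt{b_{\text{ante},i}+a_{\text{ante},i}^2}\le M_{\mathrm{ante}}$, where the max over $i$ in $M_{\mathrm{ante}}$ makes the bound uniform in the index. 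Summing over the two indices $i\in\{i^\star,i^\dagger\}$ and multiplying by $V$ gives $2VL_\sigma(M_{\mathrm{post}}+M_{\mathrm{ante}})$.

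The main obstacle is that $i^\dagger$ is random and depends on the very errors being bounded. A term like $\E|g_{i^\dagger}-p_{i^\dagger}|$ therefore cannot be bounded by plugging a fixed index into Lemma~\ref{lem:link-lip} when $g$ is read as a realized quantity. I would handle this by working with pointwise Lipschitz bounds, $|g_i-h_i|\le L_\sigma|\varepsilon_{\text{ante},i}|$ and $|h_i-p_i|\le L_\sigma|\varepsilon_{\text{post}}|$. These hold simultaneously for all $i$, so substituting the random index costs nothing. Taking expectations at the very end then yields $\E|\varepsilon_{\text{post}}|$ and $\max_i\E|\varepsilon_{\text{ante},i}|$, and the second-moment radii follow by Jensen. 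If $g_i$ is instead read as the interim expectation $\E[\sigma(\phi_i+\eta_i)]$, a deterministic quantity given $\phi_i$, the argument is easier. In that case $i^\dagger$ is determined by deterministic beliefs, and Lemma~\ref{lem:link-lip} applies directly to each gap $|g_i-p_i|\le L_\sigma\E|\eta_i|\le L_\sigma(M_{\mathrm{post}}+M_{\mathrm{ante}})$ by Minkowski's inequality. I would state the proof so that it covers both readings.
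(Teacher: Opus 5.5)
Your interim-reading argument, where $g_i=\E[\sigma(\phi_i+\eta_i)]$ is a deterministic belief, is correct and is essentially the paper's proof. The paper uses the same three-term sandwich. It then bounds $|g_i-p_i|$ by telescoping through $\E[\sigma(\phi_i+\varepsilon_{\mathrm{post}})]$ and applying Lemma~\ref{lem:link-lip} to each increment. That is equivalent to your single use of the lemma with $\epsilon=\eta_i$ followed by $\E|\eta_i|\le\E|\varepsilon_{\mathrm{post}}|+\E|\varepsilon_{\mathrm{ante},i}|$. The paper's proof is written for this reading: its increment $g_i-\E[\sigma(\phi_i+\varepsilon_{\mathrm{post}})]$ is controlled by the lemma only when $g_i$ is itself an expectation.

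The branch you present as resolving your ``main obstacle'' (realized $g_i$, with an error-dependent $i^\dagger$) has a genuine gap. The pointwise bound $|g_i-h_i|\le L_\sigma|\varepsilon_{\mathrm{ante},i}|$ does hold for all $i$ at once. But substituting $i=i^\dagger$ and taking expectations gives $\E|\varepsilon_{\mathrm{ante},i^\dagger}|$, not $\max_i\E|\varepsilon_{\mathrm{ante},i}|$. Since $i^\dagger$ maximizes $Vg_i-c_i$, it is selected precisely for having a large $\varepsilon_{\mathrm{ante},i}$ (a winner's-curse effect). In general you only get $\E\max_i|\varepsilon_{\mathrm{ante},i}|$, which grows with $N$. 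The post term survives only because $\varepsilon_{\mathrm{post}}$ is common to all sellers.

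This cannot be repaired, because under the realized reading the stated bound is false. Consider the following setup:
\begin{itemize}
\item $\varepsilon_{\mathrm{post}}\equiv0$;
\item the $\varepsilon_{\mathrm{ante},i}$ are i.i.d., equal to $K$ with probability $q$ and $0$ otherwise, so $M_{\mathrm{ante}}=K\sqrt q$;
\item all sellers share a score $\phi$, and $\sigma$ has slope $L_\sigma$ on $[\phi,\phi+K]$;
\item one seller's cost is lower than its $N-1$ rivals' by some $\delta$ slightly below $VL_\sigma K$.
\end{itemize}
Whenever that seller draws $0$ and some rival draws $K$, the rival outbids it by $VL_\sigma K-\delta>0$, and welfare $\delta$ is lost. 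The expected loss is therefore $\delta(1-q)\bigl(1-(1-q)^{N-1}\bigr)\to\delta(1-q)\approx(1-q)VL_\sigma K$. For small $q$ (e.g.\ $q=0.01$) this exceeds $2VL_\sigma M_{\mathrm{ante}}=2VL_\sigma K\sqrt q$.

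You should drop the claim that your proof covers both readings and state it for interim beliefs. For realized beliefs, you could instead prove a bound with $M_{\mathrm{ante}}$ replaced by $\E\max_i|\varepsilon_{\mathrm{ante},i}|$.
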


\begin{proof}
\textbf{Step 1: Pointwise comparison.}
Let $\E[W_i]=Vp_i-c_i$ and $\bar T_i=Vg_i-c_i$.  
Since $i^\dagger$ maximizes $\bar T_i$,
\begin{equation}
\label{eq:key-sandwich}
\E[W_{i^\star}]-\E[W_{i^\dagger}]
=
(\E[W_{i^\star}]-\bar T_{i^\star})
+
(\bar T_{i^\star}-\bar T_{i^\dagger})
+
(\bar T_{i^\dagger}-\E[W_{i^\dagger}])
\;\le\;
(\E[W_{i^\star}]-\bar T_{i^\star}) + (\bar T_{i^\dagger}-\E[W_{i^\dagger}]),
\end{equation}
because $\bar T_{i^\star}-\bar T_{i^\dagger}\le 0$.  
Rearranging yields
\begin{equation}
\label{eq:two-errors}
0\le \E[W_{i^\star}]-\E[W_{i^\dagger}]
\le V\bigl[(p_{i^\star}-g_{i^\star})+(g_{i^\dagger}-p_{i^\dagger})\bigr]
\le V\bigl(|g_{i^\star}-p_{i^\star}|+|g_{i^\dagger}-p_{i^\dagger}|\bigr).
\end{equation}

\textbf{Step 2: Bound $|g_i-p_i|$ by decomposing the two channels.}
Introduce the telescoping decomposition
\begin{equation}
\label{eq:g012}
g_i^{(0)}=\sigma(\phi_i)=p_i,\quad
g_i^{(1)}=\E[\sigma(\phi_i+\varepsilon_{\mathrm{post}})],\quad
g_i^{(2)}=g_i,
\end{equation}
so that
\begin{equation}
\label{eq:two-steps}
|g_i-p_i|
\le 
|g_i^{(1)}-g_i^{(0)}|
+
|g_i^{(2)}-g_i^{(1)}|.
\end{equation}
Applying Lemma~\ref{lem:link-lip} to each increment:
\begin{align}
|g_i^{(1)}-g_i^{(0)}|
&\le L_\sigma\sqrt{b_{\mathrm{post}}+a_{\mathrm{post}}^2} = L_\sigma M_{\mathrm{post}},
\label{eq:inc-post}\\
|g_i^{(2)}-g_i^{(1)}|
&\le L_\sigma\sqrt{b_{\mathrm{ante},i}+a_{\mathrm{ante},i}^2} \le L_\sigma M_{\mathrm{ante}} .
\label{eq:inc-ante}
\end{align}
Therefore,
\begin{equation}
\label{eq:gi-pi-uniform}
|g_i-p_i|
\le 
L_\sigma\bigl(M_{\mathrm{post}}+M_{\mathrm{ante}}\bigr)
\quad 
\text{for all }i.
\end{equation}

\textbf{Step 3: Combine.}
Taking expectations of \eqref{eq:two-errors} and applying 
\eqref{eq:gi-pi-uniform} at indices $i^\star$ and $i^\dagger$,
\begin{equation}
\label{eq:final-gap}
0
\;\le\;
\E[W_{i^\star}]-\E[W_{i^\dagger}]
\;\le\;
V\,\E\!\big[|g_{i^\star}-p_{i^\star}|+|g_{i^\dagger}-p_{i^\dagger}|\big]
\;\le\;
2V L_\sigma\bigl(M_{\mathrm{post}}+M_{\mathrm{ante}}\bigr),
\end{equation}
completing the proof.
\end{proof}

\subsection{Structural Insights}

\begin{proposition}[Opposite-Signed Error Compensation. Restatement of Proposition~\ref{prop:counteracting-error}]
\label{appen:prop:counteracting-error}
If $(g_i-h_i)(h_i-p_i)\le 0$ for $i\in\{i^\star,i^\dagger\}$, then the welfare-loss bound tightens to $2V L_\sigma\max\{M_{\mathrm{ante}},M_{\mathrm{post}}\}$.
\end{proposition}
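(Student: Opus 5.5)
The plan is to rerun the proof of Theorem~\ref{thm:welfare-loss-unified} unchanged through Step 1 and replace only the per-index bound in Step 2. Step 1 of that proof already gives
\[
0\le \E[W_{i^\star}]-\E[W_{i^\dagger}]\le V\bigl(|g_{i^\star}-p_{i^\star}|+|g_{i^\dagger}-p_{i^\dagger}|\bigr).
\]
It therefore suffices to show that, for each $i\in\{i^\star,i^\dagger\}$,
\[
|g_i-p_i|\le L_\sigma\max\{M_{\mathrm{ante}},M_{\mathrm{post}}\}.
\]
Summing the two such bounds then yields the claimed $2VL_\sigma\max\{M_{\mathrm{ante}},M_{\mathrm{post}}\}$.

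The key step is the elementary fact that if real numbers $x=g_i-h_i$ and $y=h_i-p_i$ satisfy $xy\le 0$, then
\[
|x+y|\le\max\{|x|,|y|\}.
\]
This holds because when the signs are opposite, one term partially cancels the other, so $|x+y|=\bigl||x|-|y|\bigr|$. Writing $g_i-p_i=(g_i-h_i)+(h_i-p_i)$ and applying this fact replaces the triangle-inequality sum used in the original Step 2 with a maximum.

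Each piece is then bounded exactly as in Step 2 of the earlier proof, via Lemma~\ref{lem:link-lip}. The ex-post piece, with $h_i$ read as its expectation $\E[\sigma(\phi_i+\varepsilon_{\mathrm{post}})]=g_i^{(1)}$, satisfies $|h_i-p_i|\le L_\sigma M_{\mathrm{post}}$. The ex-ante piece, comparing $g_i=\E[\sigma(\phi_i+\varepsilon_{\mathrm{post}}+\varepsilon_{\mathrm{ante},i})]$ with $g_i^{(1)}$, satisfies $|g_i-h_i|\le L_\sigma M_{\mathrm{ante}}$. For the latter, condition on $\varepsilon_{\mathrm{post}}$, apply the lemma with $x=\phi_i+\varepsilon_{\mathrm{post}}$ and $\epsilon=\varepsilon_{\mathrm{ante},i}$ (which is independent of $\varepsilon_{\mathrm{post}}$), and then average over $\varepsilon_{\mathrm{post}}$.

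The main point needing care is interpretive rather than technical: the sign hypothesis $(g_i-h_i)(h_i-p_i)\le0$ must refer to the same quantities that appear in the telescoping of Step 2. I would therefore take $h_i$ in the hypothesis to mean $g_i^{(1)}$, the expected ex-post acceptance probability, with $g_i$ likewise treated as its expectation. If instead the hypothesis is meant pathwise, in the realized errors, I would apply the max-inequality pointwise and take expectations afterwards, bounding $|\sigma(\phi+\eta)-\sigma(\phi)|$ by $L_\sigma\max\{|\varepsilon_{\mathrm{ante},i}|,|\varepsilon_{\mathrm{post}}|\}$. That route bounds the expected maximum rather than the maximum of the radii, so it would require a small extra argument or a slightly weaker constant. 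I would state the averaged interpretation explicitly to keep the bound exactly as claimed.
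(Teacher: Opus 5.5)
Your proposal is correct and follows the paper's proof essentially step for step: you reuse Step 1 of the welfare-loss theorem, split $g_i-p_i=(g_i-h_i)+(h_i-p_i)$, use $|x+y|\le\max\{|x|,|y|\}$ under opposite signs, and bound the two pieces by $L_\sigma M_{\mathrm{ante}}$ and $L_\sigma M_{\mathrm{post}}$ via Lemma~\ref{lem:link-lip}. Your explicit reading of $h_i$ as $g_i^{(1)}$, matching the telescoping in Step 2, and your justification of the ex-ante bound are a useful clarification of points the paper leaves implicit (it only says that $g-h$ ``differs only by the ante channel'').
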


\begin{proof}
Let $i^\star \in \arg\max_j W_j$ denote the winner under the error-free setting, and let $i^\dagger \in \arg\max_j\{V g_j-c_j\}$ denote the winner selected by the error-involved mechanism.
We start from the welfare-gap decomposition established in Theorem~\ref{thm:welfare-loss-unified}:
\begin{equation}
\label{eq:opp-gap1}
0 \le \E[W_{i^\star}]-\E[W_{i^\dagger}] \le V\Big(|g_{i^\star}-p_{i^\star}|+|g_{i^\dagger}-p_{i^\dagger}|\Big).
\end{equation}
Consider the error decomposition $g_i-p_i = (g_i-h_i) + (h_i-p_i) \coloneqq a_i + b_i$.
The condition $(g_i-h_i)(h_i-p_i) \le 0$ implies that the two error components $a_i$ and $b_i$ have opposite signs. Consequently, their sum is bounded by the maximum of their absolute values:
\begin{equation}
\label{eq:opp-gap2}
|g_i-p_i| = |a_i+b_i| \le \max\{|a_i|,|b_i|\}.
\end{equation}
Applying this to the critical sellers $i \in \{i^\star,i^\dagger\}$ yields:
\begin{equation}
\label{eq:opp-gap3}
|g_{i^\star}-p_{i^\star}|+|g_{i^\dagger}-p_{i^\dagger}|
\ \le\ 
\max\{|a_{i^\star}|,|b_{i^\star}|\}
+\max\{|a_{i^\dagger}|,|b_{i^\dagger}|\}.
\end{equation}
Next, invoke the Lipschitz bounds from Lemma~\ref{lem:link-lip}:  
\begin{align}
|g_i-h_i|&\le L_\sigma\,M_{\mathrm{ante}},
\label{eq:opp-gap4a}\\
|h_i-p_i|&\le L_\sigma\,M_{\mathrm{post}},
\label{eq:opp-gap4b}
\end{align}
where the first arises because $g-h$ differs only by the ante channel, and the second because $h-p$ aggregates the post channels. Taking the supremum across sellers $j$ yields
\begin{equation}
\label{eq:opp-gap5}
\sup_j|g_j-h_j|\le L_\sigma M_{\mathrm{ante}},
\quad
\sup_j|h_j-p_j|\le L_\sigma M_{\mathrm{post}}.
\end{equation}
Combining \eqref{eq:opp-gap1}–\eqref{eq:opp-gap5}, we obtain
\begin{equation}
\label{eq:opp-gap6}
0\ \le\ \E[W_{i^\star}]-\E[W_{i^\dagger}]
\ \le\ 2V L_\sigma\,
\max\!\big\{M_{\mathrm{ante}},\ M_{\mathrm{post}}\big\}.
\end{equation}
Finally, because for any nonnegative $x,y$, $\max\{x,y\}<x+y$ when both are positive,  
we have
\begin{equation}
\label{eq:opp-gap7}
2V L_\sigma\,
\max\!\big\{M_{\mathrm{ante}},\ M_{\mathrm{post}}\big\}
\ <\
2V L_\sigma\,(M_{\mathrm{post}}+M_{\mathrm{ante}}),
\end{equation}
which completes the proof.
\end{proof}

\begin{proposition}[Saturation Robustness. Restatement of Proposition~\ref{prop:saturation-robustness}]
\label{appen:prop:saturation-robustness}
$\lim_{|x|\to\infty}\sigma'(x)=0$. For any error $\epsilon$ with $\E[\epsilon^2]<\infty$, the deviation $\Delta(\phi_i)=|\E[\sigma(\phi_i+\epsilon)]-\sigma(\phi_i)|$ satisfies $\lim_{|\phi_i|\to\infty}\Delta(\phi_i)=0$.
\end{proposition}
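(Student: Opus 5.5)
We write the deviation as an expectation of a pointwise difference and split it by the size of the noise. By Jensen's inequality, as in Step~1 of Lemma~\ref{lem:link-lip},
\[
\Delta(\phi_i)\le \E\big[\,|\sigma(\phi_i+\epsilon)-\sigma(\phi_i)|\,\big].
\]
The goal is to show that the right-hand side tends to $0$ as $|\phi_i|\to\infty$. Fix a truncation level $R>0$ and split the expectation on the events $\{|\epsilon|\le R\}$ and $\{|\epsilon|>R\}$.

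\textbf{Tail event.} On $\{|\epsilon|>R\}$ we use the global Lipschitz bound $|\sigma(\phi_i+\epsilon)-\sigma(\phi_i)|\le L_\sigma|\epsilon|$. Since $\E[\epsilon^2]<\infty$ implies $\E[|\epsilon|]<\infty$, dominated convergence (or Markov/Cauchy--Schwarz, $\E[|\epsilon|\mathbf 1_{|\epsilon|>R}]\le \E[\epsilon^2]/R$) gives
\[
L_\sigma\,\E\big[|\epsilon|\mathbf 1_{\{|\epsilon|>R\}}\big]\le \tfrac{L_\sigma\E[\epsilon^2]}{R}.
\]
This bound is uniform in $\phi_i$ and can be made smaller than any prescribed $\delta/2$ by choosing $R$ large.

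\textbf{Bounded event.} On $\{|\epsilon|\le R\}$, the Mean Value Theorem gives
\[
|\sigma(\phi_i+\epsilon)-\sigma(\phi_i)|\le |\epsilon|\sup_{|z-\phi_i|\le R}\sigma'(z)\le R\,\omega(|\phi_i|-R),
\]
where $\omega(t)=\sup_{|z|\ge t}\sigma'(z)$. The vanishing-tail assumption $\lim_{|x|\to\infty}\sigma'(x)=0$ means exactly that $\omega(t)\to0$ as $t\to\infty$. Hence, with $R$ fixed, this contribution tends to $0$ as $|\phi_i|\to\infty$. Taking $\limsup_{|\phi_i|\to\infty}$ and then letting $\delta\to0$ yields $\lim_{|\phi_i|\to\infty}\Delta(\phi_i)=0$.

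An alternative route is dominated convergence applied directly: for each fixed $\epsilon$ one would need $\sigma(\phi_i+\epsilon)-\sigma(\phi_i)\to 0$, with the difference dominated by $L_\sigma|\epsilon|$, which is integrable. The truncation argument above is more transparent, however, and it also applies when $\sigma$ is not bounded monotone in a convenient way. Since $\sigma$ is increasing with values in $(0,1)$, it has limits at $\pm\infty$, which would make the pointwise convergence immediate anyway.

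\textbf{Main obstacle.} The delicate step is uniformity. The Lipschitz control of Lemma~\ref{lem:link-lip} alone gives only a $\phi_i$-independent bound and cannot produce decay. Conversely, the tail-decay of $\sigma'$ is only local: it controls $\sigma'$ on a window of radius $R$ around $\phi_i$, not along arbitrarily large excursions of $\epsilon$. The truncation reconciles the two: the square-integrability of $\epsilon$ handles large excursions uniformly in $\phi_i$, and the vanishing tails handle the bounded window. The remaining care is to choose $R$ first, depending only on $\delta$, and only afterwards send $|\phi_i|\to\infty$.
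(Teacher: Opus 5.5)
Your proof is correct and follows essentially the same truncation argument as the paper: you bound $\Delta(\phi_i)$ by the expectation of the pointwise increment, use the vanishing tail of $\sigma'$ on the small-noise event, use $\sup|\sigma'|\le L_\sigma$ with the integrability of $|\epsilon|$ on the large-noise event, and then take $\limsup$ and let the tolerance go to zero. The only difference is in how the cutoff is chosen: the paper uses the $\phi_i$-dependent event $\{|\epsilon|\le|\phi_i|/2\}$ with dominated convergence for the tail, whereas you fix $R$ first and use the explicit bound $\E[|\epsilon|\mathbf 1_{\{|\epsilon|>R\}}]\le\E[\epsilon^2]/R$.
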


\begin{proof}
Using the integral form of the Mean Value Theorem, for any realization of $\epsilon$,
\begin{equation}
\sigma(\phi_i+\epsilon)-\sigma(\phi_i)
=\epsilon\int_0^1 \sigma'(\phi_i+t\epsilon)\,dt.
\end{equation}
Taking absolute values and expectations gives
\begin{equation}
\Delta(\phi_i)
\le \E\!\left[|\epsilon|\int_0^1 |\sigma'(\phi_i+t\epsilon)|\,dt\right].
\label{eq:delta-mvt-bound}
\end{equation}

Fix $\eta>0$. Since $\lim_{|x|\to\infty}\sigma'(x)=0$, there exists $R>0$ such that
\begin{equation}
|x|\ge R \ \Longrightarrow\ |\sigma'(x)|\le \eta.
\label{eq:tail-flat}
\end{equation}
Define the event $A=\{|\epsilon|\le |\phi_i|/2\}$ and split the expectation in \eqref{eq:delta-mvt-bound} over $A$ and $A^c$.

On $A$, for any $t\in[0,1]$,
\begin{equation}
|\phi_i+t\epsilon|
\ge |\phi_i|-t|\epsilon|
\ge |\phi_i|-\frac{|\phi_i|}{2}
=\frac{|\phi_i|}{2}.
\end{equation}
Hence, if $|\phi_i|\ge 2R$, then $|\phi_i+t\epsilon|\ge R$ and by \eqref{eq:tail-flat} we have $|\sigma'(\phi_i+t\epsilon)|\le \eta$. Therefore,
\begin{equation}
\E\!\left[|\epsilon|\int_0^1 |\sigma'(\phi_i+t\epsilon)|\,dt\;\mathbf 1_A\right]
\le \eta\,\E[|\epsilon|].
\label{eq:A-term}
\end{equation}

On $A^c$, we use boundedness of $\sigma'$:
\begin{equation}
\E\!\left[|\epsilon|\int_0^1 |\sigma'(\phi_i+t\epsilon)|\,dt\;\mathbf 1_{A^c}\right]
\le \|\sigma'\|_\infty\,\E\!\left[|\epsilon|\,\mathbf 1\{|\epsilon|>|\phi_i|/2\}\right].
\label{eq:Ac-term}
\end{equation}
Since $\E[\epsilon^2]<\infty$, we have $\E[|\epsilon|]<\infty$, and moreover
\begin{equation}
\E\!\left[|\epsilon|\,\mathbf 1\{|\epsilon|>t\}\right]\xrightarrow[t\to\infty]{}0
\end{equation}
(e.g., by dominated convergence with the dominating integrable random variable $|\epsilon|$).
Thus the right-hand side of \eqref{eq:Ac-term} vanishes as $|\phi_i|\to\infty$.

Combining \eqref{eq:delta-mvt-bound}--\eqref{eq:Ac-term} yields, for all sufficiently large $|\phi_i|$,
\begin{equation}
\Delta(\phi_i)
\le \eta\,\E[|\epsilon|]
+\|\sigma'\|_\infty\,\E\!\left[|\epsilon|\,\mathbf 1\{|\epsilon|>|\phi_i|/2\}\right].
\end{equation}
Taking $\limsup_{|\phi_i|\to\infty}$ gives $\limsup_{|\phi_i|\to\infty}\Delta(\phi_i)\le \eta\,\E[|\epsilon|]$. Because $\eta>0$ is arbitrary, we conclude that $\Delta(\phi_i)\to 0$ as $|\phi_i|\to\infty$.
\end{proof}

\begin{proposition}[Noise-Induced Flattening. Restatement of Proposition~\ref{prop:variance_flattening}]
\label{appen:prop:variance_flattening}
Let $\eta$ be an independent noise term with $\E[|\eta|]<\infty$. Define the perturbed belief maps by convolution:
$\tilde g_i(\phi)=\E[g_i(\phi+\eta)]$ and $\tilde h_i(\phi)=\E[h_i(\phi+\eta)]$.
If $g_i$ and $h_i$ are continuously differentiable with bounded derivatives, then
$\sup_{\phi}|\tfrac{\partial \tilde g_i}{\partial \phi}(\phi)|\le \sup_{\phi}|\tfrac{\partial g_i}{\partial \phi}(\phi)|$
and
$\sup_{\phi}|\tfrac{\partial \tilde h_i}{\partial \phi}(\phi)|\le \sup_{\phi}|\tfrac{\partial h_i}{\partial \phi}(\phi)|$.
Consequently, injecting additional independent noise weakly reduces the maximal sensitivity of both $\phi\mapsto g_i(\phi)$ and $\phi\mapsto h_i(\phi)$.
\end{proposition}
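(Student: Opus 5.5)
The plan is to differentiate under the expectation and then bound the resulting average of derivatives by the supremum of the derivative. The argument for $\tilde g_i$ and for $\tilde h_i$ is identical, so we write it for a generic map $k\in\{g_i,h_i\}$. Here $k$ is $C^1$ with $K:=\sup_\phi|k'(\phi)|<\infty$, and we set $\tilde k(\phi)=\E[k(\phi+\eta)]$.

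\textbf{Step 1: $\tilde k$ is well defined and Lipschitz.} Because $k$ is $C^1$ with bounded derivative, the Mean Value Theorem (as in Step~1 of Lemma~\ref{lem:link-lip}) gives $|k(x)-k(y)|\le K|x-y|$. In particular $|k(\phi+\eta)|\le |k(\phi)|+K|\eta|$. Since $\E[|\eta|]<\infty$, the expectation defining $\tilde k(\phi)$ is finite for every $\phi$; this is where the moment assumption on $\eta$ is used.

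\textbf{Step 2: differentiate under the expectation.} For $\delta\neq0$, write the difference quotient as
\begin{equation*}
\frac{\tilde k(\phi+\delta)-\tilde k(\phi)}{\delta}=\E\!\left[\frac{k(\phi+\eta+\delta)-k(\phi+\eta)}{\delta}\right].
\end{equation*}
Independence of $\eta$ from $\phi$ ensures its law does not depend on $\phi$, so the shift acts only inside $k$. The integrand is bounded in absolute value by $K$ by the Lipschitz bound from Step~1. As $\delta\to0$ it converges pointwise to $k'(\phi+\eta)$, since $k$ is differentiable. Dominated convergence, with the constant $K$ as dominating function, then yields $\tilde k'(\phi)=\E[k'(\phi+\eta)]$.

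\textbf{Step 3: Jensen bound.} Using $|\E[Y]|\le\E[|Y|]$ as in Lemma~\ref{lem:link-lip}, we get $|\tilde k'(\phi)|\le \E[|k'(\phi+\eta)|]\le K$ for every $\phi$. Taking the supremum over $\phi$ gives $\sup_\phi|\tilde k'(\phi)|\le\sup_\phi|k'(\phi)|$. Specializing $k=g_i$ and $k=h_i$ gives both inequalities in the statement. The ``consequently'' clause is a restatement, since the sup of the derivative is the maximal sensitivity (the best Lipschitz constant) of each map.

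\textbf{Main obstacle.} The only delicate step is justifying the interchange of derivative and expectation in Step~2; the bounded-derivative hypothesis is exactly what supplies the uniform domination. Should one prefer to avoid differentiability of $\tilde k$ altogether, an alternative is to prove directly that $\tilde k$ is $K$-Lipschitz, via $|\tilde k(x)-\tilde k(y)|\le \E|k(x+\eta)-k(y+\eta)|\le K|x-y|$. This gives the same conclusion for the maximal slope, and $\tilde k$ remains $C^1$ by the argument above.
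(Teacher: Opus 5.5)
Your proposal is correct and follows the paper's proof. You differentiate under the expectation via dominated convergence to get $\tilde k'(\phi)=\E[k'(\phi+\eta)]$, apply $|\E[Y]|\le\E[|Y|]$, and take the supremum over $\phi$; your Step~1 (finiteness of $\tilde k$) and the explicit domination of the difference quotient by $K$ just fill in details the paper leaves to a parenthetical.
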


\begin{proof}
We prove the claim for $\tilde g_i$; the argument for $\tilde h_i$ is identical. By definition,
\begin{equation}
\tilde g_i(\phi)=\E[g_i(\phi+\eta)].
\end{equation}
Since $g_i$ is continuously differentiable and $g_i'$ is bounded, we may differentiate under the expectation (e.g., by dominated convergence) to obtain
\begin{equation}
\frac{\partial \tilde g_i}{\partial \phi}(\phi)=\E\!\left[g_i'(\phi+\eta)\right].
\end{equation}
Taking absolute values and using the bound $|g_i'(x)|\le \sup_{y}|g_i'(y)|$ for all $x$ yields
\begin{equation}
|\frac{\partial \tilde g_i}{\partial \phi}(\phi)|
=|\E[g_i'(\phi+\eta)]|
\le \E\!\left[|g_i'(\phi+\eta)|\right]
\le \sup_{y}|g_i'(y)|.
\end{equation}
Finally, taking the supremum over $\phi$ on the left-hand side gives
\begin{equation}
\sup_{\phi}|\tfrac{\partial \tilde g_i}{\partial \phi}(\phi)|\le \sup_{y}|g_i'(y)|,
\end{equation}
as desired. The same steps apply to $\tilde h_i$.
\end{proof}

\section{Limitations and Future Directions}
\label{app:limitation_future_direction}

Although this work, to our knowledge, is the first to introduce a reverse-auction paradigm for LLM routing and to explicitly model Dual Error in this setting, several limitations remain. Our present theory does not cover payment-rule variants such as bounded penalties or non-negative payments, which may alter the existing BIC, IR, and CR guarantees. Although we provide robustness experiments with realistic noisy local information and noisy evaluators, our empirical evaluation still does not cover all practical settings. Moreover, while EA-RAM improves scalability by trading additional communication overhead for nearly fixed center-side computation, this advantage is most evident in our current text-based benchmarks, where payload sizes are relatively small; in communication-heavy settings such as image- or video-based tasks, or under degraded network conditions, communication may become a more significant bottleneck. In addition, the auction protocol requires broadcasting each query to all candidate providers before selection, which may expose user prompts to non-winning providers and raises security concerns. Finally, because routing and payment are tied to evaluator outcomes, providers may be incentivized to optimize toward the evaluator rather than the user's true underlying need.

These limitations suggest several directions for future work. An important extension is to develop communication-efficient reverse-auction mechanisms. Other promising directions include extending the theory to bounded-penalty or non-negative-payment variants, tightening the current welfare-loss bound, generalizing the framework beyond single-winner routing to support multi-provider cross-checking, cascading, and multi-turn settings, extending the model to heterogeneous risk preferences, and providing more principled support for non-binary evaluation tasks beyond the current threshold-based implementation. In addition, while EA-RAM is defined with respect to an announced task value $V$, it remains well defined as long as the buyer announces a reference value; a natural next step is to study more systematically how misspecification of $V$ changes participation incentives and the resulting operating point.

% \section{Technical appendices and supplementary material}
% Technical appendices with additional results, figures, graphs, and proofs may be submitted with the paper submission before the full submission deadline (see above). You can upload a ZIP file for videos or code, but do not upload a separate PDF file for the appendix. There is no page limit for the technical appendices. 

% Note: Think of the appendix as ``optional reading'' for reviewers. The paper must be able to stand alone without the appendix; for example, adding critical experiments that support the main claims to an appendix is inappropriate. 

%% file: ref.bib
@article{porter2008fault,
  title={Fault tolerant mechanism design},
  author={Porter, Ryan and Ronen, Amir and Shoham, Yoav and Tennenholtz, Moshe},
  journal={Artificial Intelligence},
  volume={172},
  number={15},
  pages={1783--1799},
  year={2008},
  publisher={Elsevier}
}

@inproceedings{song-etal-2025-irt,
    title = "{IRT}-Router: Effective and Interpretable Multi-{LLM} Routing via Item Response Theory",
    author = "Song, Wei  and
      Huang, Zhenya  and
      Cheng, Cheng  and
      Gao, Weibo  and
      Xu, Bihan  and
      Zhao, GuanHao  and
      Wang, Fei  and
      Wu, Runze",
    editor = "Che, Wanxiang  and
      Nabende, Joyce  and
      Shutova, Ekaterina  and
      Pilehvar, Mohammad Taher",
    booktitle = "Proceedings of the 63rd Annual Meeting of the Association for Computational Linguistics (Volume 1: Long Papers)",
    month = jul,
    year = "2025",
    address = "Vienna, Austria",
    publisher = "Association for Computational Linguistics",
    url = "https://aclanthology.org/2025.acl-long.761/",
    doi = "10.18653/v1/2025.acl-long.761",
    pages = "15629--15644",
    ISBN = "979-8-89176-251-0"
}

@article{wang2025icl,
  title={ICL-Router: In-Context Learned Model Representations for LLM Routing},
  author={Wang, Chenxu and Li, Hao and Zhang, Yiqun and Chen, Linyao and Chen, Jianhao and Jian, Ping and Ye, Peng and Zhang, Qiaosheng and Hu, Shuyue},
  journal={arXiv preprint arXiv:2510.09719},
  year={2025}
}

@article{zhao2023survey,
  title={A survey of large language models},
  author={Zhao, Wayne Xin and Zhou, Kun and Li, Junyi and Tang, Tianyi and Wang, Xiaolei and Hou, Yupeng and Min, Yingqian and Zhang, Beichen and Zhang, Junjie and Dong, Zican and others},
  journal={arXiv preprint arXiv:2303.18223},
  volume={1},
  number={2},
  year={2023}
}

@inproceedings{guo2024large,
  title={Large Language Model Based Multi-agents: A Survey of Progress and Challenges},
  author={Guo, Taicheng and Chen, Xiuying and Wang, Yaqi and Chang, Ruidi and Pei, Shichao and Chawla, Nitesh V and Wiest, Olaf and Zhang, Xiangliang},
  booktitle={IJCAI},
  year={2024}
}

@article{chen2026overview,
  title={An overview of domain-specific foundation model: key technologies, applications and challenges},
  author={Chen, Haolong and Chen, Hanzhi and Zhao, Zijian and Han, Kaifeng and Zhu, Guangxu and Zhao, Yichen and Du, Ying and Xu, Wei and Shi, Qingjiang},
  journal={Science China Information Sciences},
  volume={69},
  number={1},
  pages={111301},
  year={2026},
  publisher={Springer}
}

@inproceedings{
zhuang2025embedllm,
title={Embed{LLM}: Learning Compact Representations of Large Language Models},
author={Richard Zhuang and Tianhao Wu and Zhaojin Wen and Andrew Li and Jiantao Jiao and Kannan Ramchandran},
booktitle={The Thirteenth International Conference on Learning Representations},
year={2025},
url={https://openreview.net/forum?id=Fs9EabmQrJ}
}

@inproceedings{
ong2025routellm,
title={Route{LLM}: Learning to Route {LLM}s from Preference Data},
author={Isaac Ong and Amjad Almahairi and Vincent Wu and Wei-Lin Chiang and Tianhao Wu and Joseph E. Gonzalez and M Waleed Kadous and Ion Stoica},
booktitle={The Thirteenth International Conference on Learning Representations},
year={2025},
url={https://openreview.net/forum?id=8sSqNntaMr}
}

@article{
chen2024frugalgpt,
title={Frugal{GPT}: How to Use Large Language Models While Reducing Cost and Improving Performance},
author={Lingjiao Chen and Matei Zaharia and James Zou},
journal={Transactions on Machine Learning Research},
issn={2835-8856},
year={2024},
url={https://openreview.net/forum?id=cSimKw5p6R},
note={Featured Certification}
}

@inproceedings{
dekoninck2025a,
title={A Unified Approach to Routing and Cascading for {LLM}s},
author={Jasper Dekoninck and Maximilian Baader and Martin Vechev},
booktitle={Forty-second International Conference on Machine Learning},
year={2025},
url={https://openreview.net/forum?id=AAl89VNNy1}
}

@article{wang2017display,
  title={Display advertising with real-time bidding (RTB) and behavioural targeting},
  author={Wang, Jun and Zhang, Weinan and Yuan, Shuai and others},
  journal={Foundations and Trends{\textregistered} in Information Retrieval},
  volume={11},
  number={4-5},
  pages={297--435},
  year={2017},
  publisher={Now Publishers, Inc.}
}

@inproceedings{yuan2013real,
  title={Real-time bidding for online advertising: measurement and analysis},
  author={Yuan, Shuai and Wang, Jun and Zhao, Xiaoxue},
  booktitle={Proceedings of the seventh international workshop on data mining for online advertising},
  pages={1--8},
  year={2013}
}

@book{milgrom2004putting,
  title={Putting auction theory to work},
  author={Milgrom, Paul Robert},
  year={2004},
  publisher={Cambridge University Press}
}

@article{cramton1997fcc,
  title={The FCC spectrum auctions: An early assessment},
  author={Cramton, Peter},
  journal={Journal of Economics \& Management Strategy},
  volume={6},
  number={3},
  pages={431--495},
  year={1997},
  publisher={Wiley Online Library}
}

@article{takahashi2018strategic,
  title={Strategic design under uncertain evaluations: structural analysis of design-build auctions},
  author={Takahashi, Hidenori},
  journal={The RAND Journal of Economics},
  volume={49},
  number={3},
  pages={594--618},
  year={2018},
  publisher={Wiley Online Library}
}

@inproceedings{dubey2024auctions,
  title={Auctions with llm summaries},
  author={Dubey, Avinava and Feng, Zhe and Kidambi, Rahul and Mehta, Aranyak and Wang, Di},
  booktitle={Proceedings of the 30th ACM SIGKDD Conference on Knowledge Discovery and Data Mining},
  pages={713--722},
  year={2024}
}

@inproceedings{duetting2024mechanism,
  title={Mechanism design for large language models},
  author={Duetting, Paul and Mirrokni, Vahab and Paes Leme, Renato and Xu, Haifeng and Zuo, Song},
  booktitle={Proceedings of the ACM Web Conference 2024},
  pages={144--155},
  year={2024}
}

@article{hajiaghayi2024ad,
  title={Ad auctions for llms via retrieval augmented generation},
  author={Hajiaghayi, MohammadTaghi and Lahaie, S{\'e}bastien and Rezaei, Keivan and Shin, Suho},
  journal={Advances in Neural Information Processing Systems},
  volume={37},
  pages={18445--18480},
  year={2024}
}

@inproceedings{zhang2025router,
  title={Router-r1: Teaching llms multi-round routing and aggregation via reinforcement learning},
  author={Zhang, Haozhen and Feng, Tao and You, Jiaxuan},
  booktitle={The Thirty-ninth Annual Conference on Neural Information Processing Systems},
  year={2025}
}

@article{bhatt2025coalesce,
  title={Coalesce: Economic and security dynamics of skill-based task outsourcing among team of autonomous llm agents},
  author={Bhatt, Manish and Del Rosario, Ronald F and Narajala, Vineeth Sai and Habler, Idan},
  journal={arXiv preprint arXiv:2506.01900},
  year={2025}
}

@inproceedings{
hu2024routerbench,
title={RouterBench: A Benchmark for Multi-{LLM} Routing System},
author={Qitian Jason Hu and Jacob Bieker and Xiuyu Li and Nan Jiang and Benjamin Keigwin and Gaurav Ranganath and Kurt Keutzer and Shriyash Kaustubh Upadhyay},
booktitle={Agentic Markets Workshop at ICML 2024},
year={2024},
url={https://openreview.net/forum?id=IVXmV8Uxwh}
}

@inproceedings{
hendrycks2021measuring,
title={Measuring Massive Multitask Language Understanding},
author={Dan Hendrycks and Collin Burns and Steven Basart and Andy Zou and Mantas Mazeika and Dawn Song and Jacob Steinhardt},
booktitle={International Conference on Learning Representations},
year={2021},
url={https://openreview.net/forum?id=d7KBjmI3GmQ}
}

@inproceedings{zellers-etal-2019-hellaswag,
    title = "{H}ella{S}wag: Can a Machine Really Finish Your Sentence?",
    author = "Zellers, Rowan  and
      Holtzman, Ari  and
      Bisk, Yonatan  and
      Farhadi, Ali  and
      Choi, Yejin",
    editor = "Korhonen, Anna  and
      Traum, David  and
      M{\`a}rquez, Llu{\'i}s",
    booktitle = "Proceedings of the 57th Annual Meeting of the Association for Computational Linguistics",
    month = jul,
    year = "2019",
    address = "Florence, Italy",
    publisher = "Association for Computational Linguistics",
    url = "https://aclanthology.org/P19-1472/",
    doi = "10.18653/v1/P19-1472",
    pages = "4791--4800"
}

@article{cobbe2021training,
  title={Training verifiers to solve math word problems},
  author={Cobbe, Karl and Kosaraju, Vineet and Bavarian, Mohammad and Chen, Mark and Jun, Heewoo and Kaiser, Lukasz and Plappert, Matthias and Tworek, Jerry and Hilton, Jacob and Nakano, Reiichiro and others},
  journal={arXiv preprint arXiv:2110.14168},
  year={2021}
}

@article{clark2018think,
  title={Think you have solved question answering? try arc, the ai2 reasoning challenge},
  author={Clark, Peter and Cowhey, Isaac and Etzioni, Oren and Khot, Tushar and Sabharwal, Ashish and Schoenick, Carissa and Tafjord, Oyvind},
  journal={arXiv preprint arXiv:1803.05457},
  year={2018}
}

@article{austin2021program,
  title={Program synthesis with large language models},
  author={Austin, Jacob and Odena, Augustus and Nye, Maxwell and Bosma, Maarten and Michalewski, Henryk and Dohan, David and Jiang, Ellen and Cai, Carrie and Terry, Michael and Le, Quoc and others},
  journal={arXiv preprint arXiv:2108.07732},
  year={2021}
}

@article{sakaguchi2021winogrande,
  title={Winogrande: An adversarial winograd schema challenge at scale},
  author={Sakaguchi, Keisuke and Bras, Ronan Le and Bhagavatula, Chandra and Choi, Yejin},
  journal={Communications of the ACM},
  volume={64},
  number={9},
  pages={99--106},
  year={2021},
  publisher={ACM New York, NY, USA}
}

@misc{openrouter_auto_router,
  title        = {Auto Router},
  author       = {{OpenRouter}},
  howpublished = {\url{https://openrouter.ai/docs/guides/routing/routers/auto-router}},
  note         = {Accessed: 2026-05-06}
}

@misc{requesty_smart_routing,
  title        = {Smart LLM Routing},
  author       = {{Requesty}},
  howpublished = {\url{https://www.requesty.ai/solution/llm-routing}},
  note         = {Accessed: 2026-05-06}
}

@misc{sentence_transformers_all_minilm_l6_v2,
  title        = {all-MiniLM-L6-v2},
  author       = {{Sentence-Transformers}},
  howpublished = {\url{https://huggingface.co/sentence-transformers/all-MiniLM-L6-v2}},
  note         = {Accessed: 2026-05-06}
}
